\newif\ifincludeappendix
  \newcommand{\appendixref}[1]{\cref{#1}}
  \newcommand{\appendixref}[1]{the Appendix}
\title{Universality Frontier for Asynchronous Cellular Automata} 
\author{Ivan Baburin}{Department of Computer Science,
  ETH Z{\"u}rich, Switzerland}{ivan.baburin@alumni.ethz.ch}{}{}
\author{Matthew Cook}{Faculty of Science and Engineering,
  University of Groningen}{m.cook@rug.nl}{}{}
\author{Florian Grötschla}{Department of Electrical Engineering,
  ETH Z{\"u}rich, Switzerland}{fgroetschla@ethz.ch}{}{}
\author{Andreas Plesner}{Department of Electrical Engineering,
  ETH Z{\"u}rich, Switzerland}{aplesner@ethz.ch}{}{}
\author{Roger Wattenhofer}{Department of Electrical Engineering,
  ETH Z{\"u}rich, Switzerland}{wattenhofer@ethz.ch}{}{}
\authorrunning{I. Baburin et al.} 
\keywords{Universality, Asynchronous Cellular Automata, Automata Networks} 
\DeclareMathOperator{\Ima}{Im}
\definecolor{darkred}{rgb}{0.8,0,0}
\newcommand{\N}{\mathbb{N}}
\newcommand{\Z}{\mathbb{Z}}
\newcommand{\PS}[1]{\mathcal{P}(#1)}
\newcommand{\rle}[1]{\textsf{Rule #1}}
\newcommand{\pf}{\text{pf}}
\newcommand{\spacetime}{\text{st}}
\newcommand{\gol}{\textsf{Game of Life}}
\newcommand{\radiuszero}{
    \begin{tikzpicture}
        \tikzset{
          square matrix/.style={
            matrix of nodes,
            column sep=-\pgflinewidth, row sep=-\pgflinewidth,
            nodes={
              rectangle,
              draw=black,
              anchor=center,
              align=center,    
              inner sep=1pt
            },
            b/.style={draw, fill=black, minimum size=0.5em},
            w/.style={draw,  minimum size=0.5em},
            g/.style={draw, fill=gray, minimum size=0.5em}
          },
          square matrix/.default=1em
        }
        
        \matrix[square matrix, ampersand replacement=\&]
        {
        |[w]| \& |[b]| \& |[w]| \\
        };
    \end{tikzpicture}
}
\newcommand{\oneway}{
    \begin{tikzpicture}
        \tikzset{
          square matrix/.style={
            matrix of nodes,
            column sep=-\pgflinewidth, row sep=-\pgflinewidth,
            nodes={
              rectangle,
              draw=black,
              anchor=center,
              align=center,    
              inner sep=1pt
            },
            b/.style={draw, fill=black, minimum size=0.5em},
            w/.style={draw,  minimum size=0.5em},
            g/.style={draw, fill=gray, minimum size=0.5em}
          },
          square matrix/.default=1em
        }
        
        \matrix[square matrix, ampersand replacement=\&]
        {
        |[g]| \& |[b]| \& |[w]| \\
        };
    \end{tikzpicture}
}
\newcommand{\radiusone}{
    \begin{tikzpicture}
        \tikzset{
          square matrix/.style={
            matrix of nodes,
            column sep=-\pgflinewidth, row sep=-\pgflinewidth,
            nodes={
              rectangle,
              draw=black,
              anchor=center,
              align=center,    
              inner sep=1pt
            },
            b/.style={draw, fill=black, minimum size=0.5em},
            w/.style={draw,  minimum size=0.5em},
            g/.style={draw, fill=gray, minimum size=0.5em}
          },
          square matrix/.default=1em
        }
        
        \matrix[square matrix, ampersand replacement=\&]
        {
        |[g]| \& |[b]| \& |[g]| \\
        };
    \end{tikzpicture}
}
\newcommand{\vonNeumann}{
    \begin{tikzpicture}
        \tikzset{
          square matrix/.style={
            matrix of nodes,
            column sep=-\pgflinewidth, row sep=-\pgflinewidth,
            nodes={
              rectangle,
              draw=black,
              anchor=center,
              align=center,    
              inner sep=1pt
            },
            b/.style={draw, fill=black, minimum size=0.5em},
            w/.style={draw,  minimum size=0.5em},
            g/.style={draw, fill=gray, minimum size=0.5em}
          },
          square matrix/.default=1em
        }
        
        \matrix[square matrix, ampersand replacement=\&]
        {
        |[w]| \& |[g]| \& |[w]| \\
        |[g]| \& |[b]| \& |[g]| \\
        |[w]| \& |[g]| \& |[w]| \\
        };
    \end{tikzpicture}
}
\newcommand{\Custom}[5]{
    \begin{tikzpicture}
        \tikzset{
          square matrix/.style={
            matrix of nodes,
            column sep=-\pgflinewidth, row sep=-\pgflinewidth,
            nodes={
              rectangle,
              draw=black,
              anchor=center,
              align=center,    
              inner sep=1pt
            },
            b/.style={draw, fill=black, minimum size=0.5em},
            w/.style={draw,  minimum size=0.5em},
            g/.style={draw, fill=gray, minimum size=0.5em}
          },
          square matrix/.default=1em
        }
        
        \matrix[square matrix, ampersand replacement=\&]
        {
         \& |[#1]| \&  \\
        |[#2]| \& |[#3]| \& |[#4]| \\
        \& |[#5]| \&  \\
        };
    \end{tikzpicture}
}
\newcommand{\CustomSmall}[1]{
    \begin{tikzpicture}
        \tikzset{
          square matrix/.style={
            matrix of nodes,
            column sep=-\pgflinewidth, row sep=-\pgflinewidth,
            nodes={
              rectangle,
              draw=black,
              anchor=center,
              align=center,    
              inner sep=1pt
            },
            b/.style={draw, fill=black, minimum size=0.5em},
            w/.style={draw,  minimum size=0.5em},
            g/.style={draw, fill=gray, minimum size=0.5em}
          },
          square matrix/.default=1em
        }
        
        \matrix[square matrix, ampersand replacement=\&]
        {
         |[#1]| \\
        };
    \end{tikzpicture}
}
\newcommand{\Moore}{
    \begin{tikzpicture}
        \tikzset{
          square matrix/.style={
            matrix of nodes,
            column sep=-\pgflinewidth, row sep=-\pgflinewidth,
            nodes={
              rectangle,
              draw=black,
              anchor=center,
              align=center,    
              inner sep=1pt
            },
            b/.style={draw, fill=black, minimum size=0.5em},
            w/.style={draw,  minimum size=0.5em},
            g/.style={draw, fill=gray, minimum size=0.5em}
          },
          square matrix/.default=1em
        }
        
        \matrix[square matrix, ampersand replacement=\&]
        {
        |[g]| \& |[g]| \& |[g]| \\
        |[g]| \& |[b]| \& |[g]| \\
        |[g]| \& |[g]| \& |[g]| \\
        };
    \end{tikzpicture}
}
\newcommand{\innerradiustwo}{
    \begin{tikzpicture}
        \tikzset{
          square matrix/.style={
            matrix of nodes,
            column sep=-\pgflinewidth, row sep=-\pgflinewidth,
            nodes={
              rectangle,
              draw=black,
              anchor=center,
              align=center,    
              inner sep=1pt
            },
            b/.style={draw, fill=black, minimum size=0.35em},
            w/.style={draw,  minimum size=0.35em},
            g/.style={draw, fill=gray, minimum size=0.35em}
          },
          square matrix/.default=1em
        }
        
        \matrix[square matrix, ampersand replacement=\&]
        {
        |[g]| \& |[g]| \& |[g]| \& |[g]| \& |[g]|\\
        |[g]| \& |[g]| \& |[g]| \& |[g]| \& |[g]|\\
        |[g]| \& |[g]| \& |[b]| \& |[g]| \& |[g]| \\
        |[g]| \& |[g]| \& |[g]| \& |[g]| \& |[g]|\\
        |[g]| \& |[g]| \& |[g]| \& |[g]| \& |[g]|\\
        };
    \end{tikzpicture}
}
\crefname{section}{Section}{Sections}
\crefname{table}{Table}{Tables}
\crefname{figure}{Figure}{Figures}
\crefname{algorithm}{Algorithm}{Algorithms}
\crefname{equation}{Eq.}{Eqs.}
\crefname{example}{Example}{Examples}
\crefname{fact}{Fact}{Facts}
\crefname{appendix}{Appendix}{Appendices}
\crefname{theorem}{Theorem}{Theorems}
\crefname{reTheorem}{Theorem}{Theorems}
\crefname{aquestion}{Question}{Questions}
\crefname{assumption}{Assumption}{Assumptions}
\crefname{lemma}{Lemma}{Lemmas}
\crefname{reLemma}{Lemma}{Lemmas}
\crefname{proposition}{Proposition}{Propositions}
\crefname{chapter}{Chapter}{Chapters}
\crefname{line}{line}{lines}
\crefname{principle}{Principle}{Principles}
\crefname{definition}{Definition}{Definitions}
\crefname{corollary}{Corollary}{Corollaries}
\crefname{Exercise}{Exercise}{Exercises}
\crefname{observation}{Observation}{Observations}
\newtheorem{openquestion}{Open question}[]
\algrenewcommand\algorithmicindent{1.0em}%
\newcommand{\rightcomment}[1]{{\color{gray} \(\triangleright\){\footnotesize\textit{#1}}}}
\algrenewcommand{\algorithmiccomment}[1]{\hfill \rightcomment{#1}}  
\algnewcommand{\LinesComment}[1]{\State {\color{black!50!green}\rightcomment{\parbox[t]{.95\linewidth-\leftmargin-\widthof{\(\triangleright\) }}{#1}}}}
\algnewcommand{\LineComment}[1]{\State {\color{black!50!green}\smaller \(\triangleright\) \parbox[t]{\linewidth-\leftmargin-\widthof{\(\triangleright\) }}{\it #1}\smallskip}} 
\algnewcommand{\InlineComment}[1]{\hfill {\color{black!50!green}\(\triangleright\) {\scriptsize \it #1}}}
\algrenewcommand\algorithmicindent{1.0em}%
\algrenewcommand\alglinenumber[1]{{\tiny\color{black!50}#1.}\hspace{-2pt}}
\newcommand{\algorithmicfunc}[1]{\textbf{def} {#1}:}
\begin{document}

\maketitle

\begin{abstract}
    In this work, we investigate computational aspects of asynchronous cellular automata (ACAs), a modification of cellular automata in which cells update independently, following an asynchronous update schedule. We introduce flip automata networks (FANs), a simple modification of automata networks that remain robust under any asynchronous updating order. By using FANs as a middleman, we show that asynchronous automata can efficiently simulate their synchronous counterparts with a linear memory overhead, which improves upon the previously established quadratic bound. Additionally, we address the universality gap for (a)synchronous cellular automata--the boundary separating universal and non-universal automata, which is still not fully understood. We tighten this boundary by proving that all one-way asynchronous automata lack universal computational power. Conversely, we establish the existence of a universal asynchronous $6$-state first-neighbor automaton in one dimension and a $3$-state von Neumann automaton in two dimensions, which represent the smallest known universal constructions to date.
\end{abstract}

\section{Introduction}

Perhaps the most famous example of a complex system is a cellular automaton (CA): a uniform array of infinitely many simple machines--cells--all behaving in the same manner and interacting using basic, entirely local rules. Somewhat paradoxically, the simplicity of local components often results in the formation of sophisticated global patterns, highlighting the elementary non-continuity underlying the very concept of complexity. A cellular automaton can be seen as complex if any given computation can be embedded into its evolution, a phenomenon known as computational universality. During the last decades, many astonishingly simple universal cellular automata were constructed \cite{Ollinger-chapter}, which, as a byproduct, resulted in the discovery of many novel universal models of computation \cite{rule-110, billiard-ball, collisions}. On the other hand, cellular automata were also used as abstractions for biological and physical models of complexity: \gol{} is a famous example of an artificial biological system \cite{gol} while Wolfram \cite{Wolfram2002} used CAs to model physical laws using local discrete processes.

Despite their mathematical beauty, cellular automata have a severe limitation when it comes to modeling the real world--the requirement of global synchronization. To mitigate this issue, asynchronous cellular automata (ACAs) have been proposed, where every cell performs interactions at its local speed, without the need for global synchronization. Although at first glance it may seem that asynchronous automata are very different from their synchronous peers, in reality, they share many similarities. Nakamura \cite{nakamura} has shown that for any synchronous CA, we can construct an ACA which, in a very natural way, simulates the synchronous CA. However, this simulation comes at the price of a quadratic memory overhead, with further optimizations introduced later \cite{improved-soldiers}. A unified, algebraic approach to asynchrony was later given by Gács \cite{gacs}, who characterized a large class of cellular automata whose dynamics remains invariant with respect to any choice of updating.

Asynchronous updating allows more realistic modeling of many natural phenomena, and many potential use cases in distributed computing have been proposed \cite{aca-survey}. For example, one can draw parallels to emergent swarm intelligence in biological systems, such as ant colonies, which use chaotic cooperation of many simple agents to solve increasingly complex transport problems \cite{ants-collective, ants-transport}. The implementation of asynchrony can also directly affect computational ability--for example, in chemical reaction networks, depending on the asynchronous setting, the computational power varies between primitive recursive functions, Boolean circuits, and Turing machines \cite{CRN}. Similarly, although cellular automata are capable of universal computation, a certain threshold in terms of their size and state count must be met to enable it. This question has been extensively investigated for CAs \cite{ollinger-4-states, lindgren, rule-110, durand-gol, lifewithoutdeath, Banks, billiard-ball, morita-book, Wolfram2002} and ACAs \cite{3-states-random, lee-5, lee, schneider, adachi}. Despite this, the limits of asynchronous universality are not yet known, and the smallest constructions still lag behind their synchronous counterparts, hinting at a foundational discrepancy between synchronous and asynchronous computation.

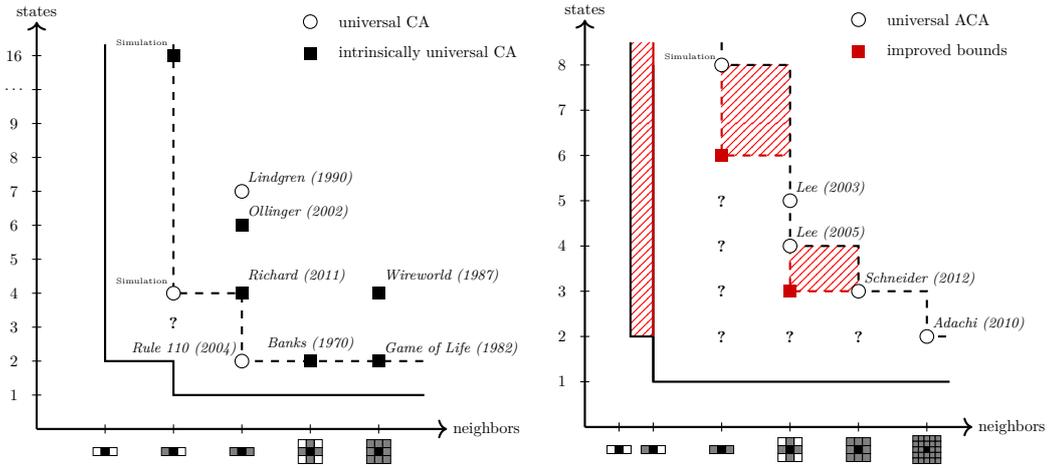
\begin{figure}[tb]
    \centering
    \begin{subfigure}{0.48\textwidth}
        \begin{tikzpicture}[scale=0.6, transform shape]
            \draw[->, thick] (0,0) -- (9,0) node[right] {neighbors};
            \draw[->, thick] (0,0) -- (0,9) node[above] {states};
            
            \node at (1.5,-0.5){
                \radiuszero
            };
            \node at (3,-0.5){
                \oneway
            };
            \node at (4.5,-0.5){
                \radiusone
            };
            \node at (6,-0.5){
                \vonNeumann
            };
            \node at (7.5,-0.5){
                \Moore
            };
            
            \foreach \i in {1,...,5} \draw (1.5*\i,-0.1)--(1.5*\i,0.1);
            \foreach \i in {1,...,9} \node at (-0.5, 0.75*\i){\i};
            \foreach \i in {1,...,11} \draw (-0.1, 0.75*\i)--(0.1, 0.75*\i);
            \node at (-0.5, 0.75*11){16};
            \node at (-0.5, 0.75*10){$\ldots$};
            
            \draw[thick, dashed] (3, 8.5) -- (3, 3) -- (4.5, 3) -- (4.5, 1.5) -- (8.5, 1.5);
            
            \draw[thick] (1.5, 8.5) -- (1.5,1.5) -- (3,1.5) -- (3, 0.75) -- (8.5, 0.75);
            
            \draw[fill=white] (3,3) circle[radius=1.5mm]; 
            \node[above left] at (3,3.1) {\tiny{Simulation}};
            \draw[fill=white] (4.5, 5.25) circle[radius=1.5mm]; 
            \node[above right] at (4.5,5.25) {\small{\textit{Lindgren (1990)}}};
            \draw[fill=white] (4.5,1.5) circle[radius=1.5mm]; 
            \node[above left] at (4.5,1.5) {\small{\textit{Rule 110 (2004)}}};
            \node[] at (3,2.35) {\textbf{?}};
            
            \filldraw (3, 8.25) node[rectangle, fill=black, minimum size = 1.5mm, inner sep = 4pt] {}; 
            \node[above left] at (3,8.35) {\tiny{Simulation}};
            \filldraw (4.5, 4.5) node[rectangle, fill=black, minimum size = 1.5mm, inner sep = 4pt] {}; 
            \node[above right] at (4.5,4.5) {\small{\textit{Ollinger (2002)}}};
            \filldraw (4.5, 3) node[rectangle, fill=black, minimum size = 1.5mm, inner sep = 4pt] {}; 
            \node[above right] at (4.5,3.1) {\small{\textit{Richard (2011)}}};
            \filldraw (6, 1.5) node[rectangle, fill=black, minimum size = 1.5mm, inner sep = 4pt] {}; 
            \node[above] at (6,1.6) {\small{\textit{Banks (1970)}}};
            \filldraw (7.5, 1.5) node[rectangle, fill=black, minimum size = 1.5mm, inner sep = 4pt] {}; 
            \node[above right] at (7.5,1.5) {\small{\textit{Game of Life (1982)}}};
            \filldraw (7.5, 3) node[rectangle, fill=black, minimum size = 1.5mm, inner sep = 4pt] {}; 
            \node[above right] at (7.5,3.1) {\small{\textit{Wireworld (1987)}}};

            \begin{scope}[shift={(6,9)}]
                \draw (0,0) circle[fill, radius=1.5mm] {};
                \node[right] at (0.5,0) {universal CA};

                \draw (0,-0.7) node[rectangle, fill=black, minimum size = 1.5mm, inner sep = 4pt] {};
                \node[right] at (0.5,-0.7) {intrinsically universal CA};
            
            \end{scope}
        \end{tikzpicture}
    \end{subfigure}
    \hspace{0.1em}
    \begin{subfigure}{0.48\textwidth}
        \centering
        \begin{tikzpicture}[scale=0.6, transform shape]
            \draw[draw=none, pattern=north east lines, pattern color=red] (1.5,8)--(3, 8)--(3, 6)--(1.5, 6)--(1.5, 8);
            \draw[draw=none, pattern=north east lines, pattern color=red] (3,4)--(4.5, 4)--(4.5, 3)--(3, 3)--(3, 4);
            \draw[draw=none, pattern=north west lines, pattern color=red] (-0.5, 2)--(-0.5, 8.5)--(0,8.5)--(0,2)--(-0.5,2);
            
            \draw[->, thick] (-1.5,0) -- (7,0) node[right] {neighbors};
            \draw[->, thick] (-1.5,0) -- (-1.5,9) node[above] {states};
            
            \node at (-0.75,-0.5){
                \radiuszero
            };
            \node at (0,-0.5){
                \oneway
            };
            \node at (1.5,-0.5){
                \radiusone
            };
            \node at (3,-0.5){
                \vonNeumann
            };
            \node at (4.5,-0.5){
                \Moore
            };
            \node at (6,-0.5){
                \innerradiustwo
            };
            \foreach \i in {0,...,4} \draw (1.5*\i,-0.1)--(1.5*\i,0.1);
            \foreach \i in {1,...,8} \node at (-2, \i){\i};
            \foreach \i in {1,...,8} \draw (-1.6, \i)--(-1.4, \i);
            \draw (-0.75,-0.1)--(-0.75,0.1);

            
            \draw[thick] (6.5, 1) -- (0, 1) -- (0, 8);
            \draw[thick] (0, 2) -- (0,1);
            \draw[thick] (0, 2) -- (-0.5,2)--(-0.5, 8.5);
            \draw[thick, color=darkred] (0, 8.5) -- (0,2);
            
            \draw[thick, dashed, color=darkred] (1.5, 8) -- (1.5, 6) -- (3, 6);
            \draw[thick, dashed, color=darkred] (3, 4) -- (3, 3) -- (4.5, 3);
            \draw[thick, dashed] (1.5, 8.5) -- (1.5, 8) --(3, 8) -- (3, 6) -- (3, 5) -- (3, 4) --(4.5, 4) --(4.5, 3) -- (6, 3) -- (6, 2) -- (6.5, 2);

            \draw[fill=white] (1.5,8) circle[radius=1.5mm]; 
            \node[above left] at (1.5,8) {\tiny{Simulation}};
            \draw[fill=white] (4.5,3) circle[radius=1.5mm]; 
            \node[above right] at (4.5,3) {\small{\textit{Schneider (2012)}}};
            \draw[fill=white] (3, 4) circle[radius=1.5mm]; 
            \node[above right] at (3,4) {\small{\textit{Lee (2005)}}};
            \draw[fill=white] (3, 5) circle[radius=1.5mm]; 
            \node[above right] at (3,5) {\small{\textit{Lee (2003)}}};
            \draw[fill=white] (6,2) circle[radius=1.5mm]; 
            \node[above right] at (6,2) {\small{\textit{Adachi (2010)}}};
            \node[] at (4.5,2) {\textbf{?}};
            \node[] at (3,2) {\textbf{?}};
            \node[] at (1.5,2) {\textbf{?}};
            \node[] at (1.5,3) {\textbf{?}};
            \node[] at (1.5,4) {\textbf{?}};
            \node[] at (1.5,5) {\textbf{?}};
            \filldraw (3, 3) node[rectangle, fill=darkred, minimum size = 1.5mm, inner sep = 4pt] {}; 
            \filldraw (1.5, 6) node[rectangle, fill=darkred, minimum size = 1.5mm, inner sep = 4pt] {}; 

            \begin{scope}[shift={(4.5,9)}]
                \draw (0,0) circle[fill, radius=1.5mm] {};
                \node[right] at (0.5,0) {universal ACA};

                \draw (0,-0.7) node[rectangle, fill=darkred, minimum size = 1.5mm, inner sep = 4pt] {};
                \node[right] at (0.5,-0.7) {improved bounds};
            
            \end{scope}
        \end{tikzpicture}
    \end{subfigure}
    \caption{Current ``universality frontier'' for synchronous (left) and asynchronous cellular automata (right). The solid line indicates the impossibility proofs for universality, while the dashed line shows the currently smallest known universal constructions. Contributions of this work are marked in red: $6$-state first neighbors, $3$-state von Neumann, and non-universality of one-way automata.}
    \label{fig:asynchronous-frontier}
\end{figure}

\subparagraph*{Our contributions} We present a comprehensive overview of computational abilities of (a)synchronous cellular automata and propose a formal definition for the notion of simulation with ACAs, which, although often used implicitly, has been missing in the literature \cite{nakamura, improved-soldiers, worsch-intrinsically, golze}. We adopt the ``adversarial'' view on asynchrony, i.e. require the ACAs to be robust against \emph{any} asynchronous updating (other notions such as randomized and ``best-case'' asynchrony have also been studied \cite{aca-survey}). By combining various simulation techniques and known universality constructions, we assemble an explicit ``universality frontier'' for CAs and ACAs, i.e., the boundary between universal and non-universal computation, similarly to how it was already done for Turing machines \cite{small-utms} (see \Cref{fig:asynchronous-frontier}). The dashed line represents the best-known upper bound on universality constructions, while the filled line represents the best-known lower bound for non-universality constructions. We further present our own contributions to both asynchronous simulation and the ``universality frontier'':
\begin{itemize}
    \item In \cref{sec:flip-automata-networks}, we introduce a simple asynchronous model of computation--the flip automata network (FAN), which have invariant computations under any asynchronous setting.
    \item In \cref{sec:simulation} we show that by using FANs as a middleman, ACAs can simulate their synchronous counterparts with just $4q$ states (and $3q$ in the $1$-dimensional case), improving on the previous constructions which had quadratic overhead.
    \item In \cref{sec:non-universal} we extend the ``universality frontier'' by deriving a non-universality result for one-way asynchronous automata, demonstrating their computational equivalence to classical finite-state machines. This suggests that bidirectional communication is crucial to achieving universality in ACAs (which is not the case for classical CAs).
    \item In \cref{sec:small-aca} we present smaller universal ACAs than previously known: a universal $3$-state von Neumann automaton and a universal $6$-state first neighbors automaton.
\end{itemize}
\section{Background}\label{sec:background}

\begin{definition}[Cellular Automaton]\label{def:cellular-automaton}
    A $d$-dimensional cellular automaton (CA) is a tuple $(S, N, f, \Z^d)$ where $f\colon S^k \longrightarrow S$ is the local rule function, $S$ is the set of states, $N = (n_1, n_2, \ldots, n_k)$ with $n_i \in \Z^d$ is the (local) neighborhood vector (defines neighbor offsets for each cell) and $\mathbb{Z}^d$ is the cellular space.
\end{definition}
We call a neighborhood $N$ \emph{symmetric} if $-1 \cdot N$ is a permutation of $N$, and two cells $i, j \in \Z^d$ are \emph{neighboring} if they are in each other's neighborhoods. There are infinitely many choices for the neighborhood vector $N$, but four types of neighborhoods are considered canonical and have been researched the most (note, Moore and von Neumann generalize naturally to arbitrary dimension $d$, and first neighbors is the $1$-dimensional case for both):
\begin{center}
    \emph{One-way}\raisebox{-3pt}{\oneway} \quad \emph{First neighbors}\raisebox{-3pt}{\radiusone} \quad \emph{Von Neumann}\raisebox{-6pt}{\vonNeumann} \quad \emph{Moore}\raisebox{-6pt}{\Moore}
\end{center}

A \emph{configuration} of a $d$-dimensional CA is a function $c: \Z^d \longrightarrow S$, which assigns a state to every cell in the cellular space. At each timestamp $t$, every cell $i \in \Z^d$ synchronously applies the local rule function to its local neighborhood $N$, causing the complete configuration to transition to a new configuration $c_{t+1}$. For each cell $i$ the next state $c_{t+1}(i)$ is defined as
\begin{equation*}
    c_{t+1}(i) = f(c_t(i + n_0), \ldots, c_t(i + n_k))
\end{equation*}

We can characterize the transformation between two consecutive configurations of a CA by specifying the \emph{global transition function} $G: S^{\Z^d} \longrightarrow S^{\Z^d}$ with $G(c_k) = c_{k+1}$. 

\begin{definition}[Asynchronous Cellular Automaton]\label{def:aca}
    An asynchronous cellular automaton (ACA) $\mathbb{A} = (S, N, f, \mathbb{Z}^d)$ is a cellular automaton that chooses at each point in time $t$ an update set $D_t \subseteq \Z^d$ and applies the local rule function $f$ only to cells in $D_t$.
\end{definition}

\Cref{def:aca} presents the most common type of asynchronous updating, also known as \emph{purely} asynchronous. Other notions such as \emph{$\alpha$-asynchrony} (each cell updates independently with probability $\alpha$) have also been considered, for a detailed survey we refer to Fatès \cite{aca-survey}. 

\begin{definition}[Update Schedule]
    For an asynchronous cellular automaton $\mathbb{A} = (S, N, f, \Z^d)$ we define some asynchronous update schedule $\zeta \colon \N \longrightarrow \PS{\Z^d}$, a function specifying the subsets of cells updated at each point $t$ in time. $\zeta$ has the property that every $i \in S^{\Z^d}$ appears infinitely often in $\Ima{\zeta}$, or, in other words, every cell is updated infinitely often.
\end{definition}

For a purely asynchronous setting, $\zeta$ can be arbitrary: an ``adversary'' can choose any $\zeta$ to update $\mathbb{A}$. Given a $\zeta$ we identify the asynchronous global transition function $G^\zeta \colon S^{\Z^d} \longrightarrow S^{\Z^d}$ and corresponding \emph{asynchronous space-time} $\spacetime^\zeta \colon \N \times \Z^d \longrightarrow S$ mapping configurations to their evolved asynchronous counterparts; $\spacetime^\zeta(0, c_0) = c_0$ and $\spacetime^\zeta(t, c_0) = c_t^\zeta$ such that
\begin{equation*}
    c^\zeta_{t+1} = G^\zeta(c_t) = G_{\zeta(t)}(c_t) \quad \text{where} \quad G_D(c)(i) = 
    \left\{ \begin{array}{rc} 
    G(c)(i) & \mbox{if} \quad i \in D \\ 
    c(i) & \mbox{otherwise} 
    \end{array}\right.
\end{equation*}
Furthermore, we call a cell $i \in \mathbb{Z}^d$ \emph{active} (on configuration $c$) if and only if $G(c)(i) \neq c(i)$ (note that this is independent of the update schedule).
\begin{definition}[Update History]
    Given an asynchronous cellular automaton $\mathbb{A}$ under some update schedule $\zeta$, we define its update history as a sequence of configurations $h_0, h_1, \ldots$ where $h_0 = c_0$ and $h_t(i) = c_{t'_i}(i)$ with $t'_i$ as below if such $t'_i$ exists, and undefined otherwise.
    \begin{equation*}
         t'_i = \min \{a_t \in \N \ | \ \exists a_0 \ldots a_{t-1} \ \forall j < t \ (a_j < a_{j+1} \wedge c_{a_j}(i) \neq c_{a_{j+1}}(i)) \}
    \end{equation*} 
\end{definition}

Intuitively, the update history records the asynchronous space-time ``up to value equality.'' We only record ``true'' state updates to cell $i$, discarding the updates that did not change the state and the time these updates occurred (in particular, the history can be finite!). We say that an ACA $\mathbb{A}$ has \emph{invariant histories on configuration $c_0$} if all update schedules initialized with $c_0$ result in the same update history. Further, $\mathbb{A}$ has \emph{invariant histories} if all initial configurations have invariant histories. Such ACAs form a particularly important subclass of CAs, since they allow for reliable computations without any need for global synchronization. 

Gács \cite{gacs} has given an almost complete characterization of ACAs with invariant histories using only algebraic properties of monotonicity and commutativity.

\begin{definition}[Monotonicity]
    A global transition function $G$ is \textit{monotonic} if for any configuration $c$, active cell $i$ and an update set $A \subseteq \mathbb{Z}^d \setminus \{i\}$ it holds that
    \begin{equation*}
        G_{\{i\}}(G_A(c))(i) \neq c(i).
    \end{equation*}
\end{definition}

An update of any other cell cannot stop cell $i$ from updating, ensuring a ``lock-free'' flow.

\begin{definition}[Commutativity]\label{def:commutativity}
    A global transition function $G$ is \textit{commutative} if for any configuration $c$ and two disjoint update sets of active cells $A, B \subseteq \mathbb{Z}^d$ it holds
    \begin{equation*}
        G_A(G_B(c)) = G_{A \cup B}(c) = G_B(G_A(c)).
    \end{equation*}
\end{definition}

Commutativity implies that updates of simultaneously active cells can be performed in any order. We say that the CA is \emph{commutative} if its global transition function is commutative. 

\begin{theorem}[Invariance \cite{gacs}]\label{thm:commutativity}
    A cellular automaton $(S, N, f, \mathbb{Z}^d)$ is commutative if and only if it has invariant histories and its global transition function $G$ is monotonic.
\end{theorem}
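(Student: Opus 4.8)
The plan is to prove the two implications separately. Throughout I will use the elementary fact that $G_A(c)=G_{A^{*}}(c)$ whenever $A^{*}$ is the set of cells of $A$ active on $c$, since an update leaves inactive cells unchanged.

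\emph{Invariant histories and monotonicity $\Rightarrow$ commutativity.} Fix a configuration $c$ and disjoint sets $A,B$ of cells active on $c$. I would compare the histories of two update schedules issued from $c$: let $\zeta_1$ apply, in order, $B$, then $A$, and then all of $\Z^{d}$ forever, and let $\zeta_2$ apply $A\cup B$ and then all of $\Z^{d}$ forever (both are valid schedules). A cell $i\in A$ is active on $c$ and, being outside $B$, satisfies $G_B(c)(i)=c(i)$; monotonicity then gives $G_{\{i\}}(G_B(c))(i)\neq c(i)=G_B(c)(i)$, so $i$ is still active on $G_B(c)$. Hence under $\zeta_1$ the cell $i$ does not change when $B$ is applied and makes a true change when $A$ is applied, to the value $G_A(G_B(c))(i)=G(G_B(c))(i)$, whereas under $\zeta_2$ it makes its first true change when $A\cup B$ is applied, to $G(c)(i)$. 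These are exactly the values recorded by $h_1(i)$ in the two histories, so invariance forces $G(G_B(c))(i)=G(c)(i)$ for every $i\in A$, and symmetrically $G(G_A(c))(j)=G(c)(j)$ for every $j\in B$. A short case analysis (on whether a cell lies in $A$, in $B$, or outside $A\cup B$, using disjointness) then yields $G_A(G_B(c))=G_{A\cup B}(c)$, and the symmetric computation gives $G_B(G_A(c))=G_{A\cup B}(c)$; this is commutativity.

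\emph{Commutativity $\Rightarrow$ monotonicity and invariant histories.} Monotonicity is immediate: for $i$ active on $c$ and $A\subseteq\Z^{d}\setminus\{i\}$, pass to the active subset $A^{*}$ of $A$ and apply commutativity to the disjoint active sets $\{i\}$ and $A^{*}$, obtaining $G_{\{i\}}(G_A(c))(i)=G_{\{i\}}(G_{A^{*}}(c))(i)=G_{\{i\}\cup A^{*}}(c)(i)=G(c)(i)\neq c(i)$. For invariance of histories, fix an initial configuration $c_0$; I would build on two lemmas. First, a \emph{freezing lemma}: if $i$ is active on a configuration $c$ reachable from $c_0$ with $G(c)(i)=v$, then along any update trajectory issued from $c$ all of whose update sets avoid $i$, the cell $i$ stays active with constant ``pending value'' $v$, and at the first update set containing $i$ it makes a true change, to $v$; this is proved by induction on the trajectory, since one step $G_A$ with $i\notin A$ commutes with $G_{\{i\}}$ (again passing to the active cells of $A$), so $v$ cannot change. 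Second, a \emph{diamond property}: on configurations reachable from $c_0$, the one-cell rewrite $c\to G_{\{i\}}(c)$ (defined when $i$ is active on $c$) satisfies that $c\to c_1$ via $i$ and $c\to c_2$ via $j$ with $c_1\neq c_2$ force $i\neq j$ and $c_1,c_2\to G_{\{i,j\}}(c)$; whence $\to^{*}$ is confluent, and the same argument shows that the rewrite restricted to updates avoiding a fixed cell $i$ is confluent as well.

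Combining the lemmas: if $d_1,d_2$ are reachable from $c_0$ by $i$-avoiding trajectories and $i$ is active on both, join them at a common successor $e$ still reachable while avoiding $i$; the freezing lemma applied from $d_1$ and from $d_2$ forces both pending values to equal $G(e)(i)$, hence to coincide. So there is a value $v_1(i)$ depending only on $c_0$ and $i$ that the first true change of $i$ must take, and, by the freezing lemma together with the infinitely-often property, along every update schedule cell $i$ undergoes a first true change (necessarily to $v_1(i)$) exactly when some $i$-avoiding reachable configuration has $i$ active --- a schedule-independent condition. Thus $h_1$ is schedule-independent, and iterating the same confluence/freezing reasoning one history level at a time (first changes, then second changes, and so on) gives invariance of all of $h_0,h_1,\dots$, which is the theorem of Gács \cite{gacs}. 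I expect the delicate part to be exactly this iteration: making precise the passage ``after all first changes have been realized'' and, in particular, showing that the total number of true changes of each cell is the same for every update schedule; this rests on the standard fact that an update schedule eventually realizes, up to confluence, any finite prefix of updates, which has to be combined carefully with the freezing lemma. One also has to check that infinite update sets cause no difficulty, which they do not, since commutativity lets each such update be decomposed into single-cell updates of its active cells.
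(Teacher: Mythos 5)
A preliminary remark: the paper does not prove this theorem at all --- it is imported as background and attributed to Gács \cite{gacs} --- so there is no in-paper proof to compare yours against; the following assesses your proposal on its own terms.

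The parts you actually carry out are correct. The direction ``invariant histories $+$ monotonicity $\Rightarrow$ commutativity'' via the two schedules $B,A,\Z^d,\Z^d,\dots$ and $A\cup B,\Z^d,\dots$ is complete: monotonicity is used exactly where it is needed (to guarantee that $i\in A$ is still active on $G_B(c)$, so its first recorded change under the first schedule is to $G(G_B(c))(i)$), and the concluding case analysis is right. Likewise ``commutativity $\Rightarrow$ monotonicity'' by passing to the active subset $A^{*}$, and your freezing lemma and single-cell diamond property are correctly derived from commutativity. The genuine gap is the rest of ``commutativity $\Rightarrow$ invariant histories,'' which you leave as an expectation rather than a proof, and it is not a routine iteration of what precedes it. Concretely: (i) even your first-level claim needs a realization lemma --- joining two $i$-avoiding trajectories at a common successor $e$ only shows $i$ is active at $e$, not that the other schedule's \emph{own} trajectory ever produces the effect of $e$; you must prove that every legal schedule eventually realizes, cell by cell, the effect of any finite sequence of active single-cell rewrites, which is an induction interleaving the freezing lemma with the infinitely-often property, not a citable ``standard fact.'' (ii) The passage from first changes to the full history is the heart of the theorem: ``after all first changes have been realized'' is in general not a configuration reached at any finite time (infinitely many cells, unboundedly late changes), so one cannot simply restart the confluence argument from a common configuration; the usual remedy is a different invariant (e.g. that after its $n$-th true change a cell holds the $n$-th value of the synchronous history, with neighbouring change-counts suitably constrained), proved by induction along the schedule, and this is also where equality of the \emph{number} of true changes per cell across schedules is established --- a point your sketch does not address. (iii) The dismissal of infinite update sets is too quick: an infinite active set cannot be decomposed into finitely many single-cell updates, so the finite rewriting/confluence machinery does not apply to it as stated; one needs either a locality argument (any fixed change event depends only on a finite dependence cone, inside which update sets may be truncated) or a diamond property proved directly for arbitrary, possibly infinite and overlapping, sets of active cells. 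Until these are done, the hard half of the equivalence --- which is precisely Gács's contribution --- remains unproved.
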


Even though commutativity does not characterize all automata with invariant histories (this question was shown to be undecidable in the same paper), it characterizes a very natural subclass thereof, since monotonicity is, in general, desirable. Furthermore, commutativity can be verified algorithmically--due to the local structure of $G$ it is sufficient to check the conditions in \cref{def:commutativity} only for cases where $A$ and $B$ are singleton update sets, i.e. whether $G_{\{i\}}(G_{\{j\}}(c)) = G_{\{i, j\}}(c)$ holds for any cells $i, j \in \mathbb{Z}^d$ (see \appendixref{app:examples}).

A generalization of commutativity for automata networks--an extension of cellular automata allowing arbitrary spaces and non-uniform local rule functions--is known, but requires more complex machinery \cite{automata-networks}. We note that while the algebraic characterization provides a simple way of verifying for invariant histories in ACAs, it is generally hard to design commutative ACAs with desired properties without inducing a large state complexity overhead. These issues arise due to the lack of a general ``constructive'' understanding of such ACAs, something we attempt to mitigate in the sections below. 

\section{Flip Automata Networks}\label{sec:flip-automata-networks}

In this section, we present a simple asynchronous distributed model of computation called flip automata networks (FAN). While FANs are just a simple modification of automata networks, they have the advantage of having invariant histories by design, allowing us to build a simple bridge between CAs and ACAs.

\begin{definition}[Flip Automata Network]\label{def:flip-automata-network}
    A FAN is a triple $\mathcal{A}_\text{flip} = (T, S, {\{f_i\}}_{i\in I})$ where $T = (I, E)$ is a undirected graph with some indexing $v_i \in I$ edges $E \subseteq \{ \{v_i, v_j \} \ | \ i \neq j \wedge v_i, v_j \in I \}$ such that each edge has a direction indicator $\in {\{ \nearrow, \swarrow \}}$, neighborhood function
    \begin{equation*}
        N(v_i) = \{ v \in I \ | \ \{v, v_i\} \in E \} 
    \end{equation*} 
    which must be finite $\forall v_i \in I$, $S$ is a state set, and ${\{f_i\}}_{i \in I}$ is the set of all flip functions where
    \begin{equation*}
        f_i\colon S^{N(v_i) \cup \{v_i\}} \longrightarrow S \times {\{ \nearrow, \swarrow \}}^{N(v_i)}
    \end{equation*}
    is the flip function associated with vertex $v_i$ ($\nearrow$ ``points'' to the node with the larger index and $\swarrow$ to the node with the smaller index). A function $f_i$ is only applicable if the direction indicator on all edges between $N(v_i)$ and $v_i$ ``point'' towards $v_i$, and the application of $f_i$ can update the state of $v_i$ and ``flip'' the direction indicator on any of the edges from $N(v_i)$. 
\end{definition}

Similarly to classical automata networks, individual transition functions can be applied in some (arbitrary) order specified by a given update schedule $\zeta \colon \N \longrightarrow \PS{I}$, and we define configuration of a FAN by assigning each node a state and each edge a direction indicator. FANs have an intuitive interpretation--the arrows indicate directed information sharing and every node waits for all of its neighbors to share before performing a transition. A small example is provided in \cref{fig:causal-network}.

\begin{figure}[htb]
    \centering
    \begin{tikzpicture}[
            > = stealth, 
            shorten > = 1pt, 
            auto,
            node distance = 2cm, 
            semithick, 
            scale=0.75, transform shape
        ]
        \tikzstyle{every state}=[
            draw = black,
            thick,
            fill = white,
            minimum size = 4mm
        ]
        \node[state, fill=lightgray!30] (s) {$1$};
        \node[state] (s1) [above right=0.7cm and 0.7cm of s] {$1$};
        \node[state] (s2) [below right=0.7cm and 0.7cm of s] {$1$};
        \node[state] (s3) [above left=0.7cm and 0.7cm of s] {$1$};
        \node[state] (s4) [below left = 0.7cm and 0.7cm of s] {$1$};

        \path[->] (s1) edge node{} (s);
        \path[->] (s2) edge node{} (s);
        \path[->] (s3) edge node{} (s);
        \path[->] (s4) edge node{} (s);
        \path[->] (s2) edge node{} (s1);
        \path[->] (s3) edge node{} (s1);
        \path[->] (s2) edge node{} (s4);
        \path[->] (s3) edge node{} (s4);
    \end{tikzpicture}
    \hspace{3em}
    \begin{tikzpicture}[
            > = stealth, 
            shorten > = 1pt, 
            auto,
            node distance = 2cm, 
            semithick, 
            scale=0.75, transform shape
        ]

        \tikzstyle{every state}=[
            draw = black,
            thick,
            fill = white,
            minimum size = 4mm
        ]

        \node[state] (s) {$2$};
        \node[state, fill=lightgray!30] (s1) [above right=0.7cm and 0.7cm of s] {$1$};
        \node[state] (s2) [below right=0.7cm and 0.7cm of s] {$1$};
        \node[state] (s3) [above left=0.7cm and 0.7cm of s] {$1$};
        \node[state, fill=lightgray!30] (s4) [below left = 0.7cm and 0.7cm of s] {$1$};

        \path[->] (s) edge node{}  (s1);
        \path[->] (s) edge node{} (s2);
        \path[->] (s) edge node{} (s3);
        \path[->] (s) edge node{} (s4);
        \path[->] (s2) edge node{} (s1);
        \path[->] (s3) edge node{} (s1);
        \path[->] (s2) edge node{} (s4);
        \path[->] (s3) edge node{} (s4);

    \end{tikzpicture}
    \hspace{3em}
    \begin{tikzpicture}[
            > = stealth, 
            shorten > = 1pt, 
            auto,
            node distance = 2cm, 
            semithick, 
            scale=0.75, transform shape
        ]

        \tikzstyle{every state}=[
            draw = black,
            thick,
            fill = white,
            minimum size = 4mm
        ]

        \node[state] (s) {$2$};
        \node[state] (s1) [above right=0.7cm and 0.7cm of s] {$2$};
        \node[state, fill=lightgray!30] (s2) [below right=0.7cm and 0.7cm of s] {$1$};
        \node[state, fill=lightgray!30] (s3) [above left=0.7cm and 0.7cm of s] {$1$};
        \node[state] (s4) [below left = 0.7cm and 0.7cm of s] {$2$};

        \path[->] (s1) edge node{}  (s);
        \path[->] (s) edge node{} (s2);
        \path[->] (s) edge node{} (s3);
        \path[->] (s4) edge node{} (s);
        \path[->] (s1) edge node{} (s2);
        \path[->] (s1) edge node{} (s3);
        \path[->] (s4) edge node{} (s2);
        \path[->] (s4) edge node{} (s3);
    \end{tikzpicture}
    \caption{Two-step evolution of flip automata network $\mathcal{A}_\text{flip} = (T_{\text{flip}}, \{0, 1, 2\}, \{f_i\}_{i=0}^4)$ where each $f_i$ computes the sum of all neighboring node states mod $3$, and flips all incoming arrows (applicable transitions marked in gray). At each point in time we apply all applicable transitions.}
    \label{fig:causal-network}
\end{figure}

\begin{restatable}{proposition}{fliptheorem}\label{prop:flip-invariant}
    Flip automata networks have invariant histories, i.e., the update schedule does not affect the update history of individual nodes.
\end{restatable}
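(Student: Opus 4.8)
The plan is to exploit a single structural fact about the flip mechanism and turn the statement into a confluence argument.

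\textbf{Step 1 (the enabled set is an independent set).} First I would observe that at every reachable configuration $c$, the set $\mathrm{En}(c)\subseteq I$ of nodes whose flip function is applicable is an independent set of $T$: if $v_i,v_j\in\mathrm{En}(c)$ were joined by an edge, its direction indicator would have to point at $v_i$ (since $f_i$ is applicable) and at $v_j$ (since $f_j$ is applicable), which is impossible. From this I extract two one-step lemmas. \emph{Commutation/diamond:} if $v_i\ne v_j$ are both enabled at $c$, they are non-adjacent, so $f_i$ only reads the states on $N(v_i)\cup\{v_i\}$ and writes the state of $v_i$ together with the direction bits of edges at $v_i$, all disjoint from the cells $f_j$ touches; hence $f_j$ is still enabled after firing $f_i$ (its applicability depends only on edges at $v_j$), and the two firings commute to the same configuration. \emph{Persistence:} once $v_i$ is enabled it remains enabled until $f_i$ itself fires, since the only flip functions that can alter a direction bit at $v_i$ belong to $v_i$ or to a neighbor of $v_i$, and no neighbor of $v_i$ is ever enabled while $v_i$ is (independence).

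\textbf{Step 2 (reduce to sequential schedules).} Since the applicable nodes at $c_t$ form an independent set and, by Step 1, pairwise commute and do not disable one another, firing the applicable members of $\zeta(t)$ in \emph{any} sequential order yields the same configuration $c_{t+1}$ and the same set of single-node state changes. So it suffices to prove the claim for ``sequential'' schedules that fire exactly one transition per step, and fairness (every node fired infinitely often) is inherited by such a refinement.

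\textbf{Step 3 (confluence and invariance of firing counts).} Using the diamond lemma and induction on the number of steps, I would show that the configuration reached from $c_0$ by a finite sequential execution depends only on the vector $\mathbf{n}\colon I\to\N$ counting how often each node has fired (the standard ``labelled diamond property $\Rightarrow$ the reachable state is a function of the Parikh vector''). Next, using persistence together with fairness, I would show that the firing-count vectors attained along any fair sequential execution are cofinal in the set of feasible vectors --- a fair schedule cannot permanently starve a node whose further firing is feasible --- which is exactly the Church--Rosser property for persistent transition systems (Keller's theorem). Combining the two: for every node $v_i$ and every $k\ge 1$, whether $f_i$ ever fires a $k$-th time, and the state of $v_i$ immediately afterwards, are the same under every fair schedule. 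Hence the sequence of states that $v_i$ passes through (after deleting stutter steps in which $f_i$ fired without changing $v_i$'s state) is schedule-independent, and the update history is precisely the global re-indexing of these per-node sequences; this gives invariant histories.

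\textbf{Main obstacle.} Steps 1--2 and the ``$\Rightarrow$ function of the Parikh vector'' half of Step 3 are routine bookkeeping over which node-state and edge-direction cells each flip function reads and writes. The real work is the second half of Step 3: lifting the purely local commutation and persistence facts to infinite, fairness-dependent executions --- showing that a ``lazy'' fair schedule cannot avoid a firing that some other schedule performs. This is where Keller's theorem (or an equivalent direct induction on causal depth, tracking for each event which earlier firings of neighbors last set its incoming arrows) is needed, and it is also where one must be careful about nodes whose state sequence becomes eventually constant, so that the resulting history is finite.
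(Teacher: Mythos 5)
Your proposal is correct and takes essentially the same route as the paper's proof: applicable nodes form an independent set, so simultaneous updates reduce to sequential ones, and an exchange (diamond) argument shows the reached configuration depends only on the multiset of fired nodes, from which invariant histories follow. The only difference is that you make explicit, via persistence and a Keller-style confluence argument, the final lift from finite-prefix confluence to fairness-dependent infinite executions --- a step the paper's proof passes over in its one-sentence conclusion --- so your write-up is, if anything, slightly more complete on the point you correctly flag as the main obstacle.
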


The proof for \Cref{prop:flip-invariant} can be found in \appendixref{app:ommited-proofs}. In the following sections, we demonstrate that flip automata networks can be embedded into asynchronous cellular automata with minimal overhead, enabling simpler simulation techniques and smaller universality constructions. A simple embedding is given in \Cref{ex:mountain-valley} below.

\begin{example}\label{ex:mountain-valley}
    Consider an asynchronous version of \rle{212}. The only transitions that affect the cell state are $(100) \mapsto 1$ and $(011) \mapsto 0$; in all other cases, the cell does not change its state. Observe that no two neighboring cells can be active at the same time. Using the local neighborhood of each cell, we can interpret \rle{212} as a $1$-dimensional flip automata network. To this end we define a mapping $\psi$ that adds direction indicators between neighboring cells $i$ and $i+1$ based on their current states in some configuration $c$:
    \begin{align*}
        &\psi(c(i), c(i+1)) = c(i) \longleftarrow c(i+1) \quad \text{for $c(i) = c(i+1)$}\\
        &\psi(c(i), c(i+1)) = c(i) \longrightarrow c(i+1) \quad \text{for $c(i) \neq c(i+1)$}
    \end{align*}
    This allows us to transform any configuration of \rle{212} into a configuration of a flip automata network. For example, the following synchronous update of configuration $c$
    \begin{center}
        $\ldots 100110011 \ldots \mapsto \ldots 110011001 \ldots$
    \end{center}
    can be seen as an update in the flip automata network, where all applicable transitions flip:
    \begin{center}
        $\fbox{1} \longrightarrow \fbox{0} \longleftarrow \fbox{0} \longrightarrow \fbox{1} \longleftarrow \fbox{1} \longrightarrow \fbox{0} \longleftarrow \fbox{0} \longrightarrow \fbox{1} \longleftarrow \fbox{1}$ \\
        $\Downarrow$\\
        $\fbox{1} \longleftarrow \fbox{1} \longrightarrow \fbox{0} \longleftarrow \fbox{0} \longrightarrow \fbox{1} \longleftarrow \fbox{1} \longrightarrow \fbox{0} \longleftarrow \fbox{0} \longrightarrow \fbox{1}$
    \end{center}
    Furthermore, if we think of $\longrightarrow$ as $\diagdown$ slope and of $\longleftarrow$ as $\diagup$ slope we can visualize any configuration of such flip automata network as a zigzag line in a $2$-dimensional space, which we will refer to as a \emph{mountain-valley} landscape, as shown in \Cref{fig:wave} (left). The update history of configuration $c$ under some asynchronous update schedule can thus be visualized as a $2$-dimensional ``fluid'' slowly filling the space-time $\spacetime$, where at each point in time some subset of ``valleys'' become ``mountains''. Depending on the update schedule $\zeta$, some mountains will grow faster than others; however, since every cell will be updated infinitely often, the ``fluid'' is guaranteed to fill the full space-time, as sketched in \Cref{fig:wave} (right). 
    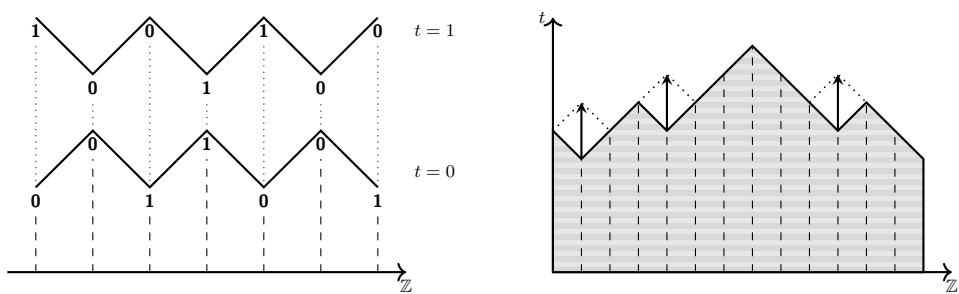
\begin{figure}[htb]
    \centering
    \begin{subfigure}{0.475\textwidth}
        \centering
        \begin{tikzpicture}[font=\sffamily\small, scale=0.75, transform shape]
        
        \draw[->, thick] (-0.5,0.5) -- (6.5,0.5) node[below] {$\mathbb{Z}$};

        \draw[black, thick] (0,2) -- (1, 3) -- (2, 2) -- (3, 3) -- (4, 2) -- (5, 3) -- (6, 2);
        \draw[black, thick] (0,5) -- (1, 4) -- (2, 5) -- (3, 4) -- (4, 5) -- (5, 4) -- (6, 5);

        \draw[dotted] (0, 2) -- (0, 4.5);
        \draw[dotted] (1, 3) -- (1, 3.5);
        \draw[dotted] (2, 2) -- (2, 4.5);
        \draw[dotted] (3, 3) -- (3, 3.5);
        \draw[dotted] (4, 2) -- (4, 4.5);
        \draw[dotted] (5, 3) -- (5, 3.5);
        \draw[dotted] (6, 2) -- (6, 4.5);

        \draw[dashed] (0, 0.5) -- (0, 1.5);
        \draw[dashed] (1, 0.5) -- (1, 2.5);
        \draw[dashed] (2, 0.5) -- (2, 1.5);
        \draw[dashed] (3, 0.5) -- (3, 2.5);
        \draw[dashed] (4, 0.5) -- (4, 1.5);
        \draw[dashed] (5, 0.5) -- (5, 2.5);
        \draw[dashed] (6, 0.5) -- (6, 1.5);
        
        \node[below] at (0,2) {\textbf{0}};
        \node[below] at (2,2) {\textbf{1}};
        \node[below] at (4,2) {\textbf{0}};
        \node[below] at (6,2) {\textbf{1}};
        
        \node[below] at (1,3) {\textbf{0}};
        \node[below] at (3,3) {\textbf{1}};
        \node[below] at (5,3) {\textbf{0}};
        \node[below] at (7,2.5) {$t = 0$};

        \node[below] at (0,5) {\textbf{1}};
        \node[below] at (2,5) {\textbf{0}};
        \node[below] at (4,5) {\textbf{1}};
        \node[below] at (6,5) {\textbf{0}};
        
        \node[below] at (1,4) {\textbf{0}};
        \node[below] at (3,4) {\textbf{1}};
        \node[below] at (5,4) {\textbf{0}};
        \node[below] at (7,5) {$t = 1$};
        
        \end{tikzpicture}
    \end{subfigure}
    \begin{subfigure}{0.475\textwidth}
        \centering
        \begin{tikzpicture}[font=\sffamily\small, scale=0.75, transform shape]]
        \draw[->, thick] (1.0,1.0) -- (8,1.0) node[below] {$\mathbb{Z}$};
        \draw[->, thick] (1.0,1.0) -- (1.0,5.5) node[left] {$t$};

        \draw[thick, pattern=horizontal lines light gray] (1,1.0) -- (1, 3.5) -- (1.5, 3) -- (2, 3.5) -- (2.5, 4) -- (3, 3.5) -- (4, 4.5) -- (4.5, 5) -- (5, 4.5) -- (6, 3.5) -- (6.5, 4) -- (7, 3.5) --(7.5, 3) -- (7.5, 1.0) -- cycle;
        
        \draw[dashed] (1, 1.0) -- (1, 3.5);
        \draw[dashed] (1.5, 1.0) -- (1.5, 3);
        \draw[dashed] (2, 1.0) -- (2, 3.5);
        \draw[dashed] (2.5, 1.0) -- (2.5, 4);
        \draw[dashed] (3, 1.0) -- (3, 3.5);
        \draw[dashed] (3.5, 1.0) -- (3.5, 4);
        \draw[dashed] (4, 1.0) -- (4, 4.5);
        \draw[dashed] (4.5, 1.0) -- (4.5, 5);
        \draw[dashed] (5, 1.0) -- (5, 4.5);
        \draw[dashed] (5.5, 1.0) -- (5.5, 4);
        \draw[dashed] (6, 1.0) -- (6, 3.5);
        \draw[dashed] (6.5, 1.0) -- (6.5, 4);
        \draw[dashed] (7, 1.0) -- (7, 3.5);
        
        \draw[dotted, semithick] (1, 3.5) -- (1.5, 4)--(2, 3.5);
        \draw[dotted, semithick] (2.5, 4) -- (3, 4.5)--(3.5, 4);
        \draw[dotted, semithick] (5.5, 4) -- (6, 4.5)--(6.5, 4);
        
        \draw[thick, >=stealth, ->] (1.5, 3) -- (1.5, 4);
        \draw[thick, >=stealth, ->] (3, 3.5) -- (3, 4.5);
        \draw[thick, >=stealth, ->] (6, 3.5) -- (6, 4.5);
        
        \end{tikzpicture}
    \end{subfigure}
    \caption{Representation of a flip automata network for \rle{212} as a mountain-valley landscape (left) and evolution of a mountain-valley landscape carrying information about the frequency of updates (right).}
    \label{fig:wave}
    \end{figure}

\end{example}

\begin{restatable}[Duality]{lemma}{causality}\label{lem:causality}
    Every ACA $\mathbb{A} = (S, N, f, \mathbb{Z}^d)$ with von Neumann neighborhood where no two neighboring cells can be simultaneously active has a ``dual'' FAN $\mathcal{A} = (T=\Z^d, S, \{f'\})$ such that every configuration of $\mathbb{A}$ can be projected to a valid configuration of $\mathcal{A}$, and for any update schedule $\zeta$ their evolutions commute with the projection.
\end{restatable}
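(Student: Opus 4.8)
The plan is to realise the dual FAN explicitly, to characterise when a node can fire in terms of \emph{activity} in $\mathbb{A}$, and to transport this characterisation along the evolution by a short induction. Fix the standard basis $e_1,\dots,e_d$ and a linear order on $\Z^d$ (lexicographic, say), so that every cell $i$ has ``larger'' neighbours $i+e_\ell$ and ``smaller'' neighbours $i-e_\ell$, and write $N[i]=\{i\}\cup\{i\pm e_\ell:\ell\le d\}$. For the \emph{construction}, take $T$ to be $\Z^d$ with von Neumann adjacency, the edge between $i$ and $i+e_\ell$ having ``type $\ell$'' with $\nearrow$ pointing at $i+e_\ell$. For each coordinate $\ell$ and state $s$ set
\[
  A_\ell(s)=\bigl\{\,v\in S:\ \exists\,c\ \text{with an active cell }i,\ c(i)=s,\ c(i+e_\ell)=v\,\bigr\},
\]
a finite, purely local condition; let $\delta_\ell(s,v)$ be $\swarrow$ if $v\in A_\ell(s)$ and $\nearrow$ otherwise. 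The projection $\pi$ keeps node states and orients the type-$\ell$ edge at $i$ as $\delta_\ell(c(i),c(i+e_\ell))$ (always a legal FAN configuration). The uniform flip rule $f'$ is applicable at $v_i$ exactly when every incident edge points to $v_i$; on firing it overwrites the state of $v_i$ with $f$ applied to the states on $N[i]$ and re-orients each incident edge by $\delta_\ell$ evaluated at the \emph{new} state of $v_i$ (equivalently, it flips exactly the incident edges whose $\delta_\ell$-value changed). Since $f'$ sees each neighbour together with the direction $\pm e_\ell$ in which it lies, this is a bona fide flip rule touching only $v_i$ and its edges.

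The heart of the argument is the equivalence \emph{readiness $\Leftrightarrow$ activity}: in any $\pi(c)$, if $i$ is active in $c$ then $f'$ is applicable at $i$, and if $f'$ is applicable at $i$ but $i$ is not active then the firing changes nothing. For active $i$ with $c(i)=s$, every edge to a larger neighbour $i+e_\ell$ points to $i$ by the definition of $A_\ell$; for an edge from a smaller neighbour $i-e_\ell$ with $c(i-e_\ell)=u$ one needs $s\notin A_\ell(u)$. If this failed, some configuration $c'$ would have an active cell $i'$ with $c'(i')=u$ and $c'(i'+e_\ell)=s$; using the von Neumann identity $N[i]\cap N[i-e_\ell]=\{i,i-e_\ell\}$ with disjoint remainders, one glues the neighbourhood of $i$ from $c$ to the translated neighbourhood of $i-e_\ell$ from $c'$, producing a configuration in which the adjacent cells $i$ and $i-e_\ell$ are both active --- contradicting the hypothesis. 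For the second assertion, if $i$ is inactive in $c$ then $f$ on $N[i]$ returns $c(i)$, so neither the state nor any $\delta_\ell$-value moves and no edge flips; such firings are invisible, matching the fact that an ACA update of an inactive cell is a no-op.

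It then suffices to verify $\pi(G_D(c))=G^{\mathcal A}_D(\pi(c))$ for a single update set $D$ and finish by induction on time, giving $\spacetime^\zeta(t,\pi(c_0))=\pi(\spacetime^\zeta(t,c_0))$ for every $\zeta$. In $\pi(c)$ the cells of $D$ at which $f'$ is applicable are pairwise non-adjacent (an edge cannot point to both endpoints), the ones that genuinely act are exactly the cells of $D$ active in $c$, and each acts locally and independently of the others; so the simultaneous $D$-update of the FAN changes precisely the states of the active cells of $D$ (to $f$ of their neighbourhoods) and their incident edges (via $\delta_\ell$), and nothing else. This is exactly $\pi(G_D(c))$: on the ACA side $G_D(c)$ alters the same cell states, while any edge not incident to an active cell of $D$ has both endpoint states unchanged, hence keeps its $\delta_\ell$-value.

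The single delicate point --- the \emph{main obstacle} --- is the readiness–activity equivalence for the ``incoming'' edges: activity is \emph{not} in general a product predicate of the neighbour states, so the direction indicators cannot literally record ``is this neighbour active''. The resolution is to orient edges by the two-state maps $\delta_\ell$, to permit $f'$ to be applicable yet idle at inactive cells, and to exploit the non-adjacency of active cells together with the combinatorics of von Neumann closed neighbourhoods to forbid the offending orientation; that is precisely where the hypothesis is consumed, and the remainder is bookkeeping.
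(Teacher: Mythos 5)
Your proposal is correct and takes essentially the same approach as the paper's proof: both determine each direction indicator as a function of the ordered pair of adjacent states extracted from the active transitions (your maps $\delta_\ell$, built from the sets $A_\ell(s)$, play exactly the role of the paper's tripartite graph $G_3$), and both consume the no-two-adjacent-active-cells hypothesis to rule out conflicting orientations --- your gluing of the overlapping von Neumann closed neighbourhoods is precisely the paper's ``no bidirectional edges in $G_3$, else two overlapping active transitions'' step. Your write-up is somewhat more explicit about applicable-but-inactive nodes firing as no-ops and about the single-step commutation $\pi(G_D(c))=G^{\mathcal A}_D(\pi(c))$, but the underlying argument is the same.
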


The proof for this result is similar to \Cref{ex:mountain-valley} and can be found in \appendixref{app:ommited-proofs}. As a direct consequence we obtain that all such ACAs have invariant histories due to \cref{prop:flip-invariant} (can be alternatively shown using commutativity). \cref{lem:causality} effectively shows the duality between a large subclass of ACAs and FANs, which will prove useful in understanding their computational mechanics.

\begin{openquestion}
    Can duality between FANs and ACAs be extended to all commutative ACAs (perhaps with modifications to \cref{def:flip-automata-network})?
\end{openquestion}

\section{Simulating Synchronous Networks}\label{sec:simulation}

We start by discussing the notion of simulation between cellular automata and present its extension to the asynchronous scenario. For brevity, we refer to the simulated (synchronous) automaton as the ``guest'' CA, and the simulating (a)synchronous automaton as the ``host''.

\begin{definition}[Direct Simulation]
    A CA $B = (S^B, N^B, f^B, \mathbb{Z}^d)$ with global transition function $G^B$ directly simulates a CA $A = (S^A, N^A, f^A, \mathbb{Z}^d)$ with global transition function $G^A$ of the same dimension according to a mapping $\psi\colon S^A \longrightarrow \mathcal{P}(S^B) \setminus \{ \varnothing \}$ if for all states $a, b \in S^A$ we have $\psi(a) \cap \psi(b) = \varnothing$ and for any configuration $c \in {(S^A)}^{\Z^d}$ we have 
    \begin{equation*}
        \{ G^B(c') \ | \ c' \in \psi(c) \} \subseteq \psi(G^A(c)) \quad \text{where} \quad \psi(c) \coloneqq i \mapsto \psi(c(i))
    \end{equation*}
    i.e. all encodings $c'$ evolve in $B$ as $c$ evolves in $A$. In this case we write $G^A \prec G^B$.
\end{definition}

The general definition of simulation extends the definition of direct simulation by allowing packing, cutting, and shifting, such that a direct simulation can occur.

\begin{definition}[Unpacking Map]\label{def:direct-simulation}
    Let $S$ be a state set and $m = (m_1, \ldots, m_d)$ be a tuple of positive integers, then the unpacking bijective map
    \begin{equation*}
        o_m\colon{\big(S^{\prod m_i}\big)}^{\Z^d} \longrightarrow S^{\Z^d}
    \end{equation*}
    is defined for every configuration $c \in {\big(S^{\prod m_i}\big)}^{\Z^d}$ such that for every position $i \in \Z^d$ and offset $r \in \prod_i \Z_{m_i}$ it holds that $o_m(c)(m_1i_1 + r_1, \ldots, m_d i_d + r_d) = c(i)(r)$. 
\end{definition}

In essence, the unpacking map $o_m$ transforms every cell $i$ with state from $S^{\prod m_i}$ into a block of cells of size $\prod m_i$ and bijectively maps the state information from $i$ to the block. 

\begin{definition}[Simulation \cite{Ollinger-chapter}]\label{def:simulation}
    A cellular automaton $B = (S^B, N^B, f^B, \mathbb{Z}^d)$ simulates a  cellular automaton $A = (S^A, N^A, f^A, \mathbb{Z}^d)$ if for some $m$ there exists an unpacking map $o_m$, a translation $\tau_v$ and a positive integer $n \in \N$ such that
    \begin{equation*}
        {(G^A)} \prec o_m^{-1} \circ {(G^B)}^n \circ o_m \circ \tau_v
    \end{equation*}
    In this case, we say that $B$ simulates $A$ in $n$-linear time (and in real time if $n=1$).
\end{definition}

For the asynchronous scenario multiple simulation definitions have been proposed, however, all of them have turned out to be either too restrictive or flawed \cite{nakamura, improved-soldiers}. Most notably, they did not support for ``transitivity'', i.e., if an ACA $\mathbb{C}$ simulates a CA $B$, and $B$ in turn simulates another CA $A$, then one would expect $\mathbb{C}$ to simulate $A$ (which was not fulfilled). To mitigate these issues, we propose a natural extension of \Cref{def:simulation} based on the notion of synchronous simulation and invariant histories, such that it is general enough to encompass most of the previous simulation constructions \cite{nakamura, gacs} (see \appendixref{app:examples}), while preserving the crucial ``transitivity'' property.

\begin{definition}[Invariant Simulation]\label{def:invariant}
     Let $\mathbb{B} = (S^B, N^B, f^B, \mathbb{Z}^d)$ be an ACA and $A = (S^A, N^A, f^A, \mathbb{Z}^d)$ be some synchronous CA of the same dimension, and let
     \begin{equation*}
         H^B\colon{(S^B)}^{\Z^d} \longrightarrow {(S^B)}^{\Z^d}
     \end{equation*}
     be a function that transforms the configuration by mapping every cell $i$ in state $h_t(i)$ to the next point in its invariant update history $h_{t+1}(i)$ under $\mathbb{B}$ ($H^B(c)$ is undefined in case $\mathbb{B}$ does not have invariant histories for some $c$). We say that $\mathbb{B}$ invariantly simulates $A$ for some $m$ there exists an unpacking map $o_m$, a translation $\tau_v$ and positive integers $k, l \in \N$ such that 
     \begin{equation*}
        {(G^A)}^{k} \prec o_m^{-1} \circ {(H^B)}^l \circ o_m \circ \tau_v
    \end{equation*}
    where $l/k$ is again the linear time parameter.
\end{definition}

\cref{def:invariant} requires configurations of $A$ to appear as complete, encoded rows in the invariant history $h$ of $\mathbb{B}$, allowing the observer to reconstruct the evolution of $A$ from the evolution of $\mathbb{B}$, even if the update schedule $\zeta$ is unknown. The ``transitivity'' is also fulfilled, since we can chain unpackings and translations between different types of simulation.

\subparagraph*{Simulation techniques} The classical approach to asynchronous simulation requires the ``host'' ACA to be of the same dimension $d$ and neighborhood $N$ as the ``guest'' CA--indeed, when synchronizing a distributed system, the general goal is to alter the latter as little as possible. The classical $3q^2$ real time construction by Nakamura \cite{nakamura, gacs} (see \appendixref{app:examples}) and $q^2 + 2q$ state $3$-linear time construction in \cite{improved-soldiers, worsch-intrinsically} achieve this by increasing the number of states in the ``host'' (given a $q$-state ``guest'' CA) and use one cell of the ``host'' to simulate one cell of the ``guest''. Furthermore, these approaches require the neighborhood $N$ to be symmetric (this constraint will be investigated more in \cref{sec:universality}). Our new construction--based on embeddings of flip automata networks into ACAs--allows for simulations with both linear state and time overheads, which is achieved using spatial grouping of cells, as introduced in \cref{def:invariant}. We also restrict the neighborhood $N$ to be either von Neumann or Moore, since these constitute the most crucial cases.

\begin{proposition}[Mountain-Valley Encoding]\label{prop:mountain-valley}
    For a synchronous automaton $A = (S, N, f, \Z^d)$ with von Neumann or Moore's neighborhood $N$ we can construct an asynchronous cellular automaton $\mathbb{B} = (S \times \{0, 1, 2, 3\}, N, f', \Z^d)$ which simulates $A$ in real time.
\end{proposition}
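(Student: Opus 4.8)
The plan is to obtain $\mathbb{B}$ as a cellular embedding of a carefully designed flip automata network, running the duality of \cref{lem:causality} ``backwards'' in the spirit of the Rule~212 construction of \cref{ex:mountain-valley}. I would take the FAN graph $T$ to be $\Z^d$ with one undirected edge per pair of neighbouring cells (finite degree, since $N$ is von~Neumann or Moore, both symmetric), give every node a state in $S \times \{0,1,2,3\}$ --- the $S$-part storing the simulated state of $A$ and the $\{0,1,2,3\}$-part a cyclic ``local clock'' that tracks which round the cell currently holds --- and fix the initial edge orientations (via the $\nearrow/\swarrow$ index convention of \cref{def:flip-automata-network}) so that the induced landscape is exactly the alternating mountain--valley pattern of \cref{fig:wave}. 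The flip function at a node does the usual FAN thing: it waits until every incident arrow points towards it, then reads the $S$-components of its neighbours, replaces its own $S$-component by $f$ applied to those values, advances its clock by one, and flips all its arrows. In dimension one the two-neighbour structure lets three clock values already suffice, which is what yields the $3q$-state variant; for the genuinely higher-dimensional von~Neumann/Moore cases one more value is needed.

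The first thing to check is that this is a legitimate FAN and that the embedding of \cref{lem:causality} applies, i.e.\ that no two neighbouring cells are ever simultaneously ``ready''. This is immediate, because a single edge cannot point towards both of its endpoints at once. \cref{prop:flip-invariant} (alternatively, commutativity together with \cref{thm:commutativity}) then gives that $\mathbb{B}$ has invariant histories on every configuration in the image of the encoding, so the map $H^B$ of \cref{def:invariant} is well-defined there.

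Next I would verify the simulation itself. Take $m=1$ (no spatial grouping), $\tau_v$ the identity, $k=l=1$, and the encoding $\psi(s)=\{s\}\times\{0,1,2,3\}$ --- which is injective on $S$ as required --- so that \cref{def:invariant} reduces to $(G^A)\prec H^B$. The content to establish is a single invariant, proved by induction along the (schedule-independent) update history: when a host cell undergoes its $t$-th value change, the $S$-components it reads from its neighbours are precisely their round-$(t-1)$ values, so the update produces exactly the round-$t$ value $f(\dots)$ prescribed by $A$; and the clock component after the $t$-th change is the one fixed by $t$, so the whole configuration stays inside $\psi(G^A(c))$. Matching this against the general shape of \cref{def:invariant} is then formal.

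The main obstacle is exactly that invariant. One has to argue that the purely local clock-and-arrow discipline enforces the \emph{global} statement ``every read is of the immediately preceding round'' under every update schedule --- in particular that a cell can never be forced to read a neighbour whose $S$-component has already been overwritten with a later round, nor one still lagging a round behind. This is delicate precisely at the moving ``fluid front'' of \cref{fig:wave}, where cells have advanced unequally, and it is here that the choice of clock modulus ($4$ in general, $3$ in dimension one) and of the initial orientation must be tuned so that the modular comparison of clocks faithfully reproduces the mountain--valley ordering. Everything else --- the FAN formalism, invariance via \cref{prop:flip-invariant}, and the final unpacking/translation bookkeeping --- is routine once this invariant is in hand.
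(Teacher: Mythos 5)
There is a genuine gap, and it sits exactly where you flag your ``main obstacle'': the invariant ``every read is of the immediately preceding round'' is simply false for your construction, and no tuning of the clock modulus or of the initial orientation can make it true. With $m=1$ and one FAN node per guest cell, the grid $\Z^d$ with von Neumann (or Moore) adjacency is bipartite, and any deadlock-free arrow assignment forces the two parity classes to update in alternation. Say the even class updates first: an even cell's $t$-th update correctly reads round-$(t-1)$ values, but when an odd cell subsequently becomes ready for its $t$-th update, its even neighbours have \emph{already performed} their $t$-th update and overwritten their round-$(t-1)$ states. The odd cell therefore computes $f$ applied to a mixture of round-$t$ neighbour values and its own round-$(t-1)$ value (a Gauss--Seidel-type sweep), which in general equals neither $c_t$ nor $c_{t+1}$ of $A$; already for first-neighbour rules in dimension one, $f(c_1(j-1),c_0(j),c_1(j+1))$ differs from both $c_1(j)$ and $c_2(j)$. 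The needed round-$(t-1)$ value of an ahead neighbour is not recoverable from its current $S$-component plus a two-bit clock, so with $\psi(s)=\{s\}\times\{0,1,2,3\}$ and no spatial grouping the information is irretrievably destroyed; this is precisely why the naive $4q$-state one-to-one scheme cannot work and why classical constructions pay $q^2$ to store the previous state.

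The paper's proof avoids this not by a cleverer clock but by spatial unpacking with $m=(2,\dots,2)$: each guest cell becomes a block containing one \textbf{E} cell (which carries the simulated state) and $2^d-1$ \textbf{O} cells. The \textbf{O} cells act as relays, computing the $\oplus$-sum of their adjacent \textbf{E} cells one half-step ahead of the \textbf{E} updates, and an \textbf{E} cell recovers each far neighbour's state by $\ominus$-subtracting its own state from the relayed sum. The FAN arrow discipline (encoded via the timer mod $4$, \textbf{E} even, \textbf{O} odd) then guarantees that every \textbf{E} update sees sums formed from the correct previous round, which is what makes $4q$ states suffice; the unpacking map in \cref{def:invariant} is thus essential, not optional bookkeeping. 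Your remark about the one-dimensional $3q$ bound is also off: it is not obtained by shrinking the clock in a one-to-one embedding, but again by a two-cell block together with a spatial shift, at the cost of $1.5$-linear time (\cref{prop:improvement}). The parts of your argument that do go through --- embedding a FAN, invariant histories via \cref{prop:flip-invariant} or \cref{lem:causality} --- are fine, but they concern robustness to the schedule, not correctness of the simulated values, which is the actual content of \cref{prop:mountain-valley}.
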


\begin{proof}
    The high-level idea behind the construction of $\mathbb{B}$ is that we first construct a flip automata network $\mathcal{A}$ which simulates $A$, and then create an asynchronous cellular automaton embedding $\mathcal{A}$. To keep the construction elegant, we present it for the $2$-dimensional von Neumann neighborhood $N = ((-1, 0), (1, 0), (0, 0), (0, 1), (0, -1))$ in detail, and give a sketch for Moore's neighborhood. The generalization to any dimension $d$ is straightforward.
    
    Consider a FAN $\mathcal{A} = (T = \Z^2, S, \{f_\textbf{E}, f_\textbf{O}\})$ with the same set of states as $A$, where we distinguish between two types of nodes \textbf{E} and \textbf{O}, and set $\oplus / \ominus$ to be addition and subtraction mod $|S|$. Then, for von Neumann neighborhood:
    \begin{itemize}
        \item \textbf{E} nodes are placed on coordinates $(2i, 2j) \in \Z^2$ and we set
        \begin{equation*}
            f_\textbf{E}(s_{(1, 0)}, s_{(0, 1)}, s, s_{(-1, 0)}, s_{(0, -1)}) = f(s_{(1, 0)} \ominus s, s_{(0, 1)} \ominus s, s, s_{(-1, 0)} \ominus s, s_{(0, -1)} \ominus s)
        \end{equation*}
        Intuitively, \textbf{E} nodes store the states of the cells in $A$ and simulate the local transition $f$
        \item \textbf{O} nodes are placed at all remaining coordinates and every $f_\textbf{O}$ computes the sum $\oplus$ of all of its \textbf{E} neighbors. On a high level, \textbf{O} nodes use addition mod $|S|$ to accumulate and pass on the information about (simulated) neighbors of \textbf{E} nodes
    \end{itemize}
    Observe the topology of $\mathcal{A}$ as presented in \Cref{fig:mountain-valley}--the direction indicators are only added between neighboring \textbf{E} and \textbf{O} nodes. For von Neumann neighborhood (left), the \textbf{O} nodes with odd coordinates will be isolated in $\mathcal{A}$ since they do not have any \textbf{E} neighbors. On the other hand, in the case of Moore's neighborhood (right), the topology of $\mathcal{A}$ becomes non-uniform with different \textbf{O} nodes having different neighborhoods. We initialize $\mathcal{A}$ with all direction indicators pointing from \textbf{E} nodes to \textbf{O} nodes, and on every local transition, we flip all edges. Furthermore, the initial configuration $c_0$ of $A$ is encoded on the \textbf{E} nodes, where $T(2i) = c_0(i)$, and \textbf{O} nodes are initialized arbitrarily. If we first update all \textbf{O} nodes, flip the arrows, and update all \textbf{E} nodes, the state of the network will be an encoding of $G^A(c_0)$.
    
    Finally, let us construct an asynchronous cellular automaton $\mathbb{B} = (S \times \{0, 1, 2, 3\}, N, f', \Z^2)$ corresponding to FAN $\mathcal{A}$. First, let us define the initial configuration $c'_0$ on $\mathbb{B}$. For each \textbf{E} node at $(i, j)$ we set $c'_0(2i, 2j) = c_0(i, j)2$ (tuple notation), and each \textbf{O} node is initialized with $s1$ where $s \in S$ arbitrary (see \Cref{fig:mountain-valley}). We encode direction indicators of $\mathcal{A}$ in $\mathbb{B}$ by defining a mapping $\psi$ that relies on neighboring timers $t \in \{0, 1, 2, 3\}$ for each edge:
    \begin{align*}
        &\psi(t, t+1) = t \longleftarrow t+1 \pmod{4} &\psi(t+1, t) = t+1 \longrightarrow t \pmod{4}
    \end{align*}
    There are no arrows if $t$s differ by $0$ or by $2$, i.e., the node simply ``ignores'' such neighbors. We define the local rule function $f'$ to be active on cell $i$ if and only if all (implicit) arrows are pointing to $i$. More formally, if a cell in state $st$ has an even timer $t$ it mimics $f_\textbf{E}$ and updates $t$ by $2$, and if it has an odd timer it mimics $f_\textbf{O}$ and again, updates $t$ by $2$ (the extension to Moore's neighborhood is analogous):
    \begin{align*}
        f'(s_{(1, 0)}(t+1), s_{(0, 1)}(t+1), st, &s_{(-1, 0)}(t+1), s_{(0, -1)}(t+1)) = \\
        f_\textbf{E}(s_{(1, 0)}, s_{(0, 1)}, &s, s_{(-1, 0)}, s_{(0, -1)}) (t+2) \quad \text{if} \quad t \ \text{even}\\
        f'(s_{(1, 0)}t_0, s_{(0, 1)}t_1, st, s_{(-1, 0)}t_2, &s_{(0, -1)}t_3) = \\
        f_\textbf{O}(s_{(1, 0)}, s_{(0, 1)}, &s, s_{(-1, 0)}, s_{(0, -1)}) (t+2) \quad \text{if} \quad t \ \text{odd } \wedge t_i + 1 \neq t \  \forall i 
    \end{align*}
    \begin{figure}[tb]
    \centering
    \begin{tikzpicture}[font=\sffamily\small, scale=0.7, transform shape]
    \draw[->, thick] (0.0,0.0) -- (4.0,0.0) node[below] {$\Z$};
    \draw[->, thick] (0.0,0.0) -- (0.0,4.0) node[left] {$\Z$};

    \foreach \i in {0,...,5}{
        \draw[dashed] (0.75*\i, 0) -- (0.75*\i, 3.75);
        \draw[dashed] (0, 0.75*\i) -- (3.75, 0.75*\i);
    }

    \foreach \i in {0,...,2}{
        \draw[<-, color=darkred, thick, >=stealth] (0.375 + 1.5*\i, 0.6) -- (0.375 + 1.5*\i, 0.9);
        \draw[<-, color=darkred, thick, >=stealth] (0.375 + 1.5*\i, 2.1) -- (0.375 + 1.5*\i, 2.4);
        \draw[->, color=darkred, thick, >=stealth] (0.375 + 1.5*\i, 1.35) -- (0.375 + 1.5*\i, 1.65);
        \draw[->, color=darkred, thick, >=stealth] (0.375 + 1.5*\i, 2.85) -- (0.375 + 1.5*\i, 3.15);
    }  

    \foreach \i in {0,...,2}{
        \draw[<-, color=darkred, thick, >=stealth] (0.6, 0.375 + 1.5*\i) -- (0.9, 0.375 + 1.5*\i);
        \draw[<-, color=darkred, thick, >=stealth] (2.1, 0.375 + 1.5*\i) -- (2.4, 0.375 + 1.5*\i);
        \draw[->, color=darkred, thick, >=stealth] (1.35, 0.375 + 1.5*\i) -- (1.65, 0.375 + 1.5*\i);
        \draw[->, color=darkred, thick, >=stealth] (2.85, 0.375 + 1.5*\i) -- (3.15, 0.375 + 1.5*\i);
    }

    \node[] at (0.375,0.375) {$\textbf{E}$};
    \node[] at (1.125,0.375) {$\textbf{O}$};
    \node[] at (1.875,0.375) {$\textbf{E}$};
    \node[] at (2.625,0.375) {$\textbf{O}$};
    \node[] at (3.375,0.375) {$\textbf{E}$};

    \node[] at (0.375,1.125) {$\textbf{O}$};
    \node[] at (1.125,1.125) {$\textbf{O}$};
    \node[] at (1.875,1.125) {$\textbf{O}$};
    \node[] at (2.625,1.125) {$\textbf{O}$};
    \node[] at (3.375,1.125) {$\textbf{O}$};

    \node[] at (0.375,1.875) {$\textbf{E}$};
    \node[] at (1.125,1.875) {$\textbf{O}$};
    \node[] at (1.875,1.875) {$\textbf{E}$};
    \node[] at (2.625,1.875) {$\textbf{O}$};
    \node[] at (3.375,1.875) {$\textbf{E}$};

    \node[] at (0.375,2.625) {$\textbf{O}$};
    \node[] at (1.125,2.625) {$\textbf{O}$};
    \node[] at (1.875,2.625) {$\textbf{O}$};
    \node[] at (2.625,2.625) {$\textbf{O}$};
    \node[] at (3.375,2.625) {$\textbf{O}$};

    \node[] at (0.375,3.375) {$\textbf{E}$};
    \node[] at (1.125,3.375) {$\textbf{O}$};
    \node[] at (1.875,3.375) {$\textbf{E}$};
    \node[] at (2.625,3.375) {$\textbf{O}$};
    \node[] at (3.375,3.375) {$\textbf{E}$};

    \end{tikzpicture}
    \begin{tikzpicture}[font=\sffamily\small, scale=0.7, transform shape]
    
    \draw[->, thick] (0.0,0.0) -- (4.0,0.0) node[below] {$\Z$};
    \draw[->, thick] (0.0,0.0) -- (0.0,4.0) node[left] {$\Z$};

    \foreach \i in {0,...,5}{
        \draw[dashed] (0.75*\i, 0) -- (0.75*\i, 3.75);
        \draw[dashed] (0, 0.75*\i) -- (3.75, 0.75*\i);

    \foreach \i in {0,...,2}{
        \draw[<-, color=darkred, thick, >=stealth] (0.375 + 1.5*\i, 0.6) -- (0.375 + 1.5*\i, 0.9);
        \draw[<-, color=darkred, thick, >=stealth] (0.375 + 1.5*\i, 2.1) -- (0.375 + 1.5*\i, 2.4);
        \draw[->, color=darkred, thick, >=stealth] (0.375 + 1.5*\i, 1.35) -- (0.375 + 1.5*\i, 1.65);
        \draw[->, color=darkred, thick, >=stealth] (0.375 + 1.5*\i, 2.85) -- (0.375 + 1.5*\i, 3.15);
    }  

    \foreach \i in {0,...,2}{
        \draw[<-, color=darkred, thick, >=stealth] (0.6, 0.375 + 1.5*\i) -- (0.9, 0.375 + 1.5*\i);
        \draw[<-, color=darkred, thick, >=stealth] (2.1, 0.375 + 1.5*\i) -- (2.4, 0.375 + 1.5*\i);
        \draw[->, color=darkred, thick, >=stealth] (1.35, 0.375 + 1.5*\i) -- (1.65, 0.375 + 1.5*\i);
        \draw[->, color=darkred, thick, >=stealth] (2.85, 0.375 + 1.5*\i) -- (3.15, 0.375 + 1.5*\i);
    }  

    \node[] at (0.375,0.375) {$\text{0}$};
    \node[] at (1.125,0.375) {$\text{1}$};
    \node[] at (1.875,0.375) {$\text{0}$};
    \node[] at (2.625,0.375) {$\text{1}$};
    \node[] at (3.375,0.375) {$\text{0}$};

    \node[] at (0.375,1.125) {$\text{1}$};
    \node[] at (1.125,1.125) {$\text{1}$};
    \node[] at (1.875,1.125) {$\text{1}$};
    \node[] at (2.625,1.125) {$\text{1}$};
    \node[] at (3.375,1.125) {$\text{1}$};

    \node[] at (0.375,1.875) {$\text{0}$};
    \node[] at (1.125,1.875) {$\text{1}$};
    \node[] at (1.875,1.875) {$\text{0}$};
    \node[] at (2.625,1.875) {$\text{1}$};
    \node[] at (3.375,1.875) {$\text{0}$};

    \node[] at (0.375,2.625) {$\text{1}$};
    \node[] at (1.125,2.625) {$\text{1}$};
    \node[] at (1.875,2.625) {$\text{1}$};
    \node[] at (2.625,2.625) {$\text{1}$};
    \node[] at (3.375,2.625) {$\text{1}$};

    \node[] at (0.375,3.375) {$\text{0}$};
    \node[] at (1.125,3.375) {$\text{1}$};
    \node[] at (1.875,3.375) {$\text{0}$};
    \node[] at (2.625,3.375) {$\text{1}$};
    \node[] at (3.375,3.375) {$\text{0}$};
    }
    \end{tikzpicture}
    \begin{tikzpicture}[font=\sffamily\small, scale=0.7, transform shape]
    \draw[->, thick] (0.0,0.0) -- (4.0,0.0) node[below] {$\Z$};
    \draw[->, thick] (0.0,0.0) -- (0.0,4.0) node[left] {$\Z$};

    \foreach \i in {0,...,5}{
        \draw[dashed] (0.75*\i, 0) -- (0.75*\i, 3.75);
        \draw[dashed] (0, 0.75*\i) -- (3.75, 0.75*\i);
    }

    \foreach \i in {0,...,2}{
        \draw[<-, color=darkred, thick, >=stealth] (0.375 + 1.5*\i, 0.6) -- (0.375 + 1.5*\i, 0.9);
        \draw[<-, color=darkred, thick, >=stealth] (0.375 + 1.5*\i, 2.1) -- (0.375 + 1.5*\i, 2.4);
        \draw[->, color=darkred, thick, >=stealth] (0.375 + 1.5*\i, 1.35) -- (0.375 + 1.5*\i, 1.65);
        \draw[->, color=darkred, thick, >=stealth] (0.375 + 1.5*\i, 2.85) -- (0.375 + 1.5*\i, 3.15);
    }  

    \foreach \i in {0,...,2}{
        \draw[<-, color=darkred, thick, >=stealth] (0.6, 0.375 + 1.5*\i) -- (0.9, 0.375 + 1.5*\i);
        \draw[<-, color=darkred, thick, >=stealth] (2.1, 0.375 + 1.5*\i) -- (2.4, 0.375 + 1.5*\i);
        \draw[->, color=darkred, thick, >=stealth] (1.35, 0.375 + 1.5*\i) -- (1.65, 0.375 + 1.5*\i);
        \draw[->, color=darkred, thick, >=stealth] (2.85, 0.375 + 1.5*\i) -- (3.15, 0.375 + 1.5*\i);
    }

    \foreach \i in {0,...,1}{
        \foreach \j in {0,...,1}{
            \draw[<-, color=darkred, thick, >=stealth] (0.6 + 1.5*\i, 0.6 + 1.5*\j) -- (0.9 + 1.5*\i, 0.9+1.5*\j);
            \draw[->, color=darkred, thick, >=stealth] (1.35 + 1.5*\i, 1.35+1.5*\j) -- (1.65 + 1.5*\i, 1.65+1.5*\j);
            \draw[->, color=darkred, thick, >=stealth] (0.9 + 1.5*\i, 1.35 + 1.5*\j) -- (0.6 + 1.5*\i, 1.65+1.5*\j);
            \draw[<-, color=darkred, thick, >=stealth] (1.65 + 1.5*\i, 0.6+1.5*\j) -- (1.35 + 1.5*\i, 0.9+1.5*\j);
        }
    }

    \node[] at (0.375,0.375) {$\textbf{E}$};
    \node[] at (1.125,0.375) {$\textbf{O}$};
    \node[] at (1.875,0.375) {$\textbf{E}$};
    \node[] at (2.625,0.375) {$\textbf{O}$};
    \node[] at (3.375,0.375) {$\textbf{E}$};

    \node[] at (0.375,1.125) {$\textbf{O}$};
    \node[] at (1.125,1.125) {$\textbf{O}$};
    \node[] at (1.875,1.125) {$\textbf{O}$};
    \node[] at (2.625,1.125) {$\textbf{O}$};
    \node[] at (3.375,1.125) {$\textbf{O}$};

    \node[] at (0.375,1.875) {$\textbf{E}$};
    \node[] at (1.125,1.875) {$\textbf{O}$};
    \node[] at (1.875,1.875) {$\textbf{E}$};
    \node[] at (2.625,1.875) {$\textbf{O}$};
    \node[] at (3.375,1.875) {$\textbf{E}$};

    \node[] at (0.375,2.625) {$\textbf{O}$};
    \node[] at (1.125,2.625) {$\textbf{O}$};
    \node[] at (1.875,2.625) {$\textbf{O}$};
    \node[] at (2.625,2.625) {$\textbf{O}$};
    \node[] at (3.375,2.625) {$\textbf{O}$};

    \node[] at (0.375,3.375) {$\textbf{E}$};
    \node[] at (1.125,3.375) {$\textbf{O}$};
    \node[] at (1.875,3.375) {$\textbf{E}$};
    \node[] at (2.625,3.375) {$\textbf{O}$};
    \node[] at (3.375,3.375) {$\textbf{E}$};

    \end{tikzpicture}
    \begin{tikzpicture}[font=\sffamily\small, scale=0.7, transform shape]
    
    \draw[->, thick] (0.0,0.0) -- (4.0,0.0) node[below] {$\Z$};
    \draw[->, thick] (0.0,0.0) -- (0.0,4.0) node[left] {$\Z$};

    \foreach \i in {0,...,5}{
        \draw[dashed] (0.75*\i, 0) -- (0.75*\i, 3.75);
        \draw[dashed] (0, 0.75*\i) -- (3.75, 0.75*\i);

    \foreach \i in {0,...,2}{
        \draw[<-, color=darkred, thick, >=stealth] (0.375 + 1.5*\i, 0.6) -- (0.375 + 1.5*\i, 0.9);
        \draw[<-, color=darkred, thick, >=stealth] (0.375 + 1.5*\i, 2.1) -- (0.375 + 1.5*\i, 2.4);
        \draw[->, color=darkred, thick, >=stealth] (0.375 + 1.5*\i, 1.35) -- (0.375 + 1.5*\i, 1.65);
        \draw[->, color=darkred, thick, >=stealth] (0.375 + 1.5*\i, 2.85) -- (0.375 + 1.5*\i, 3.15);
    }  

    \foreach \i in {0,...,2}{
        \draw[<-, color=darkred, thick, >=stealth] (0.6, 0.375 + 1.5*\i) -- (0.9, 0.375 + 1.5*\i);
        \draw[<-, color=darkred, thick, >=stealth] (2.1, 0.375 + 1.5*\i) -- (2.4, 0.375 + 1.5*\i);
        \draw[->, color=darkred, thick, >=stealth] (1.35, 0.375 + 1.5*\i) -- (1.65, 0.375 + 1.5*\i);
        \draw[->, color=darkred, thick, >=stealth] (2.85, 0.375 + 1.5*\i) -- (3.15, 0.375 + 1.5*\i);
    }  

    \foreach \i in {0,...,1}{
        \foreach \j in {0,...,1}{
            \draw[<-, color=darkred, thick, >=stealth] (0.6 + 1.5*\i, 0.6 + 1.5*\j) -- (0.9 + 1.5*\i, 0.9+1.5*\j);
            \draw[->, color=darkred, thick, >=stealth] (1.35 + 1.5*\i, 1.35+1.5*\j) -- (1.65 + 1.5*\i, 1.65+1.5*\j);
            \draw[->, color=darkred, thick, >=stealth] (0.9 + 1.5*\i, 1.35 + 1.5*\j) -- (0.6 + 1.5*\i, 1.65+1.5*\j);
            \draw[<-, color=darkred, thick, >=stealth] (1.65 + 1.5*\i, 0.6+1.5*\j) -- (1.35 + 1.5*\i, 0.9+1.5*\j);
        }
    }  

    \node[] at (0.375,0.375) {$\text{0}$};
    \node[] at (1.125,0.375) {$\text{1}$};
    \node[] at (1.875,0.375) {$\text{0}$};
    \node[] at (2.625,0.375) {$\text{1}$};
    \node[] at (3.375,0.375) {$\text{0}$};

    \node[] at (0.375,1.125) {$\text{1}$};
    \node[] at (1.125,1.125) {$\text{1}$};
    \node[] at (1.875,1.125) {$\text{1}$};
    \node[] at (2.625,1.125) {$\text{1}$};
    \node[] at (3.375,1.125) {$\text{1}$};

    \node[] at (0.375,1.875) {$\text{0}$};
    \node[] at (1.125,1.875) {$\text{1}$};
    \node[] at (1.875,1.875) {$\text{0}$};
    \node[] at (2.625,1.875) {$\text{1}$};
    \node[] at (3.375,1.875) {$\text{0}$};

    \node[] at (0.375,2.625) {$\text{1}$};
    \node[] at (1.125,2.625) {$\text{1}$};
    \node[] at (1.875,2.625) {$\text{1}$};
    \node[] at (2.625,2.625) {$\text{1}$};
    \node[] at (3.375,2.625) {$\text{1}$};

    \node[] at (0.375,3.375) {$\text{0}$};
    \node[] at (1.125,3.375) {$\text{1}$};
    \node[] at (1.875,3.375) {$\text{0}$};
    \node[] at (2.625,3.375) {$\text{1}$};
    \node[] at (3.375,3.375) {$\text{0}$};
    }
    \end{tikzpicture}
    \caption{A FAN $\mathcal{A}$ with two types of nodes \textbf{E}, \textbf{O} and the encoding of $\mathcal{A}$ inside an ACA $\mathbb{B}$ using an even/odd timer $t \in \{0, 1, 2, 3\}$ with von Neumann neighborhood (left) and a similar FAN with identical encoding using Moore's neighborhood (right)}
    \label{fig:mountain-valley}
    \end{figure}
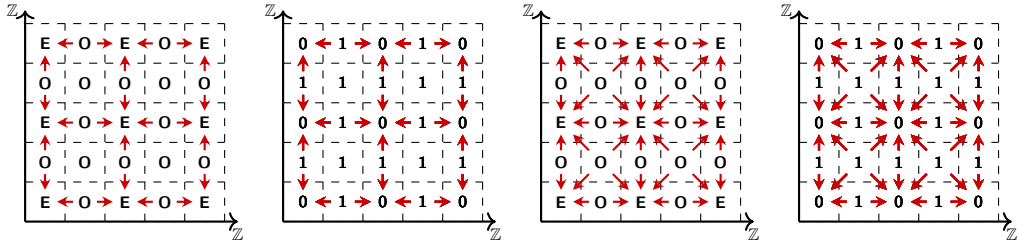
    Observe that simulated \textbf{E} cells will always have an even timer and \textbf{O} cells will always have an odd timer. Furthermore, every transition flips all local arrows by increasing the timer, thus ensuring that $\mathcal{A}$ is simulated correctly. Notice that all configurations in $\mathbb{B}$ which correspond to valid encodings of configurations of $A$ embed a FAN, hence will trivially have invariant histories. Let us confirm that our simulation satisfies \cref{def:invariant}. We can define a bijective unpacking map $o$ with
    \begin{equation*}
        o(c_0)(i, j) = (c_0'(2i, 2j), c_0'(2i+1, 2j), c_0'(2i, 2j+1), c_0'(2i+1, 2j+1)) 
    \end{equation*}
    which splits a cell of $c_0$ into $4$ different cells in $c'_0$ (one \textbf{E} cell and three \textbf{O} cells). Since $\mathcal{A}$ simulates $A$ using alternations between \textbf{E} and \textbf{O} we get $G^A \prec o^{-1} \circ {H^B} \circ o$. We conclude that $\mathbb{B}$ (asynchronously) simulates $A$ in real time.
\end{proof}

We can interpret the mountain-valley encoding as a $d$-dimensional mountain-valley landscape from \Cref{ex:mountain-valley}, which slowly travels through space-time, but now, each ``mountain'' and each ``valley'' stores the state information too. The results from \cref{prop:mountain-valley} can be improved from $4q$ to $3q$ for $1$-dimensional automata by introducing a translation inside the simulation. Notice that Moore and von Neumann neighborhood coincide for $1$-dimensional automata and are both simply first neighbors $N = (-1, 0, 1)$.

\begin{proposition}[Shifting Mountain-Valleys]\label{prop:improvement}
    For a synchronous automaton $A = (S, N, f, \Z)$ with a first neighbors neighborhood we can construct an asynchronous cellular automaton $\mathbb{B} = (S \times \{1, 2, 3\}, N, f', \Z)$ which simulates $A$ in $1.5$-linear time.
\end{proposition}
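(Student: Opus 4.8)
The plan is to reuse the mountain-valley embedding of \cref{prop:mountain-valley} specialised to one dimension, and then to shave off one timer value by letting the ``\textbf{E}-phase'' and ``\textbf{O}-phase'' alphabets overlap, paying for the overlap with a bounded time overhead. Recall that for the first-neighbors neighbourhood $N=(-1,0,1)$ the underlying FAN carries \textbf{E}-nodes at even coordinates $2i$ (which hold the state of $A$'s cell $i$ and apply $f$) and \textbf{O}-nodes at odd coordinates (which accumulate, via $\oplus$ mod $|S|$, the states of their two \textbf{E}-neighbours), and that the ACA of \cref{prop:mountain-valley} embeds this by giving \textbf{E}-cells even timers drawn from $\{0,2\}$ and \textbf{O}-cells odd timers drawn from $\{1,3\}$. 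My first observation is that this $4$-value timer carries two independent bits: a parity bit distinguishing \textbf{E} from \textbf{O}, which is already determined by the spatial parity of the cell, and a phase bit recording which half of its two-update cycle the cell currently occupies.

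Accordingly, I would let the two phase alphabets share one value -- say \textbf{E}-cells range over $\{1,2\}$ and \textbf{O}-cells over $\{2,3\}$ -- so that the host's state set collapses to $S\times\{1,2,3\}$. The price is that the uniform local rule $f'$ cannot see a cell's coordinate, so a cell holding the shared value $2$ must recover its \textbf{E}/\textbf{O} role from the timers of its two neighbours; to guarantee that this is always possible I would slide the whole mountain-valley profile sideways by one cell after every two simulated steps. This sideways ``shift'' -- the feature that gives the proposition its name -- is realised by one extra round in which each cell copies the $A$-state of its fixed-side neighbour and advances its timer, and it is arranged so that among configurations which are legal encodings the locally ambiguous timer window never occurs, while the FAN arrow discipline of \cref{def:flip-automata-network} still linearises the updates into the order \textbf{O}-accumulate, \textbf{E}-apply-$f$, (the same again), shift. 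Reading the direction indicators cyclically mod $3$ (so that neighbouring timers differ by exactly $1$ and each update increments the timer by $1$), one checks that every legal encoded configuration still embeds a genuine FAN, so that \cref{prop:flip-invariant} yields invariant histories for free, exactly as in \cref{ex:mountain-valley} and \cref{lem:causality}.

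It then remains to package the construction as an invariant simulation in the sense of \cref{def:invariant}: take the unpacking map $o_m$ with $m=(2)$ that splits each cell of $A$ into one \textbf{E}-cell and one \textbf{O}-cell of $\mathbb{B}$, absorb the net sideways displacement of one period into the translation $\tau_v$, and observe that three rounds of $H^B$ -- \textbf{O}-accumulate/\textbf{E}-apply for the first simulated step, the same for the second, and then the shift round -- realise two steps of $A$, giving $(G^A)^2\prec o_m^{-1}\circ (H^B)^3\circ o_m\circ\tau_v$, i.e.\ $1.5$-linear time. I expect the genuine obstacle to sit in the middle step: with only three timer values and $+1$ increments the naive ``every valley becomes a mountain'' move flattens the profile into a constant, hence deadlocked, configuration, so the shift round must be engineered to do three things simultaneously -- regenerate a correctly oriented, non-flat profile, never expose the ambiguous timer pattern, and still deliver to each accumulator exactly one copy of each neighbouring \textbf{E}-state per simulated step. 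Verifying that these invariants coexist, rather than the final time-parameter bookkeeping, is where the real content of the proof lies.
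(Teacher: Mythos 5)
You have the right high-level strategy---a tertiary timer, a spatial drift of the mountain-valley profile to pay for the lost timer value, invariant histories via the FAN picture, and the bookkeeping $(G^A)^2 \prec o^{-1}\circ (H^B)^3\circ o\circ \tau$---but the proposal stops exactly where the proof has to start: you never write down $f'$, and you explicitly defer the verification that your three invariants (non-flat profile, no ambiguous timer window, correct delivery of neighbour states) can coexist. That verification is the entire content of the result. Worse, the architecture you sketch does not hang together as stated: if each update increments the timer by $1$ modulo $3$ and each cell updates three times per period, then every cell visits all three timer values, contradicting the stipulation that \textbf{E}-cells range over $\{1,2\}$ and \textbf{O}-cells over $\{2,3\}$; and a one-cell sideways slide moves the simulated states from even to odd positions, so roles cannot stay tied to spatial parity---which was precisely what you relied on to disambiguate the shared value $2$. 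A shift performed only once per two simulated steps as a dedicated round also forces a cell to play the same role twice in a row on a two-valued timer, so a uniform local rule cannot tell whether it should accumulate again or shift.

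The paper resolves this tension by abandoning fixed \textbf{E}/\textbf{O} roles altogether: every cell cycles through all three timer values and the timer value \emph{is} the role. The timer transitions are $(2,1,2)\mapsto 3$, $(3,2,3)\mapsto 1$, $(1,3,1)\mapsto 2$, where stage $1$ accumulates the $\oplus$-sum of both neighbours, stage $2$ applies $f$ after removing its own state with $\ominus$, and stage $3$ copies the state of the right neighbour; a cell is active only when both neighbours carry the matching timer value, so no two neighbouring cells are simultaneously active, invariant histories follow from \cref{lem:causality}, and no ambiguous timer value ever arises. The copy stage makes the landscape drift one cell per simulated step (not per two steps), and per simulated step one parity class updates twice while the other updates once, giving three per-cell updates per two steps of $A$ on the explicit encoding $c_0'(2i)=c_0(i)2$, $c_0'(2i+1)=c_0(i)1$---the same $1.5$-linear count you reach, but backed by a rule one can actually check. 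Your instinct that a naive ``$+1$'' timer flattens and deadlocks the profile is correct, and it is exactly the specific transition set above that avoids it; so the gap is not the final time-parameter bookkeeping but the missing construction itself, and the parity-fixed, overlapping-alphabet design would have to be replaced by something like this rotating-role rule before your invariants can even be formulated, let alone verified.
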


\begin{proof}
    The high-level idea is to improve the construction from \cref{prop:mountain-valley} by allowing \textbf{E} and \textbf{O} nodes to shift in space, which allows us to eliminate one step of our timer (at the cost of slowing down the simulation). Let us consider the mountain-valley landscape as in \Cref{ex:mountain-valley}, which we visualized as a zigzag line in space-time. Instead of using $2$ states, we will now use a tertiary timer to encode ``mountains'' and ``valleys'', and disallow nodes with equal values to be neighbors in our initial configuration:
    \begin{center}
       $(2, 1, 2) \mapsto 3 \quad\quad (3, 2, 3) \mapsto 1 \quad\quad (1, 3, 1) \mapsto 2$
    \end{center}
    The resulting cellular automaton will still have a mountain-valley structure if launched on periodic configuration $212121212$  (see \Cref{fig:tertiary-wave}). Now let $\mathbb{B} = (S \times \{1, 2, 3\}, N, f', \Z)$ be an asynchronous cellular automaton with a timer as described above, and where each state additionally carries simulated state $s \in S$. We encode the transitions of $A$ into $\mathbb{B}$ using the following update function $f'$ ($\oplus / \ominus$ denotes addition and subtraction mod $|S|$):
    \begin{align*}
        f'(s_{-1} 2, s_{0} 1, s_{1} 2) & = (s_{-1} \oplus s_{1})3, \\
        f'(s_{-1} 3, s_{0} 2, s_{1} 3) & = f(s_{-1} \ominus s_{0}, s_{0}, s_{1}, \ominus s_{0})1, \\
        f'(s_{-1} 1, s_{0} 3, s_{1} 1) & = s_{1}2.
    \end{align*}
    This execution paradigm can be seen as a mountain-valley landscape where each node has additionally a stage indicator. The stages (coinciding with timer values!) operate as follows:
    \begin{description}
        \item[Stage 1.] Compute the $\oplus$ sum of both of your neighbors and move to Stage 3
        \item[Stage 2.] Compute transition $f$ based on accumulated information and move to Stage 1
        \item[Stage 3.] Copy state from your right neighbor and move to Stage 2
    \end{description}
    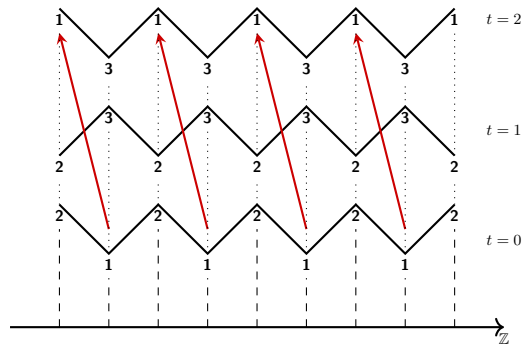
\begin{figure}[ht]
    \centering
    \begin{tikzpicture}[font=\sffamily\small, scale=0.6, transform shape]
    
    \draw[->, thick] (-2,0.5) -- (8,0.5) node[below] {$\mathbb{Z}$};

    \draw[black, thick] (-1,3) -- (0,2) -- (1, 3) -- (2, 2) -- (3, 3) -- (4, 2) -- (5, 3) -- (6, 2) -- (7, 3);
    \draw[black, thick] (-1, 4) -- (0,5) -- (1, 4) -- (2, 5) -- (3, 4) -- (4, 5) -- (5, 4) -- (6, 5) -- (7, 4);
    \draw[black, thick] (-1,7) -- (0,6) -- (1, 7) -- (2, 6) -- (3, 7) -- (4, 6) -- (5, 7) -- (6, 6) -- (7, 7);
    \draw[color=darkred, thick, ->, >=stealth,] (0,2.5)--(-1,6.5);
    \draw[color=darkred, thick, ->, >=stealth,] (2,2.5)--(1,6.5);
    \draw[color=darkred, thick, ->, >=stealth,] (4,2.5)--(3,6.5);
    \draw[color=darkred, thick, ->, >=stealth,] (6,2.5)--(5,6.5);
    
    \draw[dotted] (-1, 3) -- (-1, 3.5);
    \draw[dotted] (0, 2) -- (0, 4.5);
    \draw[dotted] (1, 3) -- (1, 3.5);
    \draw[dotted] (2, 2) -- (2, 4.5);
    \draw[dotted] (3, 3) -- (3, 3.5);
    \draw[dotted] (4, 2) -- (4, 4.5);
    \draw[dotted] (5, 3) -- (5, 3.5);
    \draw[dotted] (6, 2) -- (6, 4.5);
    \draw[dotted] (7, 3) -- (7, 3.5);

    \draw[dotted] (-1, 4) -- (-1, 6.5);
    \draw[dotted] (0, 5) -- (0, 5.5);
    \draw[dotted] (1, 4) -- (1, 6.5);
    \draw[dotted] (2, 5) -- (2, 5.5);
    \draw[dotted] (3, 4) -- (3, 6.5);
    \draw[dotted] (4, 5) -- (4, 5.5);
    \draw[dotted] (5, 4) -- (5, 6.5);
    \draw[dotted] (6, 5) -- (6, 5.5);
    \draw[dotted] (7, 4) -- (7, 6.5);

    \draw[dashed] (-1, 0.5) -- (-1, 2.5);
    \draw[dashed] (0, 0.5) -- (0, 1.5);
    \draw[dashed] (1, 0.5) -- (1, 2.5);
    \draw[dashed] (2, 0.5) -- (2, 1.5);
    \draw[dashed] (3, 0.5) -- (3, 2.5);
    \draw[dashed] (4, 0.5) -- (4, 1.5);
    \draw[dashed] (5, 0.5) -- (5, 2.5);
    \draw[dashed] (6, 0.5) -- (6, 1.5);
    \draw[dashed] (7, 0.5) -- (7, 2.5);

    \node[below] at (0,2) {\textbf{1}};
    \node[below] at (2,2) {\textbf{1}};
    \node[below] at (4,2) {\textbf{1}};
    \node[below] at (6,2) {\textbf{1}};

    \node[below] at (-1,3) {\textbf{2}};
    \node[below] at (1,3) {\textbf{2}};
    \node[below] at (3,3) {\textbf{2}};
    \node[below] at (5,3) {\textbf{2}};
    \node[below] at (7,3) {\textbf{2}};
    \node[below] at (8,2.5) {$t = 0$};

    \node[below] at (0,5) {\textbf{3}};
    \node[below] at (2,5) {\textbf{3}};
    \node[below] at (4,5) {\textbf{3}};
    \node[below] at (6,5) {\textbf{3}};

    \node[below] at (-1,4) {\textbf{2}};
    \node[below] at (1,4) {\textbf{2}};
    \node[below] at (3,4) {\textbf{2}};
    \node[below] at (5,4) {\textbf{2}};
    \node[below] at (7,4) {\textbf{2}};
    \node[below] at (8,4.75) {$t = 1$};

    \node[below] at (0,6) {\textbf{3}};
    \node[below] at (2,6) {\textbf{3}};
    \node[below] at (4,6) {\textbf{3}};
    \node[below] at (6,6) {\textbf{3}};

    \node[below] at (-1,7) {\textbf{1}};
    \node[below] at (1,7) {\textbf{1}};
    \node[below] at (3,7) {\textbf{1}};
    \node[below] at (5,7) {\textbf{1}};
    \node[below] at (7,7) {\textbf{1}};
    \node[below] at (8,7) {$t = 2$};
    
    \end{tikzpicture}
    \caption{Two synchronous updates of mountain valley landscape represented using a tertiary alternating timer $t \in \{1, 2, 3\}$. Notice that after $3$ time steps, the landscape shifts one cell to the left.}
    \label{fig:tertiary-wave}
    \end{figure}
    Now consider some initial configuration $c_0 = \ldots0100110\ldots$ of $A$ which we encode in $\mathbb{B}$ as $c'_0$ where $c_0'(2i) = c_0(i)2$ and $c_0'(2i+1) = c_0(i)1$. Observe that after two updates of all cells with odd coordinates and a single update of all cells with even coordinates, we obtain the encoded configuration $G^A(c_0)$ where the updated states are now placed at positions $2i$ and $2i-1$. Notice, $\mathbb{B}$ cannot have two neighboring transitions active at the same time, hence by \Cref{lem:causality}, it has invariant histories. By taking the unpacking map $o$ with $o(c_0)(i) = (c_0'(2i), c_0'(2i+1))$ and translation $\tau$ with $\tau(c)(i) = c(i-1)$ we observe:
    \begin{equation*}
        {(G^A)}^2 \prec o^{-1} \circ {(H^B)}^3  \circ o \circ \tau^2
    \end{equation*}
    which corresponds to $1.5$-linear time simulation.
\end{proof}

\section{Universality Frontier}\label{sec:universality}

Universality is one of the central concepts in computability theory, originally defined in the context of Turing machines \cite{turing}. For cellular automata there exist two distinct notions of universality: standard universality, i.e., the ability to ``simulate'' any Turing machine and \emph{intrinsic universality}, i.e., the ability to ``simulate'' any cellular automaton (a strictly stronger notion \cite{non-universal}). Since, contrary to Turing machines, cellular automata are both infinite and non-halting by design, a satisfactory definition for universality has always been challenging \cite{Ollinger-chapter}. We will use the widely accepted definition by Durand \cite{durand-gol}, which relies on \emph{ultimately periodic configurations} to perform computations (see \appendixref{sec:app-universality} for more details). The exact definition for ultimate periodicity varies across literature \cite{durand-gol, Ollinger-chapter, morita-book}, hence we present a version general enough to encompass most of them--in particular, the definition below allows to embed $d$-dimensional configurations into $(d+1)$-dimensional ones, leading to more generalizable universality constructions.

\begin{definition}[Ultimately Periodic Configuration]\label{def:ultimately-periodic}
    A configuration $c \in S^{\Z^d}$ is called ultimately periodic if there exists a $k \in \N$ and a set of scaled standard basis vectors $\vec{r}_1, \vec{r}_2, \ldots, \vec{r}_{d} \in \N^{d}$ such that for any $j$ and any cell $i = (i_1, \ldots, i_d) \in \Z^d$ with $i_j > k$ it holds $c(i + \vec{r_j}) = c(i)$.
\end{definition}

\subparagraph*{Synchronous frontier} Determining the ``computation boundary'' between universality and non-universality was originally considered by Shannon \cite{Shannon}, who initiated the search for the smallest universal Turing machine \cite{small-utms}. These questions also made their way to cellular automata, with the first $2$-dimensional universal cellular automaton being constructed by von Neumann \cite{vonneumann}. Banks \cite{Banks} has improved the number of states to just $2$ by showing that embedding infinite circuits inside a cellular space $\Z^2$ is sufficient for universality. Later, Minsky showed that Conway's \gol{} cellular automaton is universal \cite{gol}, and Cook \cite{rule-110} has constructed a universality proof for \rle{110}, confirming the existence of a $2$-state first neighbors automaton. On the other hand, Wolfram \cite{Wolfram2002} extensively studied elementary cellular automata and concluded that no one-way $2$-state universal automata exist (similarly, none exist with just one state or empty neighborhood). He also proposed a candidate $3$-state one-way cellular automaton; however, its universality is yet to be proven. The current ``universality frontier'' for synchronous CAs is given in \Cref{fig:asynchronous-frontier} (left).

Although an explicit definition of asynchronous universality is rarely presented in the literature, all important constructions of universal ACAs essentially assume the slight adjustment of the synchronous definition to the asynchronous scenario \cite{adachi, schneider, lee, lee-5}. This is consistent with our expectations--by using any simulation technique following \Cref{def:simulation} on any universal CA, we obtain a universal ACA. Below we present our formalization attempt but, like any other such attempt, it should be taken with a grain of salt. 

We call a configuration $c \in S^{\Z^d}$ \emph{computable} if the function $c \colon \Z^d \longrightarrow S$ is computable, and denote with $\lceil c \rceil$ the description of some Turing machine computing $c$ by returning an element of $S$ for every $d$-dimensional input vector. Similarly, for any description $r$ we denote with $\widetilde{r}$ the underlying configuration. We set $\bigl\lceil S^{\Z^d} \bigl\rceil$ as the space of all computable descriptions of $S^{\Z^d}$, and observe that it is closed under finite (a)synchronous evolution with any $G^\zeta$.

\begin{definition}[Asynchronous Universality]\label{def:asynchronous-universality}
    An ACA $\mathbb{A} = (S, N, f, \Z^d)$ is called universal if for some partial computable universal function $U \colon {\{0, 1\}}^\ast \longrightarrow {\{0, 1\}}^\ast \cup \{\bot\}$ we can find an ultimately periodic configuration $c^\infty$ (compiler) and total computable functions $e \colon {\{0, 1 \}}^\ast \longrightarrow \bigl\lceil S^{\Z^d} \bigl\rceil$ (encoder) and $d \colon \bigl\lceil S^{\Z^d} \bigl\rceil \longrightarrow {\{0, 1 \}}^\ast$ (decoder), a pattern $p \colon D \longrightarrow S$ with $D \subset \N \times \Z^{d}$ finite (halting condition) such that for every input $w \in {\{0, 1\}}^\ast$:
     \begin{enumerate}
         \item The initial configuration $c_0 = \widetilde{e(w)}$ is a finite perturbation of $c^\infty$, i.e. $\| c_0 - c^\infty \| < \infty$ where  $\| \cdot \|$ is the $l_0$ norm on  $S^{\Z^d}$.
         \item If $U(w)$ is defined, then for any update schedule $\zeta$ there exists the earliest (asynchronous) time stamp $t \in \N$ such that $p$ appeared in the observed slice of update history $h$ under $\zeta$, and the current configuration $c^\zeta_t$ satisfies $d(\lceil c^\zeta_t \rceil) = U(w)$.
        \item If $U(w) = \bot$ then $p$ will never appear in the update history $h$ under any choice of $\zeta$.
     \end{enumerate}
\end{definition}

\subparagraph*{Asynchronous frontier} Lee \cite{lee-5} kicked off the search for universal asynchronous automata by constructing a $5$-state universal von Neumann cellular automaton, which was later improved to $4$ states by Lee \cite{lee} and to $3$ states for Moore's neighborhood by Schneider \cite{schneider}. By allowing the neighborhood $N$ to be of radius $2$ (non-standard case), a $2$-state universal automaton was constructed by Adachi \cite{adachi}. On the other hand, by simulating \rle{110} using improved marching soldiers \cite{worsch-intrinsically, improved-soldiers}, an $8$-state first neighbors universal automaton can be inferred. We improve the frontier by constructing a $3$-state von Neumann and a $6$-state first neighbors universal automata (presented in \Cref{sec:small-aca}). The updated asynchronous universality frontier is given in \Cref{fig:asynchronous-frontier} (right).

\subparagraph*{One-way automata} Observe that all simulations mentioned in \Cref{sec:simulation} require the neighborhood $N$ to be symmetric, and in fact, no general simulation technique which is applicable to one-way neighborhoods is known. In \Cref{sec:non-universal} we give an explanation for this by proving that one-way ACAs are, in a sense ``equivalent'' to finite-state machines. This stands in sharp contrast to the synchronous case, where a simple procedure is known to embed any first neighbors automata into one-way automata (see \appendixref{app:examples}). Thus, we can make the following--somewhat unexpected--observation: the choice of asynchronous setting directly affects the computational capabilities of one-way ACAs. We summarize updated results on the computational power of ACAs under various asynchronous settings in \Cref{fig:one-way-universality}. 

\begin{figure}[thb]
    \centering
    \begin{tikzpicture}[scale=0.9, transform shape]
        \draw[->, thick] (0,0) -- (11,0) node[right, align=center, font=\scriptsize] {\textsf{computational}\\ \textsf{power}};
        \draw[fill=darkred] (10,0) circle[radius=1mm]; 
        \draw[] (10,0) node[above, align=center, font=\scriptsize] {``best-case'' \cite{aca-survey} \\  $\zeta$ is self-chosen \\}; 
        \draw[fill=darkred] (7.5,0) circle[radius=1mm]; 
        \draw[] (7.5,0) node[above, align=center, font=\scriptsize] {``classical'' \cite{ollinger-4-states} \\ $\zeta$ is synchronous \\}; 
        \draw[fill=darkred] (1,0) circle[radius=1mm]; 
        \draw[] (1,0) node[above, align=center, font=\scriptsize] {``worst-case'' $\bigstar$ \\ $\zeta$ is adversarial \\}; 
        \draw[] (4.25,0) node[above, align=center, font=\scriptsize] {``randomized'' \\ $\zeta$ is $\alpha$-asynchronous \\ \small \textbf{?}}; 
        \draw (1.25,-0.1)--(1.25,0.1);
        \draw (1.25, 0) node[below, align=center, font=\scriptsize] { \\ \textsf{ Finite-state} \\ \textsf{machines}};
        \draw (6,-0.1)--(6,0.1);
        \draw (6, 0) node[below, align=center, font=\scriptsize] { \\ \textsf{Turing} \\ \textsf{machines}};
    \end{tikzpicture}
    \caption{Computational power of one-way ACAs varies with respect to the choice for asynchronous setting. Notice that there are instances of intrinsically universal synchronous one-way CAs, hence they can be seen as more powerful than Turing machines.}
    \label{fig:one-way-universality}
\end{figure}
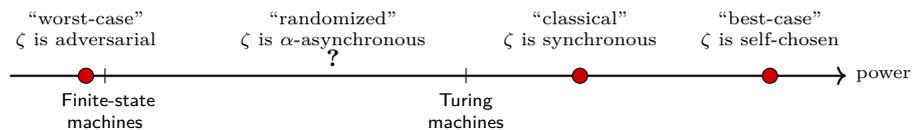

\begin{openquestion}
    What is the computational power of $\alpha$-asynchronous one-way ACAs?
\end{openquestion}

\section{Non-Universality of One-Way Automata}\label{sec:non-universal}

\begin{observation}\label{clm:non-universal}
    Asynchronous one-way cellular automata are computationally equivalent to finite-state machines, hence are not universal.
\end{observation}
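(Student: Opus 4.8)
The plan is to establish the claimed equivalence by two inclusions, with non-universality an immediate corollary of the harder one. The easy direction---that every finite-state machine is simulated by a one-way ACA---needs only a single distinguished cell carrying the FSM's control state on a quiescent background while reading its one neighbor trivially; so one-way ACAs are at least as powerful as FSMs. The real content is the converse: a one-way ACA is \emph{no more} powerful than an FSM, which yields non-universality at once, since a universal partial function has undecidable (indeed $\Sigma^0_1$-complete) domain, whereas the ``does the computation halt'' predicate of an FSM-equivalent device is decidable.

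For the converse I would exploit that a one-way neighborhood makes information flow in a single direction: if cell $i$ reads $\{i-1,i\}$ (the reflected case is symmetric), then under any update schedule the evolution of a left half-line $(-\infty,j]$ is independent of all cells to its right, so the causal past of every cell is a half-line. This lets one view the whole computation as a cascade of finite transducers laid out along the direction of information flow: cell $i$ is an automaton over $S$ driven by the sequence of values its upstream neighbor $i-1$ assumes over time, and emitting its own such sequence---its update history. I would first reduce the infinite upstream tail to a finite object: since the compiler $c^\infty$ is fixed and ultimately periodic, outside a finite window the configuration is the periodic pattern and the input $w$ perturbs only a finite region, so the family of value-streams that can enter any finite window from upstream is generated by a finite automaton (the finite quotient of the periodic region composed with the finitely many transducer-cells of the perturbation). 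Then, since composing finitely many finite transducers preserves finite-stateness and the update history collapses value-equality, the questions ``does the halting pattern $p$ appear under some $\zeta$'' and ``is $p$ forced under every $\zeta$''---where $p$ spans only finitely many cells and finitely many recorded time steps---reduce to reachability and universality questions on a finite product automaton, hence are decidable in $w$ (and with more care regular in $w$, giving the full equivalence). Using that by \Cref{def:asynchronous-universality} the pattern $p$ occurs under some schedule iff under every schedule iff $U(w)$ is defined, a universal one-way ACA would therefore decide the domain of a universal function, a contradiction.

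The main obstacle is controlling the adversarial interleaving at the one-directional interface between a cell and its upstream neighbor. Unlike FANs (\Cref{prop:flip-invariant}) or the ACAs of \Cref{lem:causality}, a one-way ACA need \emph{not} have invariant histories---the adversary can make a single cell's update history fail to be even eventually periodic---so one cannot speak of ``the'' update history and must reason about the whole set of possible update histories simultaneously, showing that its restriction to any finite spatio-temporal window is regular. Concretely, one must show that quotienting the nondeterministic transducer cascade by the value-equality defining update histories leaves only finitely many relevant internal configurations, i.e.\ that no clever schedule can smuggle unbounded information across the one-way interface; this is the heart of the argument. A secondary point of care is uniformity across the two ways the periodic part of $c^\infty$ can sit relative to the direction of information flow: when it is upstream the reduction above applies directly, and when it is downstream one instead uses that, the compiler being fixed, the $w$-dependent data stays confined to a finite region that merely propagates downstream, so the relevant predicate remains decidable in $w$.
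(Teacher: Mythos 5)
There is a genuine gap: the step you yourself call ``the heart of the argument''---that quotienting the nondeterministic transducer cascade by value-equality leaves only finitely many relevant internal configurations, i.e.\ that the set of value-streams crossing the one-way interface under \emph{all} schedules is regular (uniformly in $w$)---is exactly what is never proved, and it is far from routine: the upstream periodic half-line is an infinite composition of asynchronously scheduled finite transducers, and you give no argument (fixed-point, stabilization, or otherwise) that its set of possible boundary streams is finitely generated. Moreover, this whole difficulty is self-inflicted. Since you have already observed that, by \Cref{def:asynchronous-universality}, the halting pattern appears under some schedule iff under every schedule iff $U(w)$ is defined, it suffices to exhibit \emph{one} schedule of your own choosing under which the dynamics collapses to a finite-state computation; you never need to control the adversarial interleaving at all. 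This is what the paper's proof of \Cref{lem:non-universal} does: repeatedly updating a cell against a frozen left neighbor drives it into a \emph{pseudo-fixed-point} $\pf(c(i-1),c(i))$, a recurrent state depending only on that pair, so a left-to-right sweep (pump the left periodic region to its pseudo-fixed-point $x_l$, then pump each input cell $j$ to $x_j=\pf(x_{j-1},c_0(j))$, then update the right half-line once, and repeat) makes the values $x_{l+1},\dots,x_{r-1}$ exactly the run of a DFA $\delta(q,s)=\pf(q,s)$ on the encoded input $\hat w$, while the right periodic region only ever receives the single symbol $x_{r-1}$. Under that schedule the evolution carries at most as much information about $\hat w$ as a DFA run on $\hat w\#$, which with the iff above already contradicts universality.

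A secondary issue: your ``easy direction'' is also underspecified. A single distinguished cell with one fixed upstream neighbor cannot scan a statically written input $w$; the paper's \Cref{lem:semiautomaton} instead writes $\hat w$ on the tape and lets the control state sweep rightward across it (each cell overwrites its letter with $\delta(q,s')$ read from its left neighbor), which works because at most one cell is ever active, giving invariant histories. This direction is easily repaired, but as stated it does not simulate an FSM on arbitrary inputs. In summary, your overall decomposition (two inclusions, non-universality from the hard one plus undecidability of the domain of $U$) matches the paper's, but the hard inclusion is not established: the missing idea is that one may \emph{construct} the schedule rather than quantify over all of them, together with the pseudo-fixed-point device that makes the constructed schedule finite-state.
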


Since there are many universal synchronous one-way automata (see \Cref{fig:asynchronous-frontier}), as an immediate consequence, we get the following result.

\begin{theorem}\label{thm:one-way}
    Asynchronous one-way cellular automata are not capable of simulating synchronous one-way cellular automata.
\end{theorem}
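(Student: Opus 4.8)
\subparagraph*{Proof plan.}
The plan is to derive \Cref{thm:one-way} from \Cref{clm:non-universal} by combining the transitivity of invariant simulation with the existence of a universal synchronous one-way automaton. First I would fix a concrete witness: applying the standard first-neighbors-to-one-way embedding (recalled in \appendixref{app:examples}) to Cook's universal $2$-state first-neighbors \rle{110} produces a synchronous one-way CA $A^\ast$ that is universal in the sense of Durand. (Any explicit universal synchronous one-way CA drawn from \Cref{fig:asynchronous-frontier} would serve equally well; we only need one.) The theorem then amounts to showing that no asynchronous one-way CA can simulate this particular $A^\ast$.

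Next I would argue by contradiction. Suppose some asynchronous one-way CA $\mathbb{B}$ simulates $A^\ast$; by \Cref{def:invariant} there are an unpacking map $o_m$, a translation $\tau_v$, and positive integers $k,l$ with ${(G^{A^\ast})}^{k} \prec o_m^{-1} \circ {(H^B)}^{l} \circ o_m \circ \tau_v$. I would then show that $\mathbb{B}$ is itself universal in the sense of \Cref{def:asynchronous-universality} by transporting the universality data $(U, c^\infty, e, d, p)$ of $A^\ast$ through the simulation: the compiler is obtained from $c^\infty$ by applying the cellwise direct-simulation encoding and the block-unpacking $o_m$ (composed with $\tau_v$), and it stays ultimately periodic because $o_m$ only regroups cells into fixed-size blocks --- a one-way neighborhood remains one-way under blocking --- and $\tau_v$ is a translation; the encoder $e$ is composed with the same computable maps and still yields finite perturbations of the compiler; the decoder first inverts $o_m$ and the shift, then applies $d$; and the finite halting pattern $p \subset \N \times \Z$ pulls back, under the temporal dilation $t \mapsto (l/k)\,t$ and the spatial blocking by $m$, to a finite pattern which --- since every encoded configuration of $\mathbb{B}$ embeds a FAN and hence has invariant histories --- is detectable in the update history $h$ independently of the schedule $\zeta$. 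Conditions (2)--(3) of \Cref{def:asynchronous-universality} for $\mathbb{B}$ follow from those for $A^\ast$ via the $\prec$ relation above; this is precisely the principle, already noted in \Cref{sec:simulation}, that a simulation of a universal CA yields a universal ACA. But \Cref{clm:non-universal} states that $\mathbb{B}$, being an asynchronous one-way CA, is computationally equivalent to a finite-state machine and therefore \emph{not} universal --- a contradiction. Hence no asynchronous one-way CA simulates $A^\ast$, proving the theorem.

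I expect the main obstacle to be the middle step: verifying carefully that transporting Durand-style universality through an \emph{invariant} simulation --- which mixes spatial blocking with temporal dilation, and whose halting condition is phrased in terms of the $\zeta$-independent update history rather than the ordinary space-time --- really does produce a legitimate witness of \Cref{def:asynchronous-universality}. The delicate point is that the pulled-back halting pattern must remain finite and possess a well-defined, schedule-independent first appearance in $h$; this is exactly where the transitivity engineered into \Cref{def:invariant} is used, and where a naive asynchronous simulation definition would fail. Once this ``simulation-of-universal-is-universal'' lemma is in place, \Cref{thm:one-way} is immediate from \Cref{clm:non-universal}.
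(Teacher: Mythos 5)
Your proposal is correct and follows essentially the same route as the paper, which obtains \Cref{thm:one-way} as an immediate consequence of \Cref{clm:non-universal} together with the existence of universal synchronous one-way CAs (obtained, e.g., by the one-way embedding of \rle{110}), using the principle that any simulation of a universal CA yields a universal ACA; you merely spell out the transport of the universality data $(U, c^\infty, e, d, p)$ in more detail. One small remark: the parenthetical claim that every encoded configuration of $\mathbb{B}$ embeds a FAN is unnecessary and not justified for an arbitrary $\mathbb{B}$ --- schedule-independence of the update history is already guaranteed because \Cref{def:invariant} requires $H^B$ to be defined, i.e., invariant histories on the relevant configurations.
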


We prove \cref{thm:one-way,clm:non-universal} by constructing the required embeddings. \Cref{lem:non-universal} demonstrates that a finite-state machine can replace the evolution of a one-way asynchronous automaton, while \Cref{lem:semiautomaton} shows the converse by embedding an arbitrary finite-state transducer into an asynchronous one-way cellular automaton (see \appendixref{app:ommited-proofs}). 

\begin{lemma}\label{lem:non-universal}
    Given a one-way asynchronous cellular automaton $\mathbb{A} = (S, (-1, 0), f, \Z)$ we can find an update schedule $\zeta$ such that the computation of $\mathbb{A}$ on any input $w$ can be simulated by a semi-automaton $B = (S, S \cup \{ \# \}, \delta)$ with $S$ states and alphabet $S \cup \{\#\}$.
\end{lemma}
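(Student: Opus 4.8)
The goal is to show that a one-way ACA $\mathbb{A} = (S, (-1,0), f, \Z)$ on an input $w$ can be simulated by a finite semi-automaton $B = (S, S \cup \{\#\}, \delta)$. The key structural observation is that in a one-way neighborhood, the state of cell $i$ depends only on cells $i-1$ and $i$, so information flows strictly rightward. This means we can choose a particularly convenient update schedule $\zeta$ that processes the cellular array in a ``wavefront'' manner: the adversary is allowed to pick $\zeta$, so we exploit this freedom. The plan is to choose $\zeta$ so that cell $i$ is never updated until cell $i-1$ has ``stabilized'' to its final value (if it has one) — or, more carefully, so that the sequence of states cell $i-1$ passes through is completely determined before cell $i$ reads it.

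\textbf{Step 1: normal form for the update schedule.} First I would argue that for a one-way ACA, we may without loss of generality restrict to a schedule that updates cells in left-to-right sweeps: update cell $j_1 < j_2 < \dots$ in order, each one repeatedly until it no longer changes, before moving to the next cell. Since the initial configuration $c_0 = \widetilde{e(w)}$ is a finite perturbation of an ultimately periodic $c^\infty$, the perturbed region is some finite interval, and to the left of it everything is periodic and (one can check) reaches a fixed point quickly; so there is a leftmost cell whose future behaviour is fully determined. From there the wavefront propagates right. The essential point is that because cell $i$ only reads cell $i-1$ and itself, once cell $i-1$'s full update history $h_\bullet(i-1)$ (a finite sequence of states, by one-wayness and the invariant-history discussion) is fixed, cell $i$'s behaviour is a deterministic function of that sequence and of $c_0(i)$.

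\textbf{Step 2: encoding the history as a string and defining $\delta$.} The semi-automaton $B$ reads, for each cell $i$ in turn, the string consisting of the successive states of cell $i-1$ (drawn from $S$), terminated by the separator $\#$. Its state tracks the ``current simulated state'' of cell $i$: starting from $c_0(i)$, each time it reads a symbol $s \in S$ (a new state of the left neighbour) it applies $f$ once — since $f$ depends only on $(s_{\text{left}}, s_{\text{self}})$, this is well-defined, and it may need to iterate $f$ to a fixed point on that input before the next symbol, which is again a finite computation folded into $\delta$. On reading $\#$, $B$ outputs (or emits) the final state of cell $i$, which becomes the input stream fed forward for cell $i+1$. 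Thus $B$ transduces the history of cell $i-1$ into the history of cell $i$; composing these sweeps reconstructs the whole space-time under $\zeta$.

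\textbf{Step 3: correctness and the halting pattern.} I would then verify that this simulation is faithful in the sense required by \cref{def:asynchronous-universality}: the finite pattern $p$ appears in the update history of $\mathbb{A}$ under $\zeta$ if and only if $B$, run on the (computable, ultimately periodic) encoding of $c_0$, produces a corresponding configuration — and since $B$ has only $|S|$ states and its ``tape'' of per-cell histories is generated lazily left-to-right, the whole process is equivalent to running a fixed finite-state machine, which cannot be universal. The main obstacle I anticipate is \textbf{Step 1}: making the ``wavefront'' schedule rigorous when the update history of a left cell is infinite (cell $i-1$ never stabilizes). In that case cell $i$ must be interleaved with cell $i-1$, and one has to argue that the resulting behaviour of cell $i$ is still a well-defined function of the (now infinite) history stream, and that reading an $\omega$-word over $S \cup \{\#\}$ with a finite-state device still captures exactly whether the finite pattern $p$ eventually appears — i.e., that unboundedness in time does not smuggle in unbounded memory. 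This is where invariant histories (guaranteed for one-way ACAs with no two adjacent active cells, or more generally via the monotonicity/commutativity machinery) must be invoked to ensure the history stream is schedule-independent and hence well-defined as the input to $B$.
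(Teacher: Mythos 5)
Your plan stalls exactly where you flag it, and that obstacle is not a side issue — it is the whole content of the lemma. A ``wavefront'' schedule that updates each cell ``repeatedly until it no longer changes'' is in general unavailable: when a single cell is updated against a frozen left neighbour its orbit $c(i),\, f(c(i-1),c(i)),\, f(c(i-1),f(c(i-1),c(i))),\ldots$ is only eventually \emph{periodic}, and if the period exceeds one the cell never stabilizes; a schedule that stalls on such a cell forever is not even a legal update schedule, since every cell must be updated infinitely often. The paper's proof replaces ``stabilization'' by a \emph{pseudo-fixed-point}: by pigeonhole some state recurs infinitely often in this orbit (write it $\pf(c(i-1),c(i))$), and likewise for the leftmost relevant cell $l$ under updates of the whole left half-line. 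The schedule waits for that recurring state, parks the cell there, moves one cell to the right, updates the right half-line once, and then repeats the entire sweep indefinitely — which both legalizes the schedule and guarantees that at the start of every sweep the cells carrying $\hat w$ again hold the same values $x_l,\ldots,x_{r-1}$, so the periodic right region only ever reads the single state $x_{r-1}$. Your fallback for the non-stabilizing case (interleave the cells and feed an $\omega$-word, invoking invariant histories) does not rescue this: a general one-way ACA has no invariant-history guarantee (the neighborhood is not symmetric, adjacent cells can be simultaneously active, so \cref{lem:causality} does not apply), and none is needed, since the lemma only requires exhibiting one schedule; your claim that the left periodic region ``reaches a fixed point quickly'' is likewise unsupported and is handled in the paper by the same pseudo-fixed-point device.

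Independently of Step 1, your Steps 2--3 do not bound the information flow, which is what non-universality actually requires. Modelling each cell as a finite-state transducer acting on the history stream of its left neighbour, composed cell by cell, is just a re-description of how \emph{any} one-way CA computes — this iterated transduction is precisely the mechanism by which synchronous one-way CAs simulate first-neighbours CAs and are universal (see \cref{ex:one-way}) — so ``each step is finite-state and generated lazily left-to-right'' cannot imply that the whole process is one fixed finite-state machine. What the paper's schedule achieves, and your proposal never establishes, is a genuine compression: the states $x_{l+1},\ldots,x_{r-1}$ are computed by a single left-to-right pass of the semi-automaton with $\delta(q,s)=\pf(q,s)$ over $\hat w$, only the single value $x_{r-1}\in S$ ever crosses into the right periodic region, and one additional transition $\delta(x_{r-1},\#)$ summarizes the joint evolution of $x_{r-1}$ with $p_R$. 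That bounded-communication argument is the missing idea; without it the reduction to a semi-automaton over $S\cup\{\#\}$ does not go through.
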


\begin{proof}
    We assume that we are given an ultimately periodic initial configuration $c_0$ with the input $w$ encoded into it. Since $\mathbb{A}$ is $1$-dimensional we have
    \begin{equation*}
        c_0 = \ldots p_Lp_Lp_L \hat{w} p_Rp_Rp_R\ldots
    \end{equation*}
    for some $p_L$ and $p_R$ independent of $w$, and $\hat{w}$, an encoding of $w$. Now we construct an update schedule $\zeta$ and a finite-state machine $B$ based on $p_L$ and $p_R$ and $\mathbb{A}$ such that $B$ simulates $\mathbb{A}$. Let $c$ be a configuration, $G$ (asynchronous) global transition function, $D \subseteq \Z$ be an update set, then we say that the state $s$ is a \emph{pseudo-fixed-point} of cell $i$ under $D$ if $s$ reappears infinitely often in the range of $G_D$, i.e., $|\{ k \in \N \ | \ G_D^k(c)(i) = s \}| = \infty$.

    We claim that for any cell $i$, configuration $c$ and update set $D$ there exists a pseudo-fixed-point $s$. Indeed, since $|S|$ is finite but the orbit of $i$ is infinite, at least one element $s \in S$ has to occur infinitely often. Now, consider the scenario where the update set $D = \{i\}$. We can determine the pseudo-fixed-point of $i$ on $c$ by considering the sequence
    \begin{equation*}
        c(i), \ f(c(i-1), c(i)), \ f(c(i-1), f(c(i-1), c(i))), \ \ldots
    \end{equation*}
    It is easy to see that if some state $s'$ appears twice in this sequence, it will appear infinitely often; hence it will be a pseudo-fixed-point. Since this simple pseudo-fixed-point only depends on the states $c(i-1)$ and $c(i)$ we will denote it by $\pf(c(i-1), c(i))$. Using this notion, we define the transition function $\delta$ for automaton $B$ and $q, s \in S$ as $\delta(q, s) = \pf(q, s)$.
    
    Let $l+1$ be the first cell that contains an element of $\hat{w}$ in $c_0$ and let $r-1$ be the last cell containing an element from $\hat{w}$ (see \Cref{fig:fixpoint}). Now set $x_l$ to be a pseudo-fixed-point of $l$ on configuration $c_0$ and update set $D = \{-\infty,\ldots,  l\}$ and for $j \in \{l+1, \ldots, r-1\}$ let $x_j = \pf(x_{j-1}, c_0(j))$. Observe that if we choose the initial state $x_l$ for $B$ and execute it on the first $j-l$ letters of $\hat{w}$, then $B$ will be in the state $x_j$! We construct the update schedule $\zeta$ by repeating steps 1--3 shown in \Cref{fig:fixpoint} indefinitely (every cell is updated infinitely often):
    \begin{enumerate}
        \item Repeatedly update the set $\{-\infty, \ldots, l\}$ until cell $l$ is in state $x_l$ 
        \item Repeatedly update $\{l+1\}$ until cell $l+1$ is in state $x_{l+1}$, then repeatedly update $\{l+2\}$ until cell $l+2$ is in state $x_{l+2}$, and so on until you reach $x_r$
        \item Update the set $\{r,\ldots, \infty \}$ once
    \end{enumerate}
    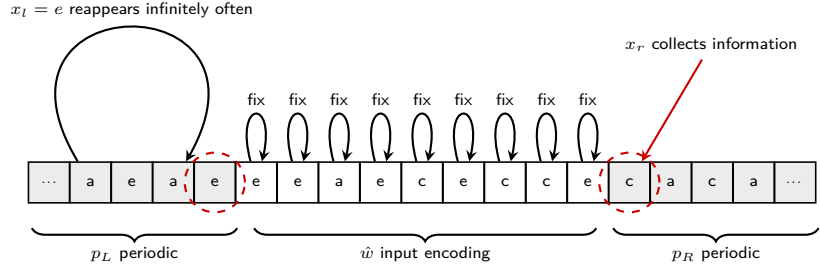
\begin{figure}[t]
    \centering
    \begin{tikzpicture}[font=\sffamily\scriptsize, scale=0.9, transform shape]
        \edef\sizetape{0.6cm}
        \tikzstyle{every path}=[semithick]
        \tikzstyle{tmtape}=[draw,minimum size=\sizetape]
        \tikzstyle{tmhead}=[arrow box,draw,minimum size=\sizetape,arrow box arrows={west:.3cm, east:0.3cm}]
        
        \begin{scope}[start chain=1 going right,node distance=-0.15mm]
            \node [on chain=1,tmtape, fill=lightgray!30] {$\scriptstyle \ldots$};
            \node [on chain=1,tmtape, fill=lightgray!30] (left1) {a};
            \node [on chain=1,tmtape, fill=lightgray!30] {e};
            \node [on chain=1,tmtape, fill=lightgray!30] (left2) {a};
            \node [on chain=1,tmtape, fill=lightgray!30] (begin){e};
            \node [on chain=1,tmtape] (1){e};
            \node [on chain=1,tmtape] (2){e};
            \node [on chain=1,tmtape] (3){a};
            \node [on chain=1,tmtape] (4){e};
            \node [on chain=1,tmtape] (5){c};
            \node [on chain=1,tmtape] (6){e};
            \node [on chain=1,tmtape] (7){c};
            \node [on chain=1,tmtape] (8){c};
            \node [on chain=1,tmtape] (9){e};
            \node [on chain=1,tmtape, fill=lightgray!30] (end){c};
            \node [on chain=1,tmtape, fill=lightgray!30] (right1){a};
            \node [on chain=1,tmtape, fill=lightgray!30] {c};
            \node [on chain=1,tmtape, fill=lightgray!30] (right2) {a};
            \node [on chain=1,tmtape, fill=lightgray!30] {$\scriptstyle \dots$};
        \end{scope}
        \draw [decorate,decoration={brace,amplitude=5pt,mirror,raise=4ex}, thick]
            (-0.25,0) -- (2.75,0) node[midway,yshift=-3em]{$p_L$ periodic};
        \draw [decorate,decoration={brace,amplitude=5pt,mirror,raise=4ex}, thick]
            (3,0) -- (8,0) node[midway,yshift=-3em]{$\hat{w}$ input encoding};
        \draw [decorate,decoration={brace,amplitude=5pt,mirror,raise=4ex}, thick]
            (8.25,0) -- (11.25,0) node[midway,yshift=-3em]{$p_R$ periodic};
            
        \node[above right = 0.8cm and -0.5cm of end] (info) {$x_r$ collects information};
        \node[circle, thick, color=darkred, draw, dashed, inner xsep=0mm, inner ysep = 0mm, fit=(begin)] {};
        \node[circle, thick, color=darkred, draw, dashed, inner xsep=0mm, inner ysep = 0mm, fit=(end)] {};

        \draw (left1) edge[bend left=4.25cm, looseness=2.35, above, thick, ->, >=stealth] node{$x_l = e$ reappears infinitely often} (left2);
        
        \draw (info) edge[color=darkred, thick, ->, >=stealth] node{} (end);
        \draw (1) edge[loop above, looseness=15, thick, ->, >=stealth] node{fix} (1);
        \draw (2) edge[loop above, looseness=15, thick, ->, >=stealth] node{fix} (2);
        \draw (3) edge[loop above, looseness=15, thick, ->, >=stealth] node{fix} (3);
        \draw (4) edge[loop above, looseness=15, thick, ->, >=stealth] node{fix} (4);
        \draw (5) edge[loop above, looseness=15, thick, ->, >=stealth] node{fix} (5);
        \draw (6) edge[loop above, looseness=15, thick, ->, >=stealth] node{fix} (6);
        \draw (7) edge[loop above, looseness=15, thick, ->, >=stealth] node{fix} (7);
        \draw (8) edge[loop above, looseness=15, thick, ->, >=stealth] node{fix} (8);
        \draw (9) edge[loop above, looseness=15, thick, ->, >=stealth] node{fix} (9);
    \end{tikzpicture}
    \caption{Evolution of $\mathbb{A}$ on $c_0$ under $\zeta$. Region $p_L$ cycles between the pseudo-fixed-point of $x_l$, and, so does every cell in $\hat{w}$, hence the cells in $p_R$ only receive constant information on $\hat{w}$.}
    \label{fig:fixpoint}
    \end{figure}
    We make the following observations about the space-time evolution of $c_0$ under $\zeta$: 
    \begin{itemize}
        \item After each iteration of steps 1--3 in $\zeta$ the states at positions $l, \ldots, r-1$ do not change and correspond to values $x_{l}, \ldots, x_{r-1}$
        \item Evolution of $p_L$ region is independent of $\hat{w}$ and $p_R$ region accesses $\hat{w}$ only through $x_{r-1}$
    \end{itemize}
    Since all states $x_{l+1}, \ldots, x_{r-1}$ are computed by $B$ on $\hat{w}$ and the update history $h$ of cells $\{r, \ldots, \infty \}$ only depends on $x_{r-1} \in S$ and $p_R$ (fixed), the right side of any configuration is only aware of bounded amount of information about $\hat{w}$. Thus, for any possible state of $x_{r-1}$, we can add a single transition $\delta(x_{r-1}, \#)$ to $B$, which represents the ``outcome'' of the joint evolution of $p_R$ and $x_{r-1}$. We conclude that the $\zeta$ evolution of $c_0$ based on the choice of $\hat{w}$ is at most as complex as the application of $B$ to $\hat{w}\#$, hence it cannot be universal.
\end{proof}

\begin{restatable}{lemma}{semiautomaton}\label{lem:semiautomaton}
    For any finite-state automaton $A =  (Q, \Sigma, \delta, q_0, F)$ there exists a one-way ACA $\mathbb{B} = (Q \times (\Sigma \cup \{ \varnothing \}) \cup \Sigma , (-1, 0), f, \mathbb{Z})$ simulating $A$. 
\end{restatable}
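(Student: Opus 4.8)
The plan is to turn the one-way flow of information into an advantage: with neighbourhood $(-1,0)$ the value computed at a cell can only depend on cells to its left, which mirrors exactly how $A$ processes a word from left to right. I would encode an input $w=w_1\cdots w_n\in\Sigma^\ast$ by putting the plain state $w_i\in\Sigma$ on cell $i$ for $1\le i\le n$ and the ``seed'' state $(q_0,\varnothing)\in Q\times(\Sigma\cup\{\varnothing\})$ on every other cell; this $c_0$ is a finite perturbation of the ultimately periodic all-$(q_0,\varnothing)$ configuration. The target invariant is that cell $i$ eventually settles into $(q_i,w_i)$, where $q_i:=\delta(q_{i-1},w_i)$ is the state of $A$ after reading $w_1\cdots w_i$, so that the run of $A$ on $w$ is spelled out cell by cell along the tape.

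The local rule $f$ has just two branches; writing $x$ for the left neighbour's state and $y$ for the cell's own state,
\begin{align*}
 f(x,y) &= (\delta(q,y),\,y) && \text{if } y\in\Sigma \text{ and } x=(q,b) \text{ with } q\in Q,\ b\in\Sigma\cup\{\varnothing\},\\
 f(x,y) &= y && \text{otherwise.}
\end{align*}
So a plain-symbol cell ``fires'' exactly once, the first time it sees a $Q$-labelled left neighbour, and is a fixed point afterwards, while every $Q$-labelled cell --- in particular every seed cell $(q_0,\varnothing)$ --- never changes. I would then prove by induction on $i$ that under \emph{every} update schedule $\zeta$ cell $i$ changes value exactly once: cell $i-1$ is updated infinitely often, hence (using that a CA step reads the previous configuration, together with the inductive hypothesis) it settles to its unique non-initial value at some finite time, after which the next update of cell $i$ sends it to $(\delta(q_{i-1},w_i),w_i)=(q_i,w_i)$ and no later. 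Consequently the update history of $\mathbb{B}$ on $c_0$ is independent of $\zeta$ --- $\mathbb{B}$ has invariant histories --- the cascade stops at the seed cell at position $n+1$, so the history is finite, and it records precisely $q_0,q_1,\dots,q_n$.

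Finally, $A$ accepts $w$ iff $q_n\in F$, and $q_n$ can be recovered from $\mathbb{B}$'s stabilised configuration as the label of the unique cell in a state from $Q\times\Sigma$ whose right neighbour is $(q_0,\varnothing)$; a decoder performing this scan, together with the halting pattern ``a cell in state $(q,a)$ with $q\in F$ whose right neighbour is $(q_0,\varnothing)$'', yields a simulation in the sense discussed around \Cref{def:invariant} (equivalently, appending a reserved endmarker symbol to $\Sigma$ routes this cell into a distinguished accept/reject state). The step I expect to require the most care is the bookkeeping: verifying schedule-independence rigorously --- which, because no two neighbouring cells are ever simultaneously active and every transition is confluent (each cell has a single reachable non-initial value), reduces to the induction above --- and checking that the state set actually used is exactly $Q\times(\Sigma\cup\{\varnothing\})\cup\Sigma$, with $\varnothing$ appearing only in seed/filler cells. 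Degenerate cases ($w=\varepsilon$, or $A$ presented as an NFA, which is first determinised) are immediate, and the same construction carries over verbatim to finite-state transducers by additionally emitting the output letter $\lambda(q_{i-1},w_i)$ at cell $i$.
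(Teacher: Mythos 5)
Your construction is correct and is essentially the paper's own proof: the same initial configuration (plain input symbols on cells $1,\dots,n$, the seed $q_0\varnothing$ everywhere else), the same one-shot rule that lets a symbol cell absorb its left neighbour's automaton state via $\delta$ while all tuple cells stay fixed, and the same observation that only one cell is ever active, giving schedule-independence and a left-to-right cascade that writes out the run of $A$. The only cosmetic difference is that you record the read symbol in the second tuple component, whereas the paper (with its transducer framing) records the output; the argument is otherwise identical.
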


\section{Small Universal Asynchronous Automata}\label{sec:small-aca}

There have been two main approaches for designing small universal ACAs: simulating a synchronous CA or designing a universal system using inherently asynchronous primitives. 

\begin{theorem}[Six-State Universality]
    There exists a $6$-state first neighbors universal asynchronous cellular automaton.
\end{theorem}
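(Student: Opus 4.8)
The plan is to obtain the automaton as the image of $\rle{110}$ under the Shifting Mountain-Valleys construction. By Cook \cite{rule-110}, $\rle{110}$ is a $2$-state first neighbors synchronous universal cellular automaton, so taking $A = \rle{110}$ we have $|S| = 2$; feeding $A$ into \cref{prop:improvement} produces an asynchronous cellular automaton $\mathbb{B} = (S \times \{1, 2, 3\}, (-1, 0, 1), f', \Z)$ whose state set has $|S| \cdot 3 = 6$ elements and whose neighborhood is first neighbors. By \cref{prop:improvement}, $\mathbb{B}$ invariantly simulates $\rle{110}$ in $1.5$-linear time, and since in the encoded configurations no two neighboring cells are ever simultaneously active, \cref{lem:causality} (or directly \cref{prop:flip-invariant}) guarantees that $\mathbb{B}$ has invariant histories on all of them. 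Thus the construction itself collapses to a single invocation of \cref{prop:improvement}; the content is showing that invariant simulation of a universal synchronous CA yields an ACA that is universal in the sense of \cref{def:asynchronous-universality}.

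For that I would reuse $\rle{110}$'s universality data and compose it with the encoding from \cref{prop:improvement}. Keep the same universal function $U$. The compiler $c^\infty$ for $\mathbb{B}$ is the image of $\rle{110}$'s periodic background under the encoding $c \mapsto c'$ with $c'(2i) = c(i)2$ and $c'(2i+1) = c(i)1$; this is the product of the (ultimately periodic) ``ether'' with the period-$2$ timer stripe $\ldots 21212 \ldots$, hence still ultimately periodic, settling clause (1). The encoder $e$ is $\rle{110}$'s encoder followed by that same encoding, and since the encoding is a computable bijection, $e$ stays total computable and $c_0$ a finite perturbation of $c^\infty$. The decoder $d$ is the projection of each cell onto its $S$-component followed by $\rle{110}$'s decoder, and the halting pattern $p$ over $S \times \{1,2,3\}$ is obtained from $\rle{110}$'s halting pattern by pushing it through the unpacking map $o$ and fixing the timer coordinates to the stage on which a decoded layer lives.

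The step I expect to be the main obstacle is verifying clauses (2) and (3) under an \emph{arbitrary} update schedule $\zeta$. Because $\mathbb{B}$ has invariant histories, its update history $h$ is $\zeta$-independent, and by \cref{def:invariant} together with \cref{prop:improvement} a periodically spaced subsequence of the layers of $h$ consists, up to a known translation, of the $o$-encodings of the successive configurations $(G^A)^{2k}(c_0)$ in $\rle{110}$'s forward orbit, while the intervening layers carry only ``Stage $1$/$3$'' scratch data (partial $\oplus$-sums and right-copies). One must check that these intervening layers cannot spuriously realize $p$, which holds because $p$ constrains only cells whose timer equals the value carried by a decoded stage, so $p$ can match only on layers that are genuine encodings of $\rle{110}$ configurations. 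Granting this, $p$ appears in $h$ exactly when $\rle{110}$'s halting pattern appears in the orbit of $c_0$: if $U(w)$ is defined it does, and decoding the first such layer returns $U(w)$, which by invariance agrees with $d(\lceil c^\zeta_t \rceil)$ at the corresponding schedule-time $t$ (the decoded region has stabilized by then), giving clause (2); if $U(w) = \bot$ it never does, giving clause (3). The remaining points — totality and computability of $e$ and $d$, and that the $1.5$-linear slowdown and the translation drift do not obstruct detection, which \cref{def:asynchronous-universality} phrases up to spatial translation — are routine.
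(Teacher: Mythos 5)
Your proposal is correct and follows exactly the paper's route: apply the Shifting Mountain-Valleys construction (\cref{prop:improvement}) to \rle{110}, giving a $2\cdot 3 = 6$-state first neighbors ACA whose invariant history contains the synchronous space-time of \rle{110}, hence universality transfers. The paper dispatches the transfer step in one sentence, while you spell out the compiler/encoder/decoder/halting-pattern bookkeeping of \cref{def:asynchronous-universality}; that elaboration is sound but not a different approach.
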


\begin{proof}
    It is sufficient to apply our improved mountain-valley simulation from \Cref{prop:improvement} to \rle{110}. Since the corresponding invariant history will contain the synchronous space-time of \rle{110}, by extension, the resulting automaton is universal.
\end{proof}

\begin{theorem}[Three-State Universality]\label{thm:3-state}
    There exists a $3$-state von Neumann universal asynchronous cellular automaton.
\end{theorem}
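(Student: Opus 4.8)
The plan is to construct the $3$-state von Neumann ACA explicitly, via the ``inherently asynchronous primitives'' route rather than a black-box simulation: the $3q$/$4q$ bounds of \cref{prop:mountain-valley,prop:improvement} are useless here since they would require a $1$-state guest. I would fix the state set $S=\{0,1,2\}$ with $0$ the quiescent background, and let information travel as localized \emph{particles} moving along $1$-valued \emph{wires} drawn on the background: a particle is a short local pattern (e.g.\ a cell in state $2$ just ahead of the wire it has already traversed), and the local rule $f$ advances it by one cell per firing while resetting the cell it leaves. The direction of motion is not stored in the state but \emph{forced by the wire geometry} --- a cell fires only for the expected incoming-neighbor configuration, so the particle has nowhere to go but forward. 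Crucially, since a particle has only one ``hot'' cell at a time and the cell behind it immediately returns to a stable value, \emph{no two von Neumann-neighboring cells are ever simultaneously active}; by \cref{lem:causality} the resulting ACA then has a dual FAN and hence invariant histories (\cref{prop:flip-invariant}), so its entire dynamics is schedule-independent up to value-equality.

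With schedule-independence secured, what remains is a finite library of gadgets, all realized by the \emph{same} rule $f$: a straight wire and a $90^\circ$ turn; a signal \emph{delay} for time-aligning converging signals; a \emph{fan-out} duplicating a particle into two outgoing wires; a universal logic gate (it suffices to build, say, a delay-insensitive NAND, or the AND/NOT/OR combination) whose output particle is emitted exactly once \emph{both} inputs have arrived, making asynchronous arrival order irrelevant; a planar \emph{wire crossing} letting two particle streams pass through one another without interference; and a feedback loop supplying the memory needed to close circuits. Given these, I would lay out an arbitrary Boolean circuit with feedback in $\Z^2$ and thereby simulate a known universal device --- most naturally the synchronous \rle{110} (its local rule as a small Boolean circuit, clocked by a periodic array of signal generators embedded in the ultimately periodic background, reproduces its full synchronous space-time inside the invariant history), or equivalently Banks' circuit-based $2$-state CA or a small Turing machine. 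Since \rle{110}'s space-time is recoverable from the invariant history, universality in the sense of \cref{def:asynchronous-universality} follows: the encoder compiles circuit-plus-input into a finite perturbation of the periodic background, the decoder reads a designated output wire, the halting pattern $p$ marks a cell of that wire, and all three conditions hold for \emph{every} update schedule precisely because the history is invariant.

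The hard part will be purely a rule-design problem: making all of these gadgets --- above all the planar crossing --- coexist under one globally consistent $3$-state von Neumann rule with no spurious interactions, and never violating the ``no two neighbors active at once'' property on which the invariance argument rests. With only three states and four orthogonal neighbors there is very little room: turns, fan-outs, gates and crossings must all reuse the $\{0,1,2\}$ vocabulary, so one must partition the $5$-cell local patterns into firing and non-firing ones and verify that every pattern arising in any gadget at any reachable (partial) time lands on the correct side. Two sub-issues deserve special care. First, the crossing: in the Moore setting of Schneider's $3$-state construction one can exploit diagonal adjacency, but here it must be simulated by a bounded orthogonal sub-circuit built from the other gadgets (e.g.\ three XOR-like exchanges); alternatively one could start from Schneider's Moore automaton and ``fold'' each diagonal connection into an explicit orthogonal detour, keeping three states at the cost of larger gadgets. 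Second, genuine delay-insensitivity of each gate, so a partially arrived input never triggers a premature firing --- this is exactly the discipline the FAN semantics (``wait for all neighbors'') enforces, and it must be reflected faithfully in $f$. Once the pattern table is fixed and these checks pass, correctness and robustness to asynchrony are immediate from \cref{lem:causality}.
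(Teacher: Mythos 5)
Your high-level strategy (asynchronous circuit primitives on an ultimately periodic background, with the FAN machinery guaranteeing schedule-independence) is the same one the paper uses, but as written the proposal has genuine gaps, and two of its specific design commitments would fail. First, you defer the entire content of the theorem: ``the hard part will be purely a rule-design problem'' is exactly the existence claim being proved, and you give no rule table, only a wish list of gadgets; the paper's proof consists precisely of exhibiting the rule (a FAN-embedding rule for wires, \textsf{FORK} and \textsf{DUAL}, plus four extra transitions for \textsf{MERGE} and \textsf{CROSS}, \cref{fig:gate-encoding}) and verifying it. Second, your invariance argument cannot carry the whole construction. You insist that the final rule never activates two neighboring cells so that \cref{lem:causality} applies globally, and that delay-insensitivity is ``exactly the discipline the FAN semantics enforces.'' But a pure wait-for-all-neighbors (flip) node can only realize \textsf{DUAL}-type joins; a \textsf{MERGE}-type join --- where exactly one of two possible signals arrives and must still propagate --- would deadlock under that discipline, and dual-rail \textsf{NAND} unavoidably needs such merges (the paper's circuit in \cref{fig:di-nand} uses two \textsf{MERGE} gates; without dual rail you cannot build a delay-insensitive inverter at all, and your proposal never commits to an encoding). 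The paper resolves this by deliberately breaking the FAN duality with the \textsf{MERGE}/\textsf{CROSS} transitions and then arguing invariant histories only on configurations encoding valid circuits; your plan, which leans on the duality for the whole rule, would get stuck here.

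Third, your route to universality is strictly harder than necessary and clashes with what a $3$-state gadget set can plausibly deliver: you propose feedback loops, memory, and a periodic clock of signal generators to run \rle{110} in time, which requires \emph{reusable} gates, whereas the paper's gates are single-use (they ``break'' after one signal) and this is harmless only because \cref{prop:logical-universality} unrolls the guest CA's space-time \emph{in space} as an infinite periodic feed-forward circuit --- no feedback, no clock, no gate reuse. Likewise, you ask for a genuine planar crossing of particle streams (or a ``folding'' of Schneider's Moore-neighborhood rule into von Neumann with no argument that three states suffice), while the paper only needs the weak \textsf{CROSS} of \cref{fig:custom-gates} (simultaneous signals prohibited) together with Goldschlager's planarity result (\cref{thm:planar}) and \cref{lem:custom-gates}. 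So while your instincts about the overall architecture are right, the proposal as stated omits the construction itself and rests on two steps (MERGE under strict FAN discipline; clocked, reusable feedback circuitry) that the paper's actual proof is specifically engineered to avoid.
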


To prove \Cref{thm:3-state} we use the second approach and design a $3$-state universal ACA using infinitely periodic delay-insensitive Boolean circuits. This kind of universality proof is known as \emph{circuit universality} and is commonly used to construct some of the smallest universal cellular automata. Due to the classification of Boolean functions by Post \cite{Post-boolean}, we know that any Boolean function $f$ can be implemented as a feed-forward circuit consisting solely of \textsf{NAND} gates (and the fan-out operation). This result can be further strengthened for \emph{planar} circuits (i.e., circuits without wire crossings).

\begin{proposition}[Planar Circuits \cite{goldschlager}]\label{thm:planar}
    The subset of planar feed-forward circuits implemented using only \textup{\textsf{NAND}} gates implements all Boolean functions.
\end{proposition}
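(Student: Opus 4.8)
The plan is to reduce the claim to the classical fact, already invoked in the paper, that $\{\textsf{NAND}\}$ together with fan-out is functionally complete, and then to \emph{planarize} an arbitrary such circuit. First I would fix any Boolean function $f$ and, by Post's classification, pick a feed-forward circuit $C_f$ computing $f$ whose gates are all $\textsf{NAND}$ and whose wires may branch freely. Take an arbitrary drawing of the underlying directed acyclic graph of $C_f$ in the plane; after an arbitrarily small perturbation of the wire routes we may assume there are only finitely many wire crossings, each of them transversal and involving exactly two wires. The idea is then to cut out a tiny disk around every crossing and glue in a fixed planar \emph{crossover gadget} that sends each of the two incoming signals out on the opposite side, so that the surgically modified circuit has no crossings at all.

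The crossover gadget I would use is the standard $\textsf{XOR}$-based one: if the two crossing wires carry signals $u$ and $v$, compute $w = u \oplus v$ once inside the disk, then recover $u = w \oplus v$ on one outgoing side and $v = w \oplus u$ on the other, fanning out $u$, $v$, and $w$ as needed. A direct layout check shows this three-$\textsf{XOR}$ subcircuit is routable inside the disk without crossings, with its four ports appearing on the boundary in the cyclic order $(u\text{-in}, v\text{-out}, u\text{-out}, v\text{-in})$ forced by the four directions of the original crossing. It remains to remove the $\textsf{XOR}$ symbols: using $u \oplus v = \textsf{NAND}\bigl(\textsf{NAND}(u,n),\textsf{NAND}(v,n)\bigr)$ with $n = \textsf{NAND}(u,v)$ and fan-out of $u$, $v$, $n$, one gets a four-gate $\textsf{NAND}$ circuit that is itself planar and whose two inputs and single output sit on its outer face in the natural cyclic order, so it can be substituted for an $\textsf{XOR}$ symbol without creating new crossings. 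Each gadget is a DAG, so all the substitutions preserve the feed-forward property, and the final object is a planar feed-forward circuit over $\textsf{NAND}$ (and fan-out) computing $f$.

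The hard part — really the only part needing care — is the bookkeeping at the seams: one must verify that each inserted gadget has its ports on its boundary disk in a cyclic order compatible with the wires it replaces (using that a planar subcircuit may be freely reflected, and that a region with two inputs and one output admits essentially one cyclic order up to mirroring), so that gluing a gadget into a small disk of the drawing neither destroys planarity along the cut nor introduces a directed cycle. Everything else is a finite verification: functional completeness of $\{\textsf{NAND}\}$, correctness of the $\textsf{XOR}$ crossover identities, the $\textsf{NAND}$ realization of $\textsf{XOR}$, and the planarity of each small gadget in isolation. A minor extra remark: if one additionally wanted a prescribed left-to-right order of $f$'s inputs and outputs along the outer face, the same crossover gadget applied near the boundary would sort them; for the statement as phrased, where it suffices that \emph{some} planar $\textsf{NAND}$ circuit computes $f$, this is unnecessary.
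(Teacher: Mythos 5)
The paper never proves \cref{thm:planar} internally---it is imported directly from \cite{goldschlager}---so your proposal is not being measured against an in-paper argument; what you have reconstructed is essentially the classical proof behind that citation: functional completeness of $\{\textsf{NAND}\}$ with fan-out, elimination of crossings via the three-\textsf{XOR} crossover ($w=u\oplus v$, then $w\oplus v=u$ and $w\oplus u=v$), and expansion of each \textsf{XOR} into the planar four-\textsf{NAND} realization. The gadget identities, the planarity of each gadget in isolation, and the port-order bookkeeping you describe are all correct and are exactly the standard route.

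There is, however, one step that fails as written: starting from an \emph{arbitrary} drawing of the DAG and asserting that ``each gadget is a DAG, so all the substitutions preserve the feed-forward property.'' Splicing the crossover into a crossing of wires $A\to B$ and $C\to D$ creates new directed paths from $A$ to $D$ and from $C$ to $B$. In an arbitrary drawing nothing prevents the circuit from already containing a directed path from $D$ back to $A$ (take gates with $C\to D\to A\to B$ and route the curve for $C\to D$ so it sweeps across the curve for $A\to B$); the substitution then closes a directed cycle, the result is no longer feed-forward, and the crossover's correctness argument collapses. You do list ``does not introduce a directed cycle'' among the things to verify at the seams, but for an arbitrary (even perturbed) drawing this verification cannot succeed---the claim is false there---so the missing ingredient is a constraint on the drawing, not bookkeeping. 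The standard fix: fix a topological order of the gates and use a layered drawing in which every wire is $x$-monotone with gates placed at $x$-coordinates respecting the order. Then any crossing point of $A\to B$ with $C\to D$ has $x$-coordinate strictly greater than $x_A$ and $x_C$ and strictly less than $x_B$ and $x_D$, so $A$ precedes $D$ and $C$ precedes $B$ in the topological order, the added paths are forward, and no cycle can arise. With that one adjustment (and the rest of your argument unchanged) the proof is sound.
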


It turns out that the ability to embed arbitrary circuits into $2$-dimensional cellular automata is already sufficient for universality and was used to prove universality of several CAs, most notably of \textsf{Life without Death} \cite{lifewithoutdeath}.

\begin{restatable}[Circuit Universality \cite{circuit-unviersality} \cite{lifewithoutdeath}]{proposition}{embedding}\label{prop:logical-universality}
    A $2$-dimensional CA which can implement any Boolean function on a finite subset of cells can embed the space-time of any first neighbors CA, where both use an ultimately periodic initial configuration, hence is universal.
\end{restatable}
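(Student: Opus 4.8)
The plan is to realise each cell of an arbitrary first neighbors CA $A = (S, (-1,0,1), f, \Z)$ by a finite \emph{cell module} built inside the given CA $M$, and to tile a horizontal strip of the cellular space with one copy of this module per cell of $A$. First fix an injective encoding $S \hookrightarrow \{0,1\}^b$ with $b = \lceil \log_2 |S| \rceil$; under it the local rule $f \colon S^3 \to S$ becomes a Boolean function $F \colon \{0,1\}^{3b} \to \{0,1\}^b$ on finitely many bits. By hypothesis $M$ can implement $F$ on a finite block of cells, and—packaging the implied ability to transmit, duplicate and combine signals, together with \Cref{thm:planar}—we may take this realisation to be a planar \textsf{NAND} circuit. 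A cell module then consists of the circuit for $F$, a $b$-bit register holding the current encoded state of the cell, fan-out gadgets broadcasting the register contents along two dedicated wire channels (one heading left, one heading right), and input ports receiving the analogous broadcasts of the two neighboring modules. Each module repeatedly waits until both neighbor values have arrived, evaluates $F$ on its own register together with the left and right values, latches the result back into the register, and starts the next round.

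Next I would assemble the infinite line: put module $n$ in a fixed slot, wire its right channel into the left input port of module $n+1$ and symmetrically on the other side, and route the two opposite-direction channels past one another planarly—possible because each module is a finite planar gadget crossed by only a constant number of boundary wires. Since $A$ is uniform, the module and its wiring are identical in every slot, so the background configuration is periodic along the horizontal axis and quiescent vertically, hence ultimately periodic in the sense of \Cref{def:ultimately-periodic}. The input word is installed by preloading finitely many registers with the encoding of the ultimately periodic initial configuration of $A$, which differs from the background in only finitely many cells, so $c_0$ is a finite perturbation of an ultimately periodic configuration as required. Reading off the registers at the instants when all of them are simultaneously stable reproduces, row by row, the space-time of $A$ under $G^A$; a designated sub-pattern that fires when a register reaches a chosen state supplies the halting pattern, and the encoder and decoder merely translate between words and register contents. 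Instantiating $A$ with \rle{110}, which is universal \cite{rule-110}, then gives universality of $M$ in the sense of \Cref{def:asynchronous-universality}.

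The main obstacle is synchronisation. A feed-forward \textsf{NAND} circuit computes a function once, whereas we need an iterated dynamical system, so every module must loop; and there is no global clock, so the rounds of distant modules will drift—worse, when $M$ is itself asynchronous its individual gates fire in an arbitrary order. I would resolve both issues with \emph{delay-insensitive} (handshake) signalling: each wire carries a data token followed by an acknowledgement, a module consumes the tokens from both neighbors before recomputing, and it refuses to emit its next token until the previous one has been acknowledged on both sides. This forces module $n$'s $t$-th emitted token to encode exactly $c^A_t(n)$ irrespective of the update schedule, so the embedded space-time is schedule-independent, as required for invariant simulation (\Cref{def:invariant}). What remains is a liveness argument—that the handshake network never deadlocks and that every module completes infinitely many rounds—which follows because the prerequisites of ``round $t$ of module $n$'' form the acyclic, locally finite light-cone of $A$, and under any fair schedule each prerequisite is eventually met.
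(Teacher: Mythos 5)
There is a genuine gap: your construction assumes primitives that the hypothesis of \cref{prop:logical-universality} does not provide. The proposition only assumes that the $2$-dimensional CA can implement any Boolean function on a finite subset of cells, i.e., it gives you \emph{feed-forward} circuits computing a value once. Your cell modules, however, require a $b$-bit register that can be latched repeatedly, gates that can be re-used in round after round, and a delay-insensitive handshake/acknowledgement protocol -- an iterated, looping dynamical system. None of this follows from the stated hypothesis, and the paper's intended application makes the gap concrete: the $3$-state ACA $\mathbb{X}$ of \cref{thm:3-state} implements gates that ``break'' after processing a single signal, so it satisfies the hypothesis (it embeds any feed-forward circuit) yet cannot support your looping modules at all. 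You notice the tension yourself (``a feed-forward \textsf{NAND} circuit computes a function once, whereas we need an iterated dynamical system''), but resolving it by adding registers and handshaking is exactly the step that the hypothesis does not license, and proving that such machinery exists in \emph{every} CA satisfying the hypothesis is hopeless.

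The paper's proof avoids iteration entirely by unrolling the time axis of the simulated first neighbors CA $A$ into the second spatial dimension of the host: one builds a single block $p$ computing the Boolean version $\hat{f}$ of the local rule, tiles it periodically in both directions so that row $t$ of blocks feeds row $t+1$ (\Cref{fig:logical-universality}), and embeds the initial configuration of $A$ into one row. The whole embedding is then one infinite \emph{feed-forward} circuit -- no loops, no registers, no clocking or handshake liveness argument -- and each successive row of the orbit contains the next configuration of $A$, so the entire space-time of $A$ appears. Ultimate periodicity of the host's initial configuration is immediate because the block tiling is periodic and \Cref{def:ultimately-periodic} is set up precisely to allow embedding a $1$-dimensional ultimately periodic configuration into a $2$-dimensional one. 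Your final step (instantiating $A$ with \rle{110} and supplying encoder, decoder and halting pattern) is fine, but the core architecture needs to be the time-unrolled feed-forward tiling rather than re-usable clocked modules for the proposition to hold in the generality in which it is stated and later used.
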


The proof for \cref{prop:logical-universality} is given in \appendixref{app:ommited-proofs}, and it naturally generalizes to an asynchronous setting, as long as the ACA allows for reliable asynchronous implementation of Boolean functions. Universality proofs in \cite{lee-5, lee, adachi, schneider} accomplished this by constructing custom \emph{delay-insensitive} gates, i.e. gates that do not require simultaneity of inputs and operate under arbitrary delays \cite{di-gates}. For asynchronous inputs, a common tactic is to use \emph{dual-rail} encoding, i.e., encode each variable $\texttt{I}$ using two ``sub-wires'' $\texttt{I}$ and $\neg \texttt{I}$ (one for ``true'' signals, and the other for ``false'' signals). We take a similar approach and present a simple set of delay-insensitive gates \textsf{MERGE}, \textsf{FORK}, \textsf{DUAL}, and \textsf{CROSS}, as defined in \Cref{fig:custom-gates}.

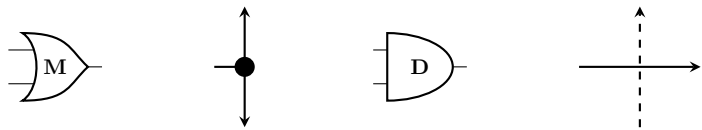
\begin{figure}[htb]
    \centering
    \begin{circuitikz}[scale=0.65, transform shape]
        \node[or port] (merge) at (0,0) {\textbf{M}};
        \draw[-, thick, >=stealth] (2, 0) -- (2.5, 0);
        \draw[->, thick, >=stealth] (2.5, 0) -- (2.5, 1);
        \draw[->, thick, >=stealth] (2.5, 0) -- (2.5, -1);
        \node[circle, fill=black](fork) at (2.5, 0) {};
        \node[and port] (dual) at (6,0) {\textbf{D}};
        \draw[->, thick, dashed, >=stealth] (9, -1) -- (9, 1);
        \draw[->, thick, >=stealth] (8, 0) -- (10, 0);
        
    \end{circuitikz}
    \caption{From left to right: \textsf{MERGE} (joins signals from two wires into one, at most one input wire is allowed to carry signal), \textsf{FORK} (splits a signal into two output copies), \textsf{DUAL} (produces an output signal if signals arrive on both inputs) and \textsf{CROSS} (weak wire crossing, simultaneous signals on both wires are prohibited).}
    \label{fig:custom-gates}
\end{figure}

\begin{restatable}{lemma}{customgates}\label{lem:custom-gates}
    The set of delay-insensitive gates $\{ \textup{\textsf{MERGE}, \textsf{FORK}, \textsf{DUAL}, \textsf{CROSS}} \}$ can compute any Boolean function using a feed-forward planar circuit with dual-rail encoding.
\end{restatable}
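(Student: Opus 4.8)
The plan is to rewrite a planar \textsf{NAND} circuit gate by gate, using dual-rail signals and the four primitives, while maintaining two invariants: \emph{(i) single-shot}, i.e.\ every wire carries at most one signal during a computation; and \emph{(ii) rail-exclusivity}, i.e.\ for every encoded variable exactly one of its two rails ever fires. By \cref{thm:planar} the target Boolean function $f$ is computed by a feed-forward circuit of \textsf{NAND} gates and fan-outs whose underlying graph is planar. First I would replace every wire by a parallel pair of wires -- a ``true'' rail and a ``false'' rail -- and place a single signal on exactly one rail of each input variable; then (i)--(ii) hold at the inputs, and it suffices to supply drop-in gadgets for fan-out and for the two-input \textsf{NAND} that preserve both invariants.

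Fan-out is immediate: a dual-rail \textsf{FORK} is one \textsf{FORK} on each rail, which is planar and trivially preserves (i)--(ii). For \textsf{NAND} I would use a minterm decomposition. Writing $a_T,a_F,b_T,b_F$ for the four input rails, fork each once and form the four gates $D_{TT}=\textsf{DUAL}(a_T,b_T)$, $D_{TF}=\textsf{DUAL}(a_T,b_F)$, $D_{FT}=\textsf{DUAL}(a_F,b_T)$, $D_{FF}=\textsf{DUAL}(a_F,b_F)$; by invariant (ii) exactly one of them fires, identifying the actual input pair $(a,b)$. The output-false rail is taken to be $D_{TT}$ (it fires exactly when $a\wedge b$), and the output-true rail is $\textsf{MERGE}(D_{FF},D_{TF},D_{FT})$, realized as two binary \textsf{MERGE}s. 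The three signals feeding this \textsf{MERGE} are pairwise mutually exclusive, so its ``at most one input active'' restriction is respected; the gadget obviously preserves (i), and preserves (ii) since exactly one output rail fires. As \textsf{NAND} with fan-out is functionally complete, composing these gadgets along the structure of the original circuit computes $f$.

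What remains is to check planarity. Between gadgets the two rails simply run parallel along the route of the corresponding wire of the (planar) original circuit, so no crossings are introduced there; all crossings are local to a \textsf{NAND} gadget -- routing the forked copies of $a_T,a_F,b_T,b_F$ into the four \textsf{DUAL}s, and routing the four \textsf{DUAL} outputs into the output \textsf{MERGE} tree past the output-false wire. Ordering the \textsf{DUAL}s in Gray-code order $D_{FF},D_{TF},D_{TT},D_{FT}$ makes consecutive \textsf{DUAL}s share an input, which confines every forced crossing to a pair of wires that are mutually exclusive under (ii): either the two rails $a_T,a_F$ (resp.\ $b_T,b_F$) of a single variable, or two distinct \textsf{DUAL} outputs (resp.\ the output-false wire against a \textsf{DUAL} output feeding the output-true rail). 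Hence each such crossing is legally implemented by a \textsf{CROSS} gate, whose prohibition on simultaneous signals is never triggered. The result is a feed-forward circuit over $\{\textsf{MERGE},\textsf{FORK},\textsf{DUAL},\textsf{CROSS}\}$ drawn in the plane with crossings only at \textsf{CROSS} gates, computing $f$ in dual-rail encoding.

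The hard part will be precisely this planarity bookkeeping: certifying that no crossing is ever forced between two wires that could carry signals simultaneously (for instance a copy of $a_T$ and a copy of $b_T$, which both carry a signal when $a$ and $b$ are both true), since \textsf{CROSS} is only a weak crossing. The fix is the Gray-code ordering of the minterm \textsf{DUAL}s together with routing all $a$-copies on one side of the gadget and all $b$-copies on the other, so that the only crossings left are between same-variable opposite rails or between distinct minterm outputs -- all provably exclusive; verifying this is a finite, somewhat tedious case analysis. A secondary point to pin down is that invariant (i) genuinely survives every primitive -- in particular that no \textsf{DUAL} is ever re-triggered -- so that the ``signal-arrival'' semantics of \textsf{DUAL} coincides with Boolean \textsf{AND} and the whole circuit is an honest feed-forward Boolean computation.
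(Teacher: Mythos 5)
Your proposal is correct and takes essentially the same route as the paper: reduce via \cref{thm:planar} to building dual-rail fan-out and \textsf{NAND}, realize \textsf{NAND} by the four-minterm \textsf{DUAL} decomposition with the $a\wedge b$ minterm as the false rail and a \textsf{MERGE} of the other three as the true rail, and resolve all crossings with \textsf{CROSS} gates placed only where the two wires are mutually exclusive (the paper discharges the planarity bookkeeping by exhibiting the explicit circuit of \cref{fig:di-nand} with 2 \textsf{MERGE}, 4 \textsf{FORK}, 4 \textsf{DUAL}, 12 \textsf{CROSS}). The only small slip is the claim that dual-rail fan-out needs no crossing: the paper forks both rails and then \textsf{CROSS}es the two inner copies, which is exactly the rail-exclusivity argument you already use elsewhere.
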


\begin{proof}
    Due to \cref{thm:planar} it is sufficient to show that we can construct a \textsf{NAND} gate and fan-out operation using a planar circuit. We employ dual-rail encoding and for fan-out \textsf{FORK} both inputs $\texttt{I}$ and $\neg \texttt{I}$ individually and then \textsf{CROSS} them using the fact that at most one of them contains the signal. The circuit for the dual-rail \textsf{NAND} is given in \Cref{fig:di-nand}.
\end{proof}

    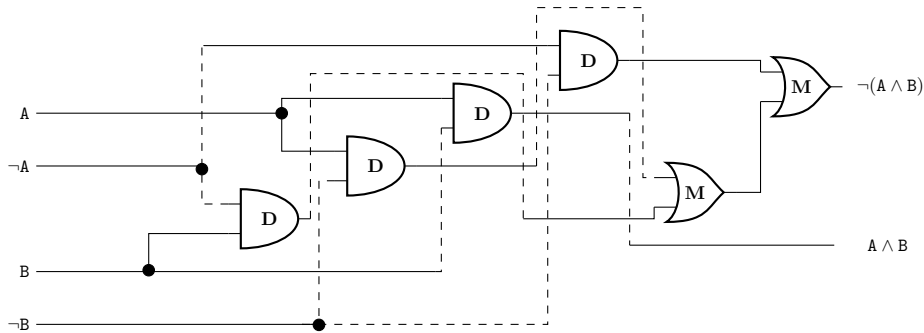
\begin{figure}[ht]
        \centering
        \begin{circuitikz}[scale=0.55, transform shape]
        \draw (2,4) node[and port] (myand1) {\textbf{D}}
              (4,5) node[and port] (myand2) {\textbf{D}}
              (6,6) node[and port] (myand3) {\textbf{D}}
              (8,7) node[and port] (myand4) {\textbf{D}}
              (10,4.5) node[or port] (myor1) {\textbf{M}}
              (12,6.5) node[or port] (myor2) {\textbf{M}};
    
        \draw (-3,6) node[left] (a) {$\texttt{A}$}
              (-3,5) node[left] (nota) {$\neg \texttt{A}$}
              (-3,3) node[left] (b) {$\texttt{B}$}
              (-3,2) node[left] (notb) {$\neg \texttt{B}$};
    
        \draw (12.3,6.5) node[right] (a) {$\neg (\texttt{A} \wedge \texttt{B})$}
              (12.5,3.5) node[right] (nota) {$\texttt{A} \wedge \texttt{B}$};
              
        \draw[] (myand2.out) -- ++(2.25, 0);
        \draw[dashed] (myand2.out) ++(2.25, 0) -- ++(0, 3) -- ++(2, 0) |- (myor1.in 1);
        \draw[dashed] (myand1.out) -- ++(0, 2.75) -- ++(4, 0) -- ++(0, -2.75);
        \draw (myand1.out) ++(4, 0)  -| (myor1.in 2);
        \draw (myand4.out) -| (myor2.in 1);
        \draw (myor1.out) -| (myor2.in 2);
    
        \draw (myand2.in 1) -- ++(-1,0) |- (myand3.in 1) coordinate[pos=0.3] (a);
        \draw[-*] (-3,6) -| (a);
        \draw[] (myand3.out) -- ++(2, 0);
        \draw[dashed] (myand3.out) ++(2, 0) -- ++(0, -2.5);
        \draw (myand3.out) ++(2, -2.5) -- (12, 3.5);
            
        \draw[dashed] (myand1.in 1) -- ++(-0.5,0) |- (0.15, 7.275) coordinate[pos=0.14] (b);
        \draw[-*] (-3,5) -| (b);
        \draw (0.15, 7.275) |- (myand4.in 1);

        \draw[] (myand1.in 2) -- ++(-1.5,0) |- (4.5, 3) coordinate[pos=0.4] (c);
        \draw[dashed] (4.5, 3) -| (myand3.in 2);
        \draw[-*] (-3,3) -| (c);
    
        \draw[dashed] (myand2.in 2) -- ++(-0.3,0) |- (2, 2) coordinate[pos=0.48] (d);
        \draw[dashed] (2, 2) -| (myand4.in 2);
        \draw[-*] (-3,2) -| (d);

        \end{circuitikz}
    \caption{A planar construction for dual-rail \textsf{NAND} using 2 \textsf{MERGE}, 4 \textsf{FORK}, 4 \textsf{DUAL} and 12 \textsf{CROSS} gates. Neither \textsf{MERGE} nor \textsf{CROSS} can have two particles arrive at the same time.}
    \label{fig:di-nand}
    \end{figure}

As a direct consequence it follows that it is sufficient to find a $3$-state ACA $\mathbb{X} = (\{0, 1, 2\}, N, f, \Z^2)$ with von Neumann neighborhood capable of reliably simulating all delay insensitive gates from \Cref{fig:custom-gates}. We start by describing a ``helper'' ACA $\mathbb{X}' = (\{0, 1, 2\}, N, f', \Z^2)$ embedding a FAN and capable of implementing wires, \textsf{FORK} and \textsf{DUAL}. We assume $0$ as a ``background'' cell, i.e., a cell that never changes its state and defines no direction indicators. For the remaining nodes we define a mapping $\psi$ similarly to \Cref{prop:mountain-valley}, but now the direction indicator between adjacent non-zero cells points left (or down) if the states are equal, and right (or up) if they are distinct:
 \begin{align*}
    &\psi(a, a) = a \longleftarrow a  &\psi(a, b) &= a \longrightarrow b \quad \text{for $a \neq b$}
\end{align*}
We use the following principle (similarly to \Cref{ex:mountain-valley}) to describe the transition function $f'$ of $\mathbb{X}'$: as soon as all direction indicators point to some cell $i \in \Z^2$ it becomes active and can ``flip'' its state from $1$ to $2$ or from $2$ to $1$ respectively. Observe that this state change also causes all adjacent direction indicators to flip. To visualize the underlying circuit, we interpret any active cell of $\mathbb{X}'$ as containing the ``signal'', and in this way the signal is always transmitted reciprocally to direction indicators, which provides us with a simple way of encoding wires into configurations of $\mathbb{X}'$.

A possible wire construction is shown in \Cref{fig:gate-encoding} (first), and it is not unique--by swapping all $2$s with $1$s, we get the same wire carrying the same signal, hence there are multiple equally valid ways of transporting signals (this is  crucial for implementing crossings). We can easily construct wires into arbitrary directions, for example:
\begin{equation*}
    1 \longleftarrow 1 \longleftarrow 1 \longleftarrow 1 \longleftarrow 1 \longleftarrow 1 \quad\quad \text{and} \quad\quad 2 \longrightarrow 1 \longrightarrow 2 \longrightarrow 1 \longrightarrow 2 \longrightarrow 1
\end{equation*}
describe the left-right wire and the right-left wire, respectively. In a similar fashion, we can implement \textsf{DUAL} by combining two signals into one and \textsf{FORK} by splitting a signal into two, as shown in \Cref{fig:gate-encoding} (second and third). However, ACA $\mathbb{X}'$ can not yet be considered universal because the implementation for \textsf{MERGE} and \textsf{CROSS} seems to be impossible. Hence we augment the transition function $f'$ by adding four additional transitions to obtain the final ACA $\mathbb{X}$ (we use $\raisebox{-3.5pt}{\CustomSmall{w}}$ for $0$, $\raisebox{-3.5pt}{\CustomSmall{g}}$ for $1$ and $\raisebox{-3.5pt}{\CustomSmall{b}}$ for $2$). 
\begin{equation*}
    \raisebox{-8.5pt}{\Custom{g}{g}{g}{g}{w}} \mapsto \raisebox{-3.5pt}{\CustomSmall{b}} \raisebox{-8.5pt}{\Custom{b}{b}{g}{g}{w}} \mapsto \raisebox{-3.5pt}{\CustomSmall{b}} \text{ for } \textsf{MERGE} \quad\quad \raisebox{-8.5pt}{\Custom{g}{g}{w}{b}{g}} \mapsto \raisebox{-3.5pt}{\CustomSmall{g}} \raisebox{-8.5pt}{\Custom{g}{b}{w}{b}{b}} \mapsto \raisebox{-3.5pt}{\CustomSmall{b}} \text{ for } \textsf{CROSS}
\end{equation*}
\begin{figure}[tb]
    \centering
    \begin{tikzpicture}[font=\Large\sffamily, scale=0.65, transform shape]
        \foreach \i in {0,...,5}{
            \draw[dotted] (0.75*\i, 0) -- (0.75*\i, 3.75);
            \draw[dotted] (0, 0.75*\i) -- (3.75, 0.75*\i);

        \node[] at (0.375,0.375) {$\text{0}$};
        \node[] at (1.125,0.375) {$\text{0}$};
        \node[] at (1.875,0.375) {$\text{0}$};
        \node[] at (2.625,0.375) {$\text{2}$};
        \node[] at (3.375,0.375) {$\text{0}$};

        \node[] at (0.375,1.125) {$\text{0}$};
        \node[] at (1.125,1.125) {$\text{0}$};
        \node[] at (1.875,1.125) {$\text{0}$};
        \node[] at (2.625,1.125) {$\text{1}$};
        \node[] at (3.375,1.125) {$\text{0}$};
    
        \node[] at (0.375,1.875) {$\text{0}$};
        \node[] at (1.125,1.875) {$\text{0}$};
        \node[] at (1.875,1.875) {$\text{0}$};
        \node[] at (2.625,1.875) {$\text{2}$};
        \node[] at (3.375,1.875) {$\text{0}$};
    
        \node[] at (0.375,2.625) {$\text{2}$};
        \node[circle, fill=lightgray!50] at (1.125,2.625) {$\text{1}$};
        \node[] at (1.875,2.625) {$\text{1}$};
        \node[] at (2.625,2.625) {$\text{1}$};
        \node[] at (3.375,2.625) {$\text{0}$};
    
        \node[] at (0.375,3.375) {$\text{0}$};
        \node[] at (1.125,3.375) {$\text{0}$};
        \node[] at (1.875,3.375) {$\text{0}$};
        \node[] at (2.625,3.375) {$\text{0}$};
        \node[] at (3.375,3.375) {$\text{0}$};

        \draw[thick] (0, 0.75*4) -- (3, 0.75*4);
        \draw[thick] (0, 0.75*3) -- (2.25, 0.75*3);
        \draw[thick] (0.75*4, 0) -- (0.75*4, 3);
        \draw[thick] (0.75*3, 0) -- (0.75*3, 2.25);
        \draw[->, color=darkred, thick, >=stealth] (0.375 + 2.25, 0.6) -- (0.375 + 2.25, 0.9);
        \draw[->, color=darkred, thick, >=stealth] (0.375 + 2.25, 2.1) -- (0.375 + 2.25, 2.4);
        \draw[->, color=darkred, thick, >=stealth] (0.375 + 2.25, 1.35) -- (0.375 + 2.25, 1.65);

        \draw[->, color=darkred, thick, >=stealth] (0.6, 0.375 + 2.25) -- (0.9, 0.375 + 2.25);
        \draw[<-, color=darkred, thick, >=stealth] (2.1, 0.375 + 2.25) -- (2.4, 0.375 + 2.25);
        \draw[<-, color=darkred, thick, >=stealth] (1.35, 0.375 + 2.25) -- (1.65, 0.375 + 2.25);
        
        }
    \end{tikzpicture}
    \hspace{0.2cm}
    \begin{tikzpicture}[font=\Large\sffamily, scale=0.65, transform shape]

    \foreach \i in {0,...,5}{
        \draw[dotted] (0.75*\i, 0) -- (0.75*\i, 3.75);
        \draw[dotted] (0, 0.75*\i) -- (3.75, 0.75*\i);

    \node[] at (0.375,0.375) {$\text{0}$};
    \node[] at (1.125,0.375) {$\text{0}$};
    \node[] at (1.875,0.375) {$\text{0}$};
    \node[] at (2.625,0.375) {$\text{1}$};
    \node[] at (3.375,0.375) {$\text{0}$};

    \node[] at (0.375,1.125) {$\text{0}$};
    \node[] at (1.125,1.125) {$\text{0}$};
    \node[] at (1.875,1.125) {$\text{0}$};
    \node[] at (2.625,1.125) {$\text{2}$};
    \node[] at (3.375,1.125) {$\text{0}$};

    \node[] at (0.375,1.875) {$\text{2}$};
    \node[circle, fill=lightgray!50] at (1.125,1.875) {$\text{1}$};
    \node[] at (1.875,1.875) {$\text{1}$};
    \node[] at (2.625,1.875) {$\text{1}$};
    \node[] at (3.375,1.875) {$\text{0}$};

    \node[] at (0.375,2.625) {$\text{0}$};
    \node[] at (1.125,2.625) {$\text{0}$};
    \node[] at (1.875,2.625) {$\text{0}$};
    \node[] at (2.625,2.625) {$\text{1}$};
    \node[] at (3.375,2.625) {$\text{0}$};

    \node[] at (0.375,3.375) {$\text{0}$};
    \node[] at (1.125,3.375) {$\text{0}$};
    \node[] at (1.875,3.375) {$\text{0}$};
    \node[] at (2.625,3.375) {$\text{1}$};
    \node[] at (3.375,3.375) {$\text{0}$};

    \draw[thick] (0, 0.75*3) -- (2.25, 0.75*3);
    \draw[thick] (0, 0.75*2) -- (2.25, 0.75*2);
    \draw[thick] (0.75*4, 0) -- (0.75*4, 3.75);
    \draw[thick] (0.75*3, 0) -- (0.75*3, 1.5);
    \draw[thick] (0.75*3, 2.25) -- (0.75*3, 3.75);
    
    \draw[->, color=darkred, thick, >=stealth] (0.375 + 2.25, 0.6) -- (0.375 + 2.25, 0.9);
    \draw[->, color=darkred, thick, >=stealth] (0.375 + 2.25, 1.35) -- (0.375 + 2.25, 1.65);
    \draw[<-, color=darkred, thick, >=stealth] (0.375 + 2.25, 2.1) -- (0.375 + 2.25, 2.4);
    \draw[<-, color=darkred, thick, >=stealth] (0.375 + 2.25, 2.85) -- (0.375 + 2.25, 3.15);

    \draw[->, color=darkred, thick, >=stealth] (0.6, 0.375 + 1.5) -- (0.9, 0.375 + 1.5);
    \draw[<-, color=darkred, thick, >=stealth] (2.1, 0.375 + 1.5) -- (2.4, 0.375 + 1.5);
    \draw[<-, color=darkred, thick, >=stealth] (1.35, 0.375 + 1.5) -- (1.65, 0.375 + 1.5);        
    
    }
    \end{tikzpicture}
    \hspace{0.2cm}
    \begin{tikzpicture}[font=\Large\sffamily, scale=0.65, transform shape]

    \foreach \i in {0,...,5}{
        \draw[dotted] (0.75*\i, 0) -- (0.75*\i, 3.75);
        \draw[dotted] (0, 0.75*\i) -- (3.75, 0.75*\i);

    \node[] at (0.375,0.375) {$\text{0}$};
    \node[] at (1.125,0.375) {$\text{0}$};
    \node[] at (1.875,0.375) {$\text{2}$};
    \node[] at (2.625,0.375) {$\text{0}$};
    \node[] at (3.375,0.375) {$\text{0}$};

    \node[] at (0.375,1.125) {$\text{0}$};
    \node[] at (1.125,1.125) {$\text{0}$};
    \node[circle, fill=lightgray!50] at (1.875,1.125) {$\text{1}$};
    \node[] at (2.625,1.125) {$\text{0}$};
    \node[] at (3.375,1.125) {$\text{0}$};

    \node[] at (0.375,1.875) {$\text{0}$};
    \node[] at (1.125,1.875) {$\text{0}$};
    \node[] at (1.875,1.875) {$\text{1}$};
    \node[] at (2.625,1.875) {$\text{0}$};
    \node[] at (3.375,1.875) {$\text{0}$};

    \node[] at (0.375,2.625) {$\text{2}$};
    \node[circle, fill=lightgray!50] at (1.125,2.625) {$\text{1}$};
    \node[] at (1.875,2.625) {$\text{1}$};
    \node[] at (2.625,2.625) {$\text{1}$};
    \node[] at (3.375,2.625) {$\text{1}$};

    \node[] at (0.375,3.375) {$\text{0}$};
    \node[] at (1.125,3.375) {$\text{0}$};
    \node[] at (1.875,3.375) {$\text{0}$};
    \node[] at (2.625,3.375) {$\text{0}$};
    \node[] at (3.375,3.375) {$\text{0}$};

    \draw[thick] (0, 0.75*4) -- (3.75, 0.75*4);
    \draw[thick] (0, 0.75*3) -- (1.5, 0.75*3);
    \draw[thick] (2.25, 0.75*3) -- (3.75, 0.75*3);
    \draw[thick] (0.75*2, 0) -- (0.75*2, 2.25);
    \draw[thick] (0.75*3, 0) -- (0.75*3, 2.25);
    \draw[->, color=darkred, thick, >=stealth] (0.375 + 1.5, 0.6) -- (0.375 + 1.5, 0.9);
    \draw[<-, color=darkred, thick, >=stealth] (0.375 + 1.5, 2.1) -- (0.375 + 1.5, 2.4);
    \draw[<-, color=darkred, thick, >=stealth] (0.375 + 1.5, 1.35) -- (0.375 + 1.5, 1.65);

    \draw[->, color=darkred, thick, >=stealth] (0.6, 0.375 + 2.25) -- (0.9, 0.375 + 2.25);
    \draw[<-, color=darkred, thick, >=stealth] (1.35, 0.375 + 2.25) -- (1.65, 0.375 + 2.25);
    \draw[<-, color=darkred, thick, >=stealth] (2.1, 0.375 + 2.25) -- (2.4, 0.375 + 2.25);
    \draw[<-, color=darkred, thick, >=stealth] (2.85, 0.375 + 2.25) -- (3.15, 0.375 + 2.25);
    }
    \end{tikzpicture}
    \hspace{0.2cm}
    \begin{tikzpicture}[font=\Large\sffamily, scale=0.65, transform shape]

    \foreach \i in {0,...,5}{
        \draw[dotted] (0.75*\i, 0) -- (0.75*\i, 3.75);
        \draw[dotted] (0, 0.75*\i) -- (3.75, 0.75*\i);

    \node[] at (0.375,0.375) {$\text{0}$};
    \node[] at (1.125,0.375) {$\text{0}$};
    \node[] at (1.875,0.375) {$\text{1}$};
    \node[] at (2.625,0.375) {$\text{0}$};
    \node[] at (3.375,0.375) {$\text{0}$};

    \node[] at (0.375,1.125) {$\text{0}$};
    \node[] at (1.125,1.125) {$\text{0}$};
    \node[] at (1.875,1.125) {$\text{1}$};
    \node[] at (2.625,1.125) {$\text{0}$};
    \node[] at (3.375,1.125) {$\text{0}$};

    \node[] at (0.375,1.875) {$\text{2}$};
    \node[] at (1.125,1.875) {$\text{2}$};
    \node[] at (1.875,1.875) {$\text{0}$};
    \node[] at (2.625,1.875) {$\text{2}$};
    \node[] at (3.375,1.875) {$\text{2}$};

    \node[] at (0.375,2.625) {$\text{0}$};
    \node[] at (1.125,2.625) {$\text{0}$};
    \node[] at (1.875,2.625) {$\text{1}$};
    \node[] at (2.625,2.625) {$\text{0}$};
    \node[] at (3.375,2.625) {$\text{0}$};

    \node[] at (0.375,3.375) {$\text{0}$};
    \node[] at (1.125,3.375) {$\text{0}$};
    \node[] at (1.875,3.375) {$\text{1}$};
    \node[] at (2.625,3.375) {$\text{0}$};
    \node[] at (3.375,3.375) {$\text{0}$};

    \draw[thick] (0, 0.75*3) -- (1.5, 0.75*3);
    \draw[thick] (0, 0.75*2) -- (1.5, 0.75*2);
    \draw[thick] (2.25, 0.75*3) -- (3.75, 0.75*3);
    \draw[thick] (2.25, 0.75*2) -- (3.75, 0.75*2);
    \draw[thick] (0.75*2, 0) -- (0.75*2, 1.5);
    \draw[thick] (0.75*2, 2.25) -- (0.75*2, 3.75);
    \draw[thick] (0.75*3, 0) -- (0.75*3, 1.5);
    \draw[thick] (0.75*3, 2.25) -- (0.75*3, 3.75);
    
    \draw[<-, color=darkred, thick, >=stealth] (0.375 + 1.5, 0.6) -- (0.375 + 1.5, 0.9);
    \draw[<-, color=darkred, thick, >=stealth] (0.375 + 1.5, 2.85) -- (0.375 + 1.5, 3.15);

    \draw[<-, color=darkred, thick, >=stealth] (0.6, 0.375 + 1.5) -- (0.9, 0.375 + 1.5);     
    \draw[<-, color=darkred, thick, >=stealth] (2.85, 0.375 + 1.5) -- (3.15, 0.375 + 1.5);
    
    }
    \end{tikzpicture}
    \hspace{0.2cm}
    \begin{tikzpicture}[font=\Large\sffamily, scale=0.65, transform shape]

    \foreach \i in {0,...,5}{
        \draw[dotted] (0.75*\i, 0) -- (0.75*\i, 3.75);
        \draw[dotted] (0, 0.75*\i) -- (3.75, 0.75*\i);

    \node[] at (0.375,0.375) {$\text{0}$};
    \node[] at (1.125,0.375) {$\text{0}$};
    \node[] at (1.875,0.375) {$\text{0}$};
    \node[] at (2.625,0.375) {$\text{0}$};
    \node[] at (3.375,0.375) {$\text{0}$};

    \node[] at (0.375,1.125) {$\text{1}$};
    \node[] at (1.125,1.125) {$\text{1}$};
    \node[] at (1.875,1.125) {$\text{1}$};
    \node[] at (2.625,1.125) {$\text{1}$};
    \node[] at (3.375,1.125) {$\text{1}$};

    \node[] at (0.375,1.875) {$\text{0}$};
    \node[] at (1.125,1.875) {$\text{0}$};
    \node[] at (1.875,1.875) {$\text{2}$};
    \node[] at (2.625,1.875) {$\text{0}$};
    \node[] at (3.375,1.875) {$\text{0}$};

    \node[] at (0.375,2.625) {$\text{0}$};
    \node[] at (1.125,2.625) {$\text{0}$};
    \node[] at (1.875,2.625) {$\text{1}$};
    \node[] at (2.625,2.625) {$\text{0}$};
    \node[] at (3.375,2.625) {$\text{0}$};

    \node[] at (0.375,3.375) {$\text{0}$};
    \node[] at (1.125,3.375) {$\text{0}$};
    \node[] at (1.875,3.375) {$\text{2}$};
    \node[] at (2.625,3.375) {$\text{0}$};
    \node[] at (3.375,3.375) {$\text{0}$};

    \draw[thick] (0, 0.75*1) -- (3.75, 0.75*1);
    \draw[thick] (0, 0.75*2) -- (1.5, 0.75*2);
    \draw[thick] (2.25, 0.75*2) -- (3.75, 0.75*2);
    \draw[thick] (0.75*2, 1.5) -- (0.75*2, 3.75);
    \draw[thick] (0.75*3, 1.5) -- (0.75*3, 3.75);
    \draw[->, color=darkred, thick, >=stealth] (0.375 + 1.5, 2.85) -- (0.375 + 1.5, 3.15);
    \draw[->, color=darkred, thick, >=stealth] (0.375 + 1.5, 2.1) -- (0.375 + 1.5, 2.4);
    \draw[->, color=darkred, thick, >=stealth] (0.375 + 1.5, 1.35) -- (0.375 + 1.5, 1.65);

    \draw[<-, color=darkred, thick, >=stealth] (0.6, 0.375 + 0.75) -- (0.9, 0.375 + 0.75);
    \draw[<-, color=darkred, thick, >=stealth] (1.35, 0.375 + 0.75) -- (1.65, 0.375 + 0.75);
    \draw[<-, color=darkred, thick, >=stealth] (2.1, 0.375 + 0.75) -- (2.4, 0.375 + 0.75);
    \draw[<-, color=darkred, thick, >=stealth] (2.85, 0.375 + 0.75) -- (3.15, 0.375 + 0.75);
    }
    \end{tikzpicture}
    \caption{A wire transmitting a signal (highlighted in gray) to the right and to the bottom (first). An implementation of \textsf{FORK} splitting a signal into two copies (second) and an implementation of \textsf{DUAL} asynchronously combining two signals into one (third). A \textsf{CROSS} gate flips $0$ to $1$ on a signal coming from the left and flips $0$ to $2$ on a signal coming from the bottom (forth). A \textsf{MERGE} flips the center cell from $1$ to $2$ if a signal arrives from the top or from the left (fifth).}
    \label{fig:gate-encoding}
\end{figure}
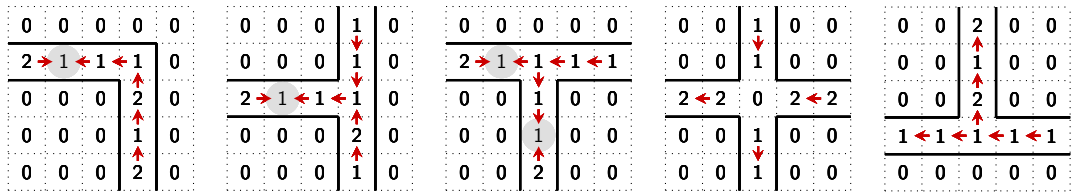
Notice that the new active transitions do not affect wires, \textsf{FORK} or \textsf{DUAL} from \Cref{fig:gate-encoding}. The new \textsf{MERGE} functions like a \textsf{DUAL}, but it is sufficient to flip just one direction indicator to transmit a signal, see \Cref{fig:gate-encoding} (fifth). Similarly, for \textsf{CROSS}, as soon as one signal arrives at the crossing, it gets immediately transmitted by ``flipping'' the $0$ cell and without creating any interference, as shown in \Cref{fig:gate-encoding} (forth). Multiple signals arriving simultaneously can cause potential problems, but this is forbidden by definitions of \textsf{MERGE} and \textsf{CROSS}. 

We note that this implementation of gates ``breaks'' them as soon as they have processed a signal, but this will not have any effect on feed-forward circuits, since those do not contain loops. It is clear that as long as wires have a spacing of at least two $0$ cells in between, any feed-forward circuit embedded in $\mathbb{X}$ operates correctly, hence is universal. We verified our construction using the \texttt{Golly} \cite{golly} simulator (implementation provided in \appendixref{sec:app-gates}).

\begin{remark}
    Notice that due to introduction of \textsf{MERGE} and \textsf{CROSS} gates, $\mathbb{X}$ no longer has a ``dual'' flip automata network as in \Cref{lem:causality}, therefore it does not guarantee invariant histories. Nevertheless, it still has invariant histories on all initial configurations which correspond to valid circuits.
\end{remark}



\bibliography{main}

\clearpage

\ifincludeappendix

\appendix

\section{Appendix: Universality}\label{sec:app-universality}

In \cref{sec:universality} we presented the concept of intrinsic universality, which is a stronger notion of universality in cellular automata: if some cellular automaton $A$ is capable of simulating any cellular automaton it is capable of simulating any Turing machine. In fact, Ollinger \cite{non-universal} showed that for almost any reasonable definition of universality on cellular automata, there will always be universal cellular automata, which are not intrinsically universal. This means that we require a separate, ``weaker'' notion of universality for (a)synchronous cellular automata. 

To define a computation on cellular automata we require, similarly to (universal) Turing machines \cite{small-utms}, the notions of encoder (to initialize the system) and decoder (to read the output). Moreover, since cellular automata are non-terminating by design, one needs to define a \emph{halting condition} which marks the point when the output can be read from the configuration of a cellular automaton (or from its orbit). One has to be exceptionally careful when designing these components:
\begin{itemize}
    \item Contrary to Turing machines there is no notion for a ``blank'' cell in cellular automata (every cell is independently capable of local computation), hence all inputs are infinite by design
    \item Since configurations are infinite, one has to limit their complexity (otherwise, enumerating all possible outputs in the initial configuration makes identity automata universal)
    \item Neither encoder nor decoder is allowed to be too powerful, otherwise this can be abused to ``outsource'' the computation to encoding or decoding subroutines instead of the automaton itself
\end{itemize}
Most historical universality constructions for cellular automata were \emph{finitistic} in nature and followed the approach of using only $q$-\emph{finite} configurations for computation, i.e., requiring initial conditions to have all but a finite number of cells in some specified state $q$ \cite{durand-gol}. However, after Banks \cite{Banks} constructed a rotation-symmetric $2$-state von Neumann intrinsically universal cellular automaton and, Codd has proven that no automaton with such properties can be universal on $q$-finite configurations, a problem with this approach became apparent \cite{lindgren}. Because of that, modern definitions of universality are no longer finitistic: after all, what \emph{is} ``void'' in the context of cellular automata?

Ultimately periodic configurations from \cref{def:ultimately-periodic} are a natural choice for initial conditions--while being infinite, they carry only a finite amount of information. Moreover, our slightly adapted version allows to naturally embed $1$-dimensional ultimately periodic configurations into $2$-dimensional ones. Indeed, this is extremely desirable, since we expect any natural extension of universal $d$-dimensional cellular automaton--for example by allowing it to shift across the new dimension, or enforce certain boundary conditions--to remain universal in a $(d+1)$-dimensional setting.

Finally, it makes sense to require halting behavior to be so simple that it can be easily detected by a local observer, without performing any \emph{actual} computation. The above considerations lead to the following definition of universality in cellular automata, as presented in \cite{durand-gol}.

\begin{definition}[Universality in Cellular Automata \cite{durand-gol}]\label{def:universal-ca}
     A cellular automaton $A = (S, N, f, \Z^d)$ is called universal if for some partial computable universal function $U \colon {\{0, 1\}}^\ast \longrightarrow {\{0, 1\}}^\ast \cup \{\bot\}$ we can find an ultimately periodic configuration $c^\infty$ (compiler) and total computable functions $e \colon {\{0, 1 \}}^\ast \longrightarrow \bigl\lceil S^{\Z^d} \bigl\rceil$ (encoder) and $d \colon \bigl\lceil S^{\Z^d} \bigl\rceil \longrightarrow {\{0, 1 \}}^\ast$ (decoder), a pattern $ p\colon D \longrightarrow S$ with $D \subset \Z^{d}$ finite (halting condition) such that for every input $w \in {\{0, 1\}}^\ast$:
     \begin{enumerate}
         \item The initial configuration $c_0 = \widetilde{e(w)}$ is a finite perturbation of $c^\infty$, i.e. $\| c_0 - c^\infty \| < \infty$ where  $\| \cdot \|$ is the $l_0$ norm on  $S^{\Z^d}$.
         
        \item If $U(w)$ is defined there exists the earliest time stamp $t \in \N$ in the orbit $c_0, c_1, \ldots$ such that $c_t$ contains $p$, and satisfies $d(\lceil c_t \rceil) = U(w)$.

        \item If $U(w) = \bot$ then $p$ will never appear in the orbit $c_0, c_1, \ldots$ 
     \end{enumerate}
\end{definition}

\Cref{def:universal-ca} has proven itself to be both sufficiently general and sufficiently robust: almost all universal constructions that we have encountered in literature satisfy it. Durand \cite{durand-gol} also explains why many less restrictive modifications of \Cref{def:universal-ca} allow universality of trivial automata (some earlier attempts lead to ``shift'' automata being wrongly misclassified as universal) and why more restrictive definitions might miss important special cases.

\section{Appendix: Examples}\label{app:examples}

\begin{example}[Verifying Commutativity]\label{example:local-commutativity}
    Given an ACA $\mathbb{A} = (S, N, f, \mathbb{Z}^d)$ with the global transition function $G$ observe that for any singleton update set $\{i\}$, the transformation $G_{\{i\}}$ can be fully captured by the local rule function $f$. Hence, given two singleton update sets $\{i\}$ and $\{j\}$, to verify commutativity of $G$, i.e. whether
    \begin{equation*}
        G_{\{i\}} \circ G_{\{j\}} = G_{\{j\}} \circ G_{\{i\}}
    \end{equation*}
    one only needs to check the cases where cells $i$ and $j$ are neighboring, i.e. for $N = (n_1, \ldots, n_k)$
    \begin{equation*}
        i \in \{j+n_1, \ldots, j+n_k \}
    \end{equation*}
    since in a different setting, the update of $i$ will not affect the update of $j$. Thus, we can verify commutativity by considering all possible interactions of the local rule function with itself. In the case of a first neighbors cellular automaton, it is sufficient to show that for any states $s_0, s_1, s_2, s_3 \in S$ with $f(s_0, s_1, s_2) \neq s_1$ and $f(s_1, s_2, s_3) \neq s_2$ (we call such transitions \emph{active}) it holds:
    \begin{align*}
        & f(f(s_0, s_1, s_2), s_2, s_3) = f(s_1, s_2, s_3) & f(s_0, s_1, f(s_1, s_2, s_3)) = f(s_0, s_1, s_2)
    \end{align*}
    Since neighborhoods of size $N$ have at most $N-1$ overlaps (see \Cref{fig:overlaps} for an example), we conclude that commutativity can be verified in $\mathcal{O}(|N|{|f|}^2)$ time, where $|f| \leq S^{|N|}$ denotes the number of active transitions in $f$.

    \begin{figure}[htb]
    \centering
    \begin{tikzpicture}[scale=0.8, transform shape]
        \draw[thick, fill=lightgray] (0.5, 0.5) rectangle (1.5, 1.5);
        \draw[thick] (0, 0) rectangle (1.5, 1.5);
        \draw[thick] (0, 0.5) -- (1.5, 0.5);
        \draw[thick] (0, 1) -- (1.5, 1);
        \draw[thick] (0.5, 0) -- (0.5, 1.5);
        \draw[thick] (1, 0) -- (1, 1.5);
        \draw[thick] (0.5, 0.5) rectangle (2.0, 2.0);
        \draw[thick] (0.5, 1.0) -- (2, 1);
        \draw[thick] (0.5, 1.5) -- (2, 1.5);
        \draw[thick] (1, 0.5) -- (1, 2);
        \draw[thick] (1.5, 0.5) -- (1.5, 2.0);
        \draw[thick, fill=lightgray] (4, 0.5) rectangle (5, 1.5);
        \draw[thick] (4.0, 0) rectangle (5.5, 1.5);
        \draw[thick] (4.0, 0.5) -- (5.5, 0.5);
        \draw[thick] (4.0, 1) -- (5.5, 1);
        \draw[thick] (4.5, 0) -- (4.5, 1.5);
        \draw[thick] (5, 0) -- (5, 1.5);
        \draw[thick] (3.5, 0.5) rectangle (5.0, 2.0);
        \draw[thick] (3.5, 1.0) -- (5, 1);
        \draw[thick] (3.5, 1.5) -- (5, 1.5);
        \draw[thick] (4, 0.5) -- (4, 2);
        \draw[thick] (4.5, 0.5) -- (4.5, 2.0);
        \draw[thick, fill=lightgray] (7.5, 0) rectangle (8.5, 1.5);
        \draw[thick] (7.0, 0) rectangle (8.5, 1.5);
        \draw[thick] (7.0, 0.5) -- (8.5, 0.5);
        \draw[thick] (7.0, 1) -- (8.5, 1);
        \draw[thick] (7.5, 0) -- (7.5, 1.5);
        \draw[thick] (8, 0) -- (8, 1.5);
        \draw[thick] (7.5, 0) rectangle (9, 1.5);
        \draw[thick] (7.5, 0.5) -- (9, 0.5);
        \draw[thick] (7.5, 1) -- (9, 1);
        \draw[thick] (8, 0) -- (8, 1.5);
        \draw[thick] (8.5, 0) -- (8.5, 1.5);
        \draw[thick, fill=lightgray] (10.5, 0.5) rectangle (12, 1.5);
        \draw[thick] (10.5, 0) rectangle (12, 1.5);
        \draw[thick] (10.5, 0.5) -- (12, 0.5);
        \draw[thick] (10.5, 1) -- (12, 1);
        \draw[thick] (11, 0) -- (11, 1.5);
        \draw[thick] (11.5, 0) -- (11.5, 1.5);
        \draw[thick] (10.5, 0.5) rectangle (12, 2);
        \draw[thick] (10.5, 1) -- (12, 1);
        \draw[thick] (10.5, 1.5) -- (12, 1.5);
        \draw[thick] (11, 0.5) -- (11, 2);
        \draw[thick] (11.5, 0.5) -- (11.5, 2);
        \end{tikzpicture}
        \caption{Eight possible overlaps of the local transition function $f$ with itself for Moore's neighborhood (each overlap can appear twice).}
        \label{fig:overlaps}
    \end{figure}
\end{example}

\begin{example}[Marching Soldiers \cite{gacs, nakamura}]\label{ex:marching-soldiers}
    Here we present a classical technique from \cite{gacs, nakamura} which allows us to simulate a CA with $q$ states using an ACA with $3q^2$ states in real time, and show that it abides by \Cref{def:invariant}.

    Let $A = (S, N, f, \Z^d)$ be a CA with symmetric neighborhood $N$, then we define an ACA $\mathbb{B} = (S \times S \times \{1, 2, 3\}, N, f', \Z^d)$ where each state in $\mathbb{B}$ is a triplet consisting of the ``current'' state $s^\text{cur}$, ``previous'' state $s^\text{prev}$ and a timer $t$. Let us denote with $\oplus$ the modulo $3$ addition in $\{1, 2, 3\}$, and for brevity only consider the case where $N = (-1, 0, 1)$ is first neighbors. Then, if $t_{-1}, t_1 \in \{t_0, t_0 \oplus 1\}$ the current cell becomes active and we can define the update function $f'$ as
    \begin{equation*}
        f'((s_{-1}^\text{prev}, s_{-1}^\text{cur}, t_{-1}), (s_0^\text{prev}, s_0^\text{cur}, t_0), (s_{1}^\text{prev}, s_{1}^\text{cur}, t_{1})) = (s_0^\text{cur}, f( \tau , s_0^\text{cur}, \eta ), t \oplus 1)
    \end{equation*}
    where $\tau = s_{-1}^\text{cur}$ if $t_{-1} = t_0$ and $\tau = s_{-1}^\text{prev}$ if $t_{-1} = t_0 \oplus 1$ (and analogously for $\eta$ and $s_1^\text{cur}, s_1^\text{prev}$). In other words, a cell is active if its timer is either one time step behind or at the same level in the update history as all of its neighbors, in which case it can use its $s^\text{cur}$ and either the $s^\text{prev}$ (if neighbor's timer is ahead) or $s^\text{cur}$ (if neighbor's timer is equal) state of its neighbor as an input to simulate transition function $f$ and update its state and timer. We emphasize that the current $s^\text{cur}$ becomes the next $s^\text{prev}$, as shown in Figure \ref{fig:XOR-marching}.

    \begin{figure}[H]
    \centering
    \begin{tikzpicture}[font=\sffamily\scriptsize]
    
    \draw[->, thick] (-0.5,0.0) -- (5.5,0.0) node[below] {$\mathbb{Z}$};
    \draw[->, thick] (-0.5,0.0) -- (-0.5,4) node[left] {$t$};

    \draw[thick, fill = lightgray, fill opacity = 0.3] (-0.5, 0.0) -- (-0.5,1.0) -- (0.0, 1.0) -- (0.0, 1.5) -- (0.5, 1.5) -- (0.5, 2) -- (1, 2) -- (1, 2.5) --  (3.5, 2.5) -- (3.5, 2) -- (4, 2) -- (4, 1.5) -- (4.5, 1.5) --(4.5, 1) --(5, 1) -- (5, 0.0) -- cycle;

    \draw[] (0, 0) -- (0, 1.0);
    \draw[] (0.5, 0) -- (0.5, 1.5);
    \draw[] (1, 0) -- (1, 2);
    \draw[] (1.5, 0) -- (1.5, 2.5);
    \draw[] (2, 0) -- (2, 2.5);
    \draw[] (2.5, 0) -- (2.5, 2.5);
    \draw[] (3, 0) -- (3, 2.5);
    \draw[] (3.5, 0) -- (3.5, 2.5);
    \draw[] (4, 0) -- (4, 2);
    \draw[] (4.5, 0) -- (4.5, 1.5);

    \draw[] (-0.5,0.5) -- (5, 0.5);
    \draw[] (0,1) -- (4.5, 1);
    \draw[] (0.5,1.5) -- (4, 1.5);
    \draw[] (1,2) -- (3.5, 2);
    \draw[] (1.5,2.5) -- (3, 2.5);

    \draw[dashed] (-0.5,0.5) -- (5, 0.5);
    \draw[dashed] (-0.5,1) -- (5, 1);
    \draw[dashed] (-0.5,1.5) -- (5, 1.5);
    \draw[dashed] (-0.5,2) -- (5, 2);
    \draw[dashed] (-0.5,2.5) -- (5, 2.5);

    \node[] at (-0.25,0.75) {$\text{0|0}$};
    \node[] at (0.25,0.75) {$\text{1|1}$};
    \node[] at (0.75,0.75) {$\text{0|0}$};
    \node[] at (1.25,0.75) {$\text{0|0}$};
    \node[] at (1.75,0.75) {$\text{1|1}$};
    \node[] at (2.25,0.75) {$\text{1|1}$};
    \node[] at (2.75,0.75) {$\text{0|0}$};
    \node[] at (3.25,0.75) {$\text{1|1}$};
    \node[] at (3.75,0.75) {$\text{1|1}$};
    \node[] at (4.25,0.75) {$\text{1|1}$};
    \node[] at (4.75,0.75) {$\text{0|0}$};

    \node[] at (0.25,1.25) {$\text{1|1}$};
    \node[] at (0.75,1.25) {$\text{1|0}$};
    \node[] at (1.25,1.25) {$\text{1|0}$};
    \node[] at (1.75,1.25) {$\text{0|1}$};
    \node[] at (2.25,1.25) {$\text{0|1}$};
    \node[] at (2.75,1.25) {$\text{0|0}$};
    \node[] at (3.25,1.25) {$\text{1|1}$};
    \node[] at (3.75,1.25) {$\text{1|1}$};
    \node[] at (4.25,1.25) {$\text{0|1}$};

    \node[] at (0.75,1.75) {$\text{1|1}$};
    \node[] at (1.25,1.75) {$\text{0|1}$};
    \node[] at (1.75,1.75) {$\text{1|0}$};
    \node[] at (2.25,1.75) {$\text{0|0}$};
    \node[] at (2.75,1.75) {$\text{0|0}$};
    \node[] at (3.25,1.75) {$\text{0|0}$};
    \node[] at (3.75,1.75) {$\text{0|1}$};

    \node[] at (1.25,2.25) {$\text{0|0}$};
    \node[] at (1.75,2.25) {$\text{1|1}$};
    \node[] at (2.25,2.25) {$\text{1|0}$};
    \node[] at (2.75,2.25) {$\text{0|0}$};
    \node[] at (3.25,2.25) {$\text{0|0}$};

    \node[] at (5.25,0.25) {$\textbf{1}$};
    \node[] at (5.25,0.75) {$\textbf{2}$};
    \node[] at (5.25,1.25) {$\textbf{3}$};
    \node[] at (5.25,1.75) {$\textbf{1}$};
    \node[] at (5.25,2.25) {$\textbf{2}$};
    \node[] at (5.25,2.75) {$\textbf{Timer}$};

    \node[] at (2.25, 2.75) {\textbf{active}};
    
    \end{tikzpicture}
    \caption{An asynchronous real time simulation of \rle{150} (\textsf{XOR}) using marching soldiers with $12$ states. All cells on the same level have the same timer value, and each tuple represents $s^{\text{cur}} | s^\text{prev}$. Notice that only the $3$ central cells are currently active.}
    \label{fig:XOR-marching}
    \end{figure}

    Intuitively, since by design the ``time gap'' between neighboring cells can be at most one time step, every cell always has enough information to simulate a transition. To see that $\mathbb{B}$ simulates $A$, we notice that $\mathbb{B}$ is commutative \cite{gacs}, hence by \cref{thm:commutativity} has invariant histories. Moreover, if we initialize a starting configuration with timer value $t = 1$ everywhere, the (invariant) history of $\mathbb{B}$ corresponds exactly to the synchronous space-time of $\mathbb{B}$. Now we define a mapping $\psi: (s^{\text{prev}}, s^{\text{cur}}, t) \mapsto s^{\text{cur}}$ and observe that $G^A \prec H^B$.

\end{example}

\begin{example}[One-Way Simulation \cite{morita-book}]\label{ex:one-way}
    For any first neighbors cellular automaton $A = (S, N, f, \Z)$ we can construct a one-way cellular automaton $B = (S \times S, (-1, 0), f', \Z)$ that uses one cell of $B$ to simulate two cells of $A$. For any $s_{-3}, s_{-2}, s_{-1}, s_0 \in S$ we define
    \begin{equation*}
        f'(s_{-3}s_{-2}, s_{-1}s_0) = f(s_{-3}, s_{-2}, s_{-1}) f(s_{-2}, s_{-1}, s_0)
    \end{equation*}
    where we use $ab$ notation to describe a tuple containing $a$ and $b$. Observe that this way the simulated configuration is constantly shifting to the right (see \Cref{fig:one-way-simulation}).
    \begin{figure}[H]
        \centering
        \begin{tikzpicture}
        \tikzstyle{every path}=[semithick]
        \tikzstyle{every node}=[font=\sffamily\footnotesize]
        \tikzstyle{tmtape}=[draw,minimum size=0.6cm]
        
        \begin{scope}[start chain=1 going right,node distance=-0.15mm]
            \node [on chain=1,tmtape] {$\scriptstyle \ldots$};
            \node [on chain=1,tmtape] {1 \quad 0};
            \node [on chain=1,tmtape] {0 \quad 0};
            \node [on chain=1,tmtape, fill=lightgray!30] (l) {1 \quad \textcolor{darkred}{\textbf{0}}};
            \node [on chain=1,tmtape, fill = lightgray!30] (r){\textcolor{darkred}{\textbf{0}} \quad \textcolor{darkred}{\textbf{1}}};
            \node [on chain=1,tmtape] {1 \quad 1};
            \node [on chain=1,tmtape] {1 \quad 0};
            \node [on chain=1,tmtape] (last1) {$\scriptstyle \ldots$};
            \node [right = 0.5cm of last1] {$t=0$};
        \end{scope}
        
        \begin{scope}[yshift=-1cm,start chain=1 going right,node distance=-0.15mm]
            \node [on chain=1,tmtape] {$\scriptstyle \ldots$};
            \node [on chain=1,tmtape] {0 \quad 0};
            \node [on chain=1,tmtape] {1 \quad 0};
            \node [on chain=1,tmtape] {1 \quad 1};
            \node [on chain=1,tmtape, fill = lightgray!30] (n){1 \quad \textcolor{darkred}{\textbf{1}}};
            \node [on chain=1,tmtape] {0 \quad 1};
            \node [on chain=1,tmtape] {1 \quad 0};
            \node [on chain=1,tmtape] (last2){$\scriptstyle \ldots$};
            \node [right = 0.5cm of last2] {$t=1$};
        \end{scope}
        \end{tikzpicture}
        \caption{Simulating first neighbors \rle{150} (\textsf{XOR}) automaton using a one-way automaton $B$ by storing two cells inside a single cell. The domain of dependency for a selected cell in $B$ is highlighted in gray and for a simulated cell in red.}
        \label{fig:one-way-simulation}
    \end{figure}
\end{example}

\section{Appendix: Omitted Proofs}\label{app:ommited-proofs}

\fliptheorem*

\begin{proof}
    We start by observing that in flip automata networks updating any subset of nodes at once is equivalent to updating all of these nodes one by one. Indeed, since none of the nodes in the update set are neighboring, updating one will not influence any others, hence any update schedule behaves in the same way as some sequential update schedule.
    
    Let $c_0$ be some initial configuration and consider two finite prefixes of update schedules $\zeta$ and $\zeta'$, where we only keep nodes that have all indicators pointing towards them (others do not affect the computation). Let $c$ and $c'$ be configurations after updating $\zeta$ and $\zeta'$ respectively, and let $\Ima$ be the function extracting the multiset of updated nodes from an update schedule (ignoring the order). 
    \begin{claim*}
        If $\Ima \zeta = \Ima \zeta'$ then $c(i) = c'(i)$ for any node $i$. 
    \end{claim*}
    Indeed, observe that both $c(i)$ and $c'(i)$ are only dependent on the finite set of updates $\mathcal{I}$ in the neighboring nodes. Let $\zeta_\mathcal{I}$ be the restriction of $\zeta$ on $\mathcal{I}$, and $\zeta'_\mathcal{I}$ be the restriction of $\zeta'$ respectively. Since $\Ima \zeta = \Ima \zeta'$ we clearly have $\Ima \zeta_\mathcal{I} = \Ima \zeta'_\mathcal{I}$. Due to previous observation, we can assume that all updates in $\zeta_\mathcal{I}$ and $\zeta'_\mathcal{I}$ are sequential. Thus $\zeta_\mathcal{I}$ and $\zeta'_\mathcal{I}$ are both finite and only differ in the ordering of individual updates. Now let $t$ be the earliest point where $\zeta_\mathcal{I}(t) \neq \zeta'_\mathcal{I}(t)$. We can find $\zeta_\mathcal{I}(t)$ within ${\zeta'_{\mathcal{I}}}(t), {\zeta'_{\mathcal{I}}}(t+1) \ldots $ (at some point $t'$) and simply insert it before $\zeta'_\mathcal{I}(t)$. Notice this will not have any effect on the later states in $\zeta'_\mathcal{I}$, since, due to the flip property, none of the neighbors of $\zeta'_\mathcal{I}(t')$ can appear between $t$ and $t'$. We can repeat this process until $\zeta_\mathcal{I} = \zeta'_\mathcal{I}$, concluding that both update schedules result in the same value, and hence $c(i) = c'(i)$. Finally, using the claim we can conclude that all flip automata networks have invariant histories since the final configuration is only dependent on the multiset of updated nodes and not their order.
\end{proof} 

\causality*

\begin{proof}
    To keep the proof concise, we present the construction for the $2$-dimensional von Neumann neighborhood only, but it is easy to generalize this outline to any dimension. The idea is to extend our ACA $A$ to a dual FAN $\mathcal{A}$ by assigning each (bidirectional) edge on the grid a direction indicator. Let $N=((1, 0), (0, 1), (0, 0), (0, -1), (-1, 0))$ be an explicit enumeration for von Neumann neighborhood and, and for arbitrary cell $i$ let $s_{(1, 0)}, s_{(0, 1)}, s, s_{(0, -1)}, s_{(-1, 0)}$ refer to the states in the neighborhood of $i$. Let us consider the neighbors with coordinates $(1, 0)$ and $(0, 1)$ and notice that due to the symmetry of the neighborhood, if for any cell $i$, we assign direction indicators $\in \{ \nearrow, \swarrow \}$ to its $(1, 0)$ and $(0, 1)$ neighbors, it will uniquely determine all direction indicators in the underlying FAN (is assigning indicators to top and right neighbors is sufficient to get a bijection between configurations). 
    
    We construct a directed tripartite graph $G_3 = (S \uplus S^E \uplus S^N, E)$ (see \Cref{fig:matching}) where $S^E$ and $S^N$ are copies of $S$ and we interpret the $S^E$ component as the ``eastern neighbor'' of $i$ (call the component \textbf{E}), and the $S^N$ component as the ``northern neighbor'' of $i$ (call the component \textbf{N}). Now let $s^N_{(k, l)}, s^E_{(k, l)}$ be the copies of $s_{(k, l)}$ within the \textbf{N} and \textbf{E} components respectively. The orientation of edges $E$ in $G_3$ between these components will determine the direction indicators within the FAN $\mathcal{A}$. 
    
    Consider the set of neighborhoods of all active cells (we call such neighborhoods/transitions \emph{active}), i.e. those which satisfy
    \begin{equation*}
        f(s_{(1, 0)}, s_{(0, 1)}, s, s_{(0, -1)}, s_{(-1, 0)}) \neq s
    \end{equation*} 
    Now, for every active transition in $f$
    \begin{itemize}
        \item We add an edge to $G_3$ oriented from $s^N_{(1, 0)}$ to $s_{(0, 0)}$ and from $s^E_{(0, 1)}$ to $s_{(0, 0)}$
        \item We add an edge to $G_3$ oriented from $s_{(0, 0)}$ to $s^N_{(-1, 0)}$ and from $s_{(0, 0)}$ to $s^E_{(0, -1)}$
    \end{itemize}
    \Cref{fig:matching} illustrates $G_3$ for a simple binary ACA $\mathbb{A}_{01} = (\{0, 1\}, N, f, \Z^2)$ with only two active transitions:
    \begin{align*}
        & f(0, 0, 1, 1, 0) = 0 & f(0, 1, 1, 0, 0) = 0
    \end{align*}
    \begin{figure}[t]
    \centering
    \begin{tikzpicture}[
            > = stealth, 
            shorten > = 1pt, 
            auto,
            node distance = 3cm, 
            semithick, 
            scale=0.75, transform shape
        ]

        \tikzstyle{every state}=[
            draw = black,
            thick,
            fill = white,
            minimum size = 4mm
        ]

        \node[state] (s) {$0$};
        \node[state] (s1) [below right=0.3cm and 0.3cm of s] {$1$};
        \node[state] (s2) [above=3cm of s] {$0^N$};
        \node[state] (s3) [above=3cm of s1] {$1^N$};
        \node[state] (s4) [right=3cm of s] {$0^E$};
        \node[state] (s5) [right=3cm of s1] {$1^E$};
        
        \path[<-] (s1) edge node{} (s2);
        \path[<-] (s3) edge node{} (s1);
        \path[->] (s) edge node{} (s2);
        \path[<-] (s3) edge node{} (s);
        \path[->] (s) edge node{} (s4);
        \path[->] (s4) edge node{} (s1);
        \path[->] (s5) edge node{} (s1);
        \path[->] (s) edge node{} (s5);

        \node[above right = 0.05cm and 1cm of s3] {\textbf{(N)}};
        \node[above right = 0.05cm and 1cm of s5] {\textbf{(E)}};

        \node[circle, draw, dashed, inner xsep=1mm, inner ysep = 1mm, fit=(s) (s1)] {};
        \node[circle, draw, dashed, inner xsep=1mm, inner ysep = 1mm, fit=(s2) (s3)] {};
        \node[circle, draw, dashed, inner xsep=1mm, inner ysep = 1mm, fit=(s4) (s5)] {};
    \end{tikzpicture}
    \caption{Matching construction $G_3$ for ACA $\mathbb{A}_{01}$ with von Neumann neighborhood in two dimensions.}
    \label{fig:matching}
    \end{figure}
    After we have finished this process, we check if there are any nodes in the $S$ component that do not have an edge to some node in $S^N$ or $S^E$, in which case we add an edge to all missing nodes (oriented arbitrarily). Now, for any configuration $c \in S^{\Z^2}$ we assign direction indicators using the edge orientations from $G_3$ (and using the north/east rule) and observe the following:
    \begin{enumerate}
        \item There are no bidirectional edges in $G_3$ (otherwise, there must exist two transitions which are both active and which overlap).
        \item There is a directed edge between every node in $S$ and every node in $S^N$ and $S^E$ (enforced by construction).
        \item All edges on the grid $\Z^2$ have a unique direction indicator, and if a transition of cell $i$ changes its state, then all direction indicators from nodes
        \begin{equation*}
            \{ i + (1, 0), i + (0, 1), i + (0, -1), i + (-1, 0)\}
        \end{equation*}
        necessarily point to $i$ (enforced by north/east orientation).
    \end{enumerate}
    This means that the tuple $(S, f)$ induces direction indicators for any configuration $c$ on $\Z^2$ in such a manner that $\mathcal{A}$ becomes a valid flip automata network. FAN $\mathcal{A}$ relies on $G_3$ to define its local rule function $f'$ and determine which direction indicators will get flipped on each update. It follows that arbitrary evolutions of $\mathbb{A}$ naturally commute under this projection with evolutions of $\mathcal{A}$.
    
\end{proof}

\semiautomaton*

\begin{proof}
    We prove the statement by providing a simulation for $A$. Let $\mathbb{B}$ be as defined in the statement, then for any input string $w \in \Sigma^\ast$ with $|w| = n$ we define an ultimately periodic configuration $c_0$
    \begin{align*}
        &c_0(i) = w(i) \quad \text{for $i \in \{1, \ldots, n\}$} &c_0(i) = q_0\varnothing \quad \text{otherwise}
    \end{align*}
    Furthermore, we define the local rule function $f$ to mimic the transition function $\delta$ for all $s' \in \Sigma$ and $qs \in Q \times (\Sigma \cup \{\varnothing\})$ as $f(q s, s') = \delta(q, s')$ and in case $s' \in Q \times (\Sigma \cup \{\varnothing\})$ the state is never changed. We notice that at each point in time, independently of the asynchronous update schedule $\zeta$ at most one cell in $\mathbb{B}$ is active, hence $\mathbb{B}$ has invariant histories on $c_0$. Furthermore, by construction, it is clear that $\mathbb{B}$ reaches a limit point $c$ as soon as cells become tuple states, and $w$ gets overwritten by the output of $A$ such that $c$ now contains the output of $A$ on $w$ and $c(n)$ encodes the final state.
\end{proof}

\embedding*

\begin{proof}
    We give a proof sketch in the spirit of \cite{circuit-unviersality}. The main idea is to embed the synchronous space-time of arbitrary $1$-dimensional cellular automaton $A = (S, (-1, 0, 1), f, \Z)$ into every configuration of the specified $2$-dimensional automaton $B$. We uniquely encode the states in $s \in S$ as binary strings
    \begin{equation*}
        s \mapsto \lceil s \rceil \in {\{0, 1\}}^\ast
    \end{equation*}
    and design a Boolean function $\hat{f}$ such that
    \begin{equation*}
        \hat{f}(\lceil s_{-1} \rceil, \lceil s_{0} \rceil, \lceil s_{1} \rceil) = \lceil f(s_{-1}, s_0, s_1) \rceil
    \end{equation*}
    Since $B$ can implement any Boolean function we can construct a pattern $p$ embedding $\hat{f}$ and then periodically duplicate $p$ as shown in \Cref{fig:logical-universality}, such that individual $\hat{f}$-blocks are ``wired'' together to enable state sharing. It we embed initial configuration of $A$ into some row of initial configuration $c_0$ for $B$, it follows that the orbit of $c_0$ in $B$ will contain the embedding for the synchronous space-time evolution of $A$. Notice that if $c_0$ is ultimately periodic, then is the initial configuration of $B$ due to \Cref{def:ultimately-periodic}.
    \begin{figure}[htb]
        \centering
        \begin{tikzpicture}[scale=0.85, transform shape]
            \draw[thick, rounded corners, fill=lightgray!30] (0, 3) rectangle ++ (2, 0.65) node[pos=.5]{$\hat{f}$};
            \draw[thick, rounded corners, fill=lightgray!30] (2.5, 3) rectangle ++ (2, 0.65) node[pos=.5]{$\hat{f}$};
            \draw[thick, rounded corners, fill=lightgray!30] (5, 3) rectangle ++ (2, 0.65) node[pos=.5]{$\hat{f}$};
            \draw[thick, rounded corners, fill=lightgray!30] (7.5, 3) rectangle ++ (2, 0.65) node[pos=.5]{$\hat{f}$};
            \draw[thick, rounded corners, fill=lightgray!30] (0, 1.5) rectangle ++ (2, 0.65) node[pos=.5]{$\hat{f}$};
            \draw[thick, rounded corners, fill=lightgray!30] (2.5, 1.5) rectangle ++ (2, 0.65) node[pos=.5]{$\hat{f}$};
            \draw[thick, rounded corners, fill=lightgray!30] (5, 1.5) rectangle ++ (2, 0.65) node[pos=.5]{$\hat{f}$};
            \draw[thick, rounded corners, fill=lightgray!30] (7.5, 1.5) rectangle ++ (2, 0.65) node[pos=.5]{$\hat{f}$};
            \draw[thick] (1, 2.6) -- (3, 2.6);
            \draw[->, thick, >=stealth] (3, 2.6) -- (3, 2.15);
            \draw[->, thick, >=stealth] (1, 3) -- (1, 2.15);
            \draw[thick] (1.5, 2.4) -- (5.5, 2.4);
            \draw[->, thick, >=stealth] (1.5, 2.4) -- (1.5, 2.15);
            \draw[->, thick, >=stealth] (5.5, 2.4) -- (5.5, 2.15);
            \draw[->, thick, >=stealth] (3.5, 3) -- (3.5, 2.15);
            \draw[thick] (4, 2.6) -- (8, 2.6);
            \draw[->, thick, >=stealth] (4, 2.6) -- (4, 2.15);
            \draw[->, thick, >=stealth] (8, 2.6) -- (8, 2.15);
            \draw[->, thick, >=stealth] (6, 3) -- (6, 2.15);
            \draw[thick] (6.5, 2.4) -- (8.5, 2.4);
            \draw[->, thick, >=stealth] (6.5, 2.4) -- (6.5, 2.15);
            \draw[->, thick, >=stealth] (8.5, 3) -- (8.5, 2.15);
            \draw[->, thick, >=stealth] (1, 1.5) -- (1, 0.8);
            \draw[->, thick, >=stealth] (3.5, 1.5) -- (3.5, 0.8);
            \draw[->, thick, >=stealth] (6, 1.5) -- (6, 0.8);
            \draw[->, thick, >=stealth] (8.5, 1.5) -- (8.5, 0.8);
            \node[] at (1, 0.5) {$\ldots$};
            \node[] at (3.5, 0.5) {$\ldots$};
            \node[] at (6, 0.5) {$\ldots$};
            \node[] at (8.5, 0.5) {$\ldots$};
        \end{tikzpicture}
        \caption{Embedding the space-time of $1$-dimensional first neighbors CA $A$ launched on an ultimately periodic configuration into an ultimately periodic configuration of a $2$-dimensional CA $B$.}
        \label{fig:logical-universality}
    \end{figure}
    
\end{proof}

\section{Appendix: Gates Implementation}\label{sec:app-gates}

We can combine our custom delay-insensitive gates to create asynchronous \textsf{NAND} (\Cref{fig:nand}) and fan-out (\Cref{fig:fan-out}) using dual-rail encoding. To verify the correctness of our construction we have implemented these gates in \texttt{Golly} \cite{golly}, a cellular automaton simulator and observed their space-time evolution. We use $\raisebox{-3.5pt}{\CustomSmall{w}}$ for $0$, $\raisebox{-3.5pt}{\CustomSmall{g}}$ for $1$ and $\raisebox{-3.5pt}{\CustomSmall{b}}$ for $2$ for a more clear representation. 

\begin{figure}[H]
    \centering
    \includegraphics[width=0.3\linewidth]{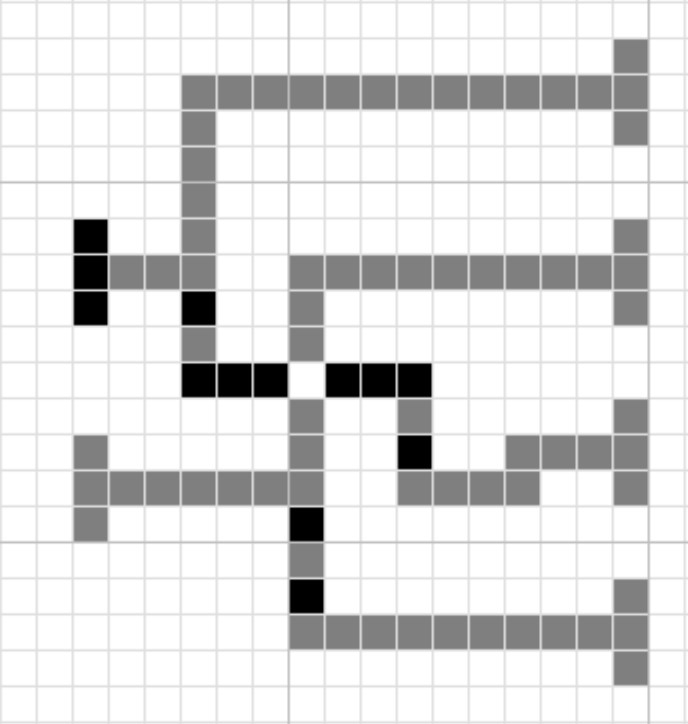}
    \caption{Implementation of fan-out pattern for the ACA $\mathbb{X}$ using dual-rail encoding. The input is set to true (the first ``tube'' on the left is colored black making its right neighbor active).}
    \label{fig:fan-out}
\end{figure}

\begin{figure}[H]
    \centering
    \includegraphics[width=0.8\linewidth]{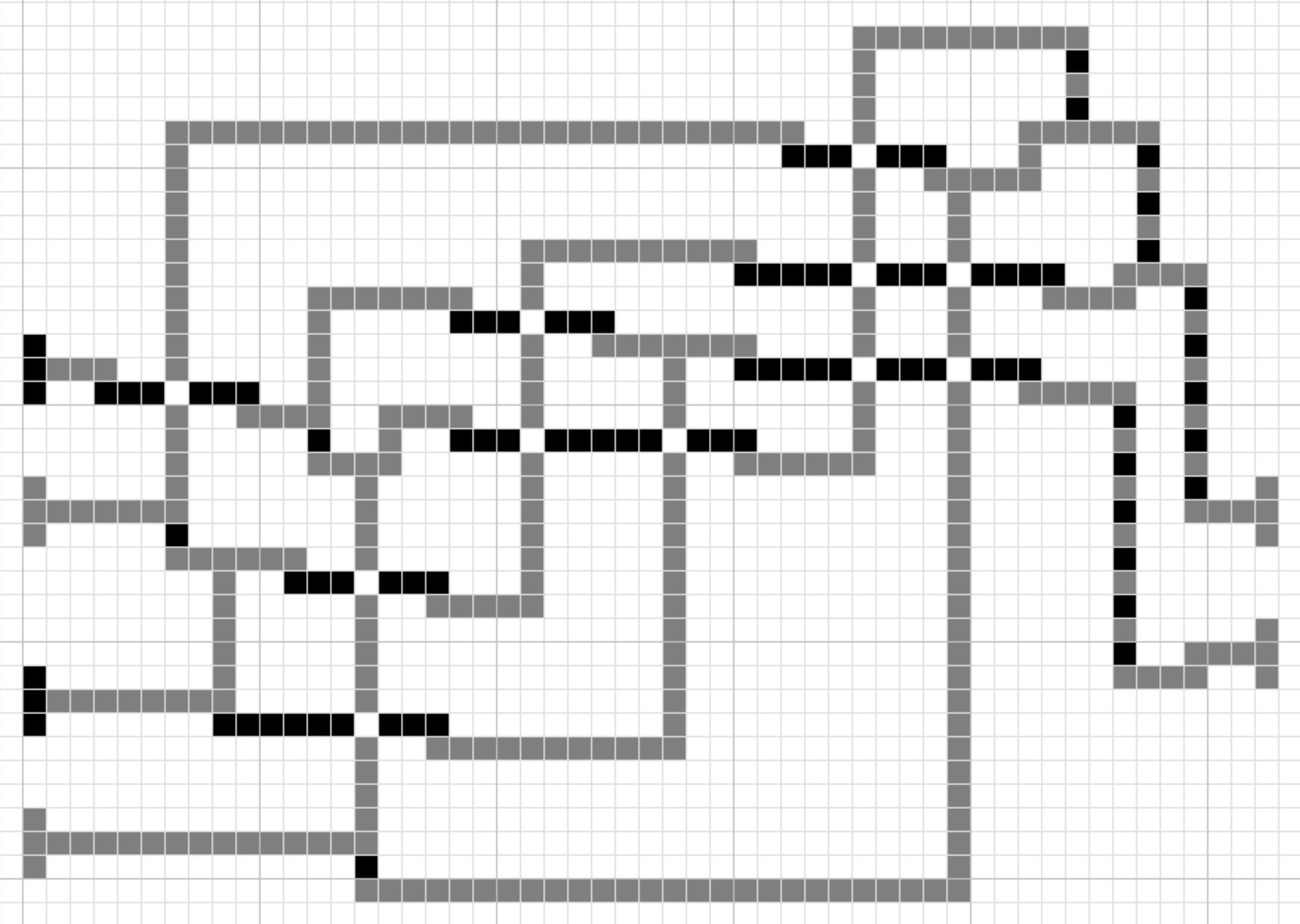}
    \caption{Implementation of \textsf{NAND} gate for the ACA $\mathbb{X}$ using dual-rail encoding. In this case, both inputs are initialized to true (the first and third ``tubes'' on the left are colored black making their right neighbors active).}
    \label{fig:nand}
\end{figure}

\fi

\end{document}